\newtheorem{definition}{Definition}
\newtheorem{theorem}{Theorem}
\newtheorem{example}{Example}
\newtheorem{remark}{Remark}
\newtheorem{lemma}{Lemma}
\newtheorem{corollary}{Corollary}
\begin{document}
\title{Linear Degrees of Freedom of MIMO Broadcast Channels with Reconfigurable Antennas in the Absence of CSIT}

\author{Minho~Yang,~\IEEEmembership{Student~Member,~IEEE,}
        Sang-Woon~Jeon,~\IEEEmembership{Member,~IEEE,}
				and~Dong~Ku~Kim,~\IEEEmembership{Member,~IEEE}
\thanks{M. Yang and D. K. Kim were funded by the Ministry of Science, ICT $\&$ Future Planning (MSIP), Korea in the ICT R$\&$D Program 2014.
S.-W. Jeon was supported in part by the Basic Science Research Program through the National Research Foundation of Korea (NRF) funded by the Ministry of Education, Science and Technology (MEST) [NRF-2013R1A1A1064955].
}
\thanks{M. Yang and D. K. Kim are with the School of Electrical and Electronic Engineering,
Yonsei University, Seoul, South Korea (e-mail:
\{navigations, dkkim\}@yonsei.ac.kr).}
\thanks{S.-W. Jeon is with the Department of Information and Communication Engineering, Andong National University, Andong, South Korea (e-mail:
swjeon@anu.ac.kr).}
}
\maketitle

\IEEEpeerreviewmaketitle

\begin{abstract}
The $K$-user multiple-input and multiple-output (MIMO) broadcast channel (BC) with no channel state information at the transmitter (CSIT) is considered,
where each receiver is assumed to be equipped with reconfigurable antennas capable of choosing a subset of receiving modes from several preset modes.
Under general antenna configurations, the sum linear degrees of freedom (LDoF) of the $K$-user MIMO BC with reconfigurable antennas is completely characterized, which corresponds to the maximum  sum DoF achievable by linear coding strategies.  
The LDoF region is further characterized for a class of antenna configurations.
Similar analysis is extended to the $K$-user MIMO interference channels with reconfigurable antennas and the sum LDoF is characterized for a class of antenna configurations.
\end{abstract}

\begin{IEEEkeywords}
Blind interference alignment, broadcast channels, degrees of freedom (DoF), multiple-input and multiple-output (MIMO), reconfigurable antennas.
\end{IEEEkeywords}

\section{Introduction}

Recently, there have been considerable researches on characterizing the \emph{degrees of freedom} (DoF) of wireless networks.
As current wireless networks become very complicated, exact capacity characterization is so difficult that many researchers have actively studied approximate capacity characterizations in the shape of DoF.
The DoF is the prelog factor of capacity, providing an intuitive metric for the number of interference-free communication channels that wireless networks can attain at the  high signal-to-noise ratio (SNR) regime.
Hence, it is regarded as a primary performance metric for multiantenna and/or multiuser communication systems.
Cadambe and Jafar recently made a remarkable progress on understanding DoF of multiuser wireless networks showing that the sum DoF of the $K$-user interference channel (IC) is given by $K/2$ \cite{Cadambe:08}.
An innovative methodology called \emph{interference alignment} (IA) has been proposed to obtain $K/2$ DoF, which aligns multiple interfering signals into the same signal space at each receiver. 
The concept of such signal space alignment has been successfully adapted to various network environments, e.g., see \cite{Viveck2:09, Viveck1:09,Tiangao:10, Suh:11,Tiangao:12,Jeon4:12,Jeon:14} and the references therein.
More recently, different strategies of IA were further developed in terms of ergodic IA \cite{Nazer11:09,Jeon5:13,Jeon2:11,Jeon2:14} and real IA \cite{Motahari:09,Motahari2:09}.

Note that most of the previous researches including the aforementioned IA techniques have focused on DoF of wireless networks under the assumption that each transmitter perfectly knows global channel state information (CSI).
However, for many practical communication systems, acquiring the exact CSI value at transmitters is very challenging due to channel feedback delay, system overhead, and so on.
Motivated by these practical restrictions, implementing IA under a more relaxed CSI condition has been actively studied in the literature.
Maddah-Ali and Tse made a breakthrough in \cite{Maddah-Ali:12} demonstrating that completely outdated  CSI is still useful to improve DoF of the $K$-user multiple-input and single-output (MISO) broadcast channel (BC).
Preceded by \cite{Maddah-Ali:12}, there have been a series of researches for studying IA techniques exploiting outdated or delayed CSI at transmitters \cite{Vaze:11, Abdoli:11, Vaze:12_dCSIT, Abdoli:11_IC,Abdoli:13}. 
In \cite{Vaze:11, Abdoli:11, Vaze:12_dCSIT}, similar DoF gains were shown in MIMO BC under delayed CSIT and, in particular, the DoF region of the two-user MIMO BC with delayed CSIT was completely characterized in \cite{Vaze:12_dCSIT}.
In the context of IC, it has been first shown in \cite{Maleki:12} that IA can achieve more than one DoF in the three-user SISO IC under delayed CSIT, which is then extended to the $K$-user case  in \cite{Abdoli:11_IC,Abdoli:13}.

Although there is still a practical demand for further relaxing CSI requirements at the transmitter side, it has been proved in \cite{Jafar:05} that the DoF of the $K$-user MISO BC collapses to one for isotropic fading if the transmitter cannot acquire any information about CSI.
In terms of isotropic fading and no CSIT, similar DoF degradation was further shown in MIMO BC and IC \cite{Vaze:12, Huang:12, Zhu:12, Vaze:12_IC}.
On the other hand, IA without CSIT, called \emph{blind IA}, has been recently proposed in \cite{Jafar:12} for a class of heterogeneous block fading models\footnote{Certain users experience smaller coherence time/bandwidth than others (See \cite{Jafar:12} for more details).}  
achieving larger DoF than that achievable for the isotropic fading model.
In addition, it was shown that blind IA obtains similar DoF gain for a class of homogeneous block fading models\footnote{All users experience independent block fading with the same coherence time, but different offsets (See \cite{Zhou:12} for more details).} \cite{Zhou:12, Zhou:12_dio, Zhou:12_Imp}.


In \cite{Gou:10, Gou:11}, Gou, Wang, and Jafar have first proposed a blind IA technique exploiting \emph{reconfigurable antennas}. 
As shown in Fig. \ref{reconfigurableantenna}, reconfigurable antennas are capable of dynamically adjusting their radiation patterns in a controlled and reversible manner through various technologies such as solid state switches or microelectromechanical switches (MEMS), which can be conceptually modeled as antenna selection that each RF-chain of reconfigurable antennas chooses one of receiving mode among several preset modes at each time instant, see also \cite[Section I]{Gou:11} for the concept of reconfigurable antennas.
Based on a remarkable observation that even for time-invariant channels, reconfigurable antenna can artificially create channel matrices correlated across time in some specific structure, 
the authors in \cite{Gou:11} show that the optimal sum DoF of 
the $K$-user $M \times 1$ MISO BC is given by $\frac{MK}{M+K-1}$ when each user is equipped with a reconfigurable antenna whose RF-chain can choose one receiving mode from $M$ preset modes.
Subsequently, in \cite{Wang:10}, the achievability result in \cite{Gou:11} is generalized to the $K$-user $M \times N$ MIMO BC where each user is equipped with a set of reconfigurable antennas whose RF-chains are able to choose $N$ receiving modes from $M$ preset modes, showing that the sum DoF of $\frac{MNK}{M+NK-N}$  is achievable.
The idea of blind IA using reconfigurable antennas is further extended to ICs consisting of receivers with reconfigurable antennas \cite{Wang:11, Wang:11_2, Lu:13, Wang:14, Lu:14}. 

In this paper, we consider the \emph{$K$-user MIMO BC assuming a general reconfigurable antenna environment}. 
In particular, the transmitter is equipped with $M$ antennas and user $k$, $k=1,\cdots,K$, is equipped with a set of reconfigurable antennas whose RF-chains can choose $L_{k}$ receiving modes from $N_{k}$ preset modes ($N_{k}\geq L_k$), which includes the conventional non-reconfigurable antenna model ($N_k=L_k$ for this case). 
We focus on the \emph{linear DoF (LDoF) with no CSIT}, i.e., the maximum DoF achievable by linear coding strategies with no CSIT, see also \cite{Lashgari:13, Lashgari:14, Kao:14} for the definition of LDoF. 
For general antenna configurations, we completely characterize the sum LDoF of the $K$-user MIMO BC with reconfigurable antennas in the absence of CSIT.
We further characterize the LDoF region for a specific class of antenna configurations.
Therefore, the main contributions of this paper are two-folds:
1) we generalize the previous achievability results in \cite{Gou:11, Wang:10} assuming a certain class of antenna configurations to general antenna configurations,
2) we show the converse of our achievable DoF in the LDoF sense, which implies that the achievability result in \cite{Wang:10} is also optimal in the LDoF sense.
Our analysis is further applied to a class of $K$-user MIMO IC with reconfigurable antennas and the sum LDoF is characterized for a class of   antenna configurations, which generalizes the achievable sum DoF result in \cite{Lu:13}.

The rest of this paper is organized as follows. In Section \ref{sec:system_model}, we introduce the $K$-user MIMO BC with reconfigurable antennas. 
In Section \ref{sec:main_results}, we first define the LDoF and state the main result of this paper, the sum LDoF and LDoF region of the $K$-user MIMO BC with reconfigurable antennas. 
We present the converse and achievability of the main results in Section \ref{sec:converse} and \ref{sec:achievability}, respectively and finally conclude in Section \ref{sec:conclusion}.

\section{System Model}
\label{sec:system_model}

\begin{figure}[!t]
\centering
\includegraphics[width=3.5in]{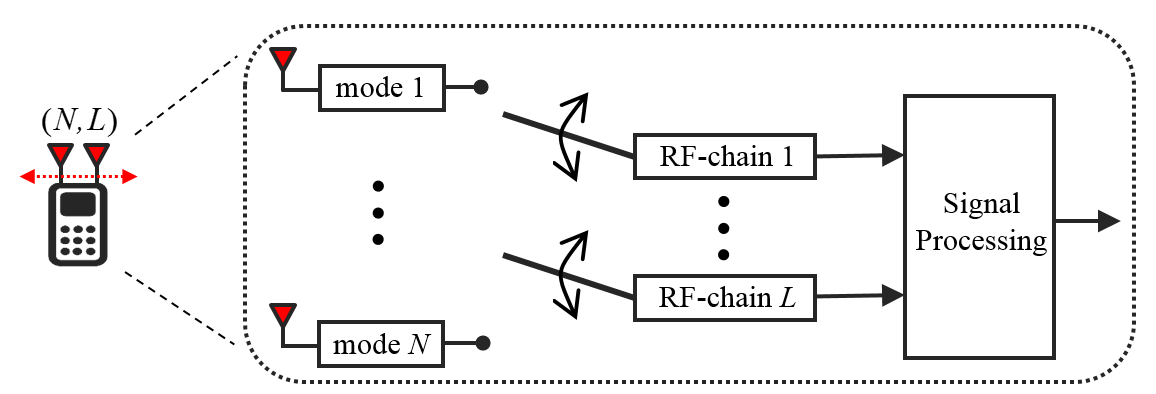}
\caption{Conceptual illustration of reconfigurable antennas, where each of $L$ RF-chains chooses one receiving mode from $N$ preset modes.}
\label{reconfigurableantenna}
\includegraphics[width=3in]{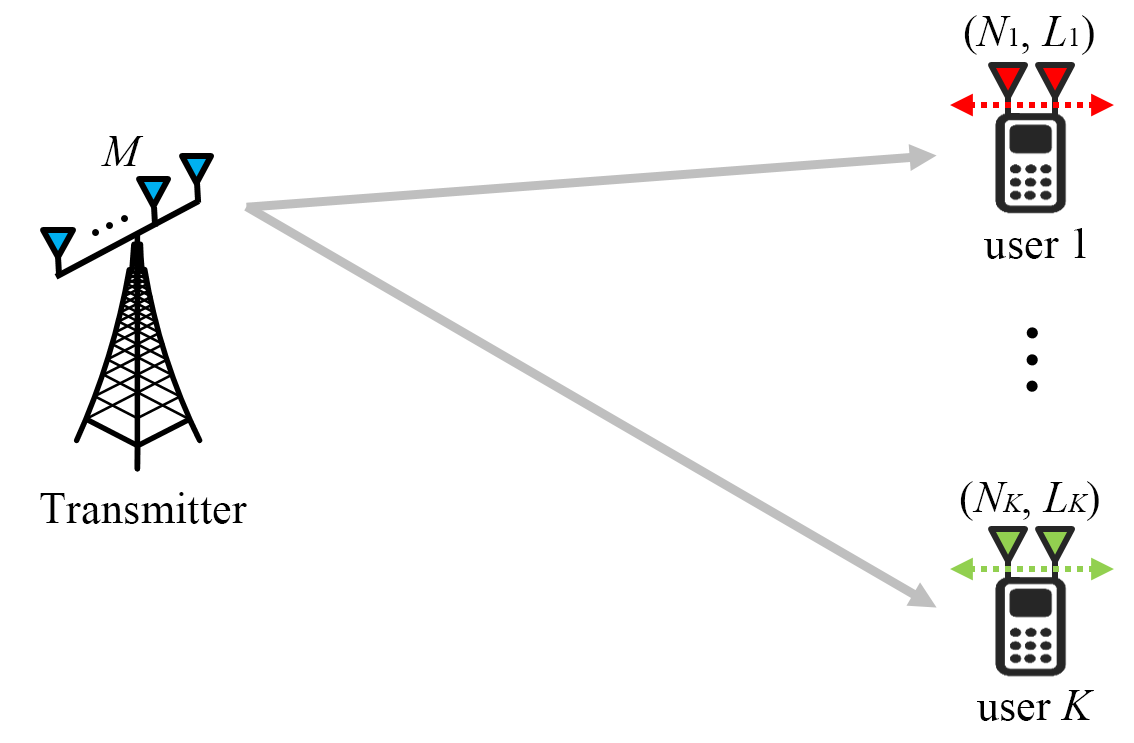}
\centering
\caption{$K$-user MIMO BC with reconfigurable antennas.}
\label{system}
\end{figure}

\subsection{Notation}
For integer values $a$ and $b$, $a\setminus b$ and $a|b$ denote the quotient and the remainder respectively when dividing $a$ by $b$.
For a set $\mathcal{A}$, $|\mathcal{A}|$ is the cardinality of $\mathcal{A}$. 
For a vector space $\mathcal{V}$, $\operatorname{dim}(\mathcal{V})$ is the dimension of $\mathcal{V}$.
For a matrix $\mathbf{A}$, $\mathbf{A}^{T}$, $|\mathbf{A}|$, $\operatorname{rank}(\mathbf{A})$, and  $\mathcal{R}(\mathbf{A})$ are the transpose, determinant, rank, and column space of $\mathbf{A}$ respectively. 
For matrices $\mathbf{A}$ and $\mathbf{B}$, $\mathbf{A}\otimes \mathbf{B}$ is the Kronecker product of $\mathbf{A}$ and $\mathbf{B}$.
For a set of matrices $\{\mathbf{A}_{i}\}_{i=1,\cdots,n}$, $\operatorname{diag}(\mathbf{A}_{1},\cdots,\mathbf{A}_{n})$ denotes the block-diagonal matrix consisting of $\{\mathbf{A}_{i}\}$.
Also $\mathbf{I}_{a}$, $\mathbf{1}_{a\times b}$, and $\mathbf{0}_{a\times b}$ denote the $a \times a$ identity matrix, the $a \times b$ all-one matrix, and the $a \times b$ all-zero matrix respectively and let $\mathbf{0}_{a} = \mathbf{0}_{a \times a}$.


\subsection{$K$-user MIMO BC with Reconfigurable Antennas}
Consider the $K$-user MIMO BC depicted in Fig. \ref{system} in which the transmitter is equipped with $M$ antennas and user $k \in \mathcal{K} = \{1,\cdots,K \}$ is equipped with a set of reconfigurable antennas whose RF-chains are able to choose $L_{k}$ receiving modes from $N_{k}$ preset modes at every time instant, where $N_{k} \geq L_{k}$.
Note that, if $N_{k} = L_{k}$, then user $k$ is equivalent to be equipped with $L_{k}$ conventional (non-reconfigurable) antennas.

The received signal vector of user $k$ at time $t$ is given by
\begin{align} \label{eq:in_out}
\mathbf{y}_{k}(t) = \boldsymbol{\Gamma}_{k}(t) \mathbf{H}_{k}(t) \mathbf{x}(t) + \mathbf{z}_{k}(t) 
\end{align}
where $\mathbf{H}_{k}(t) \in \mathbb{C}^{N_{k}\times M}$ is the channel matrix from the transmitter to $N_k$ preset modes of user $k$ at time $t$, $\mathbf{x}(t) \in \mathbb{C}^{M} $ is the transmit signal vector at time $t$, $\mathbf{z}_{k}(t) \in \mathbb{C}^{L_{k}}$ is the additive noise vector of user $k$ at time $t$, and $\boldsymbol{\Gamma}_{k}(t) \in \{0,1\}^{L_{k} \times N_{k}}$ is the selection matrix of user $k$ at time $t$.
In particular, each row vector of $\boldsymbol{\Gamma}_{k}(t)$ consists of zero values except for a single element of one value and is different from each other.
That is, $\mathbf{\Gamma}_k(t)$ extracts $L_{k}$ elements out of the $N_{k}$ elements in $\mathbf{H}_{k}(t)\mathbf{x}(t)$ 
and if user $k$ is equipped with conventional antennas, i.e., $L_{k} = N_{k}$, then $\boldsymbol{\Gamma}_{k}(t) = \mathbf{I}_{N_{k}}$ so that $\boldsymbol{\Gamma}_{k}(t)$ can be omitted in \eqref{eq:in_out}.
The transmitter should satisfy the average power constraint $P$, i.e., $\lim_{n \rightarrow \infty} \frac{1}{n}\sum_{t=1}^{n} \|\mathbf{x}(t)\|^{2}\leq P$, where $\|\cdot\|$ denote the norm of a vector. The elements of $\mathbf{z}_{k}(t)$ are independent and identically distributed  (i.i.d.) drawn from $\mathcal{CN}(0,1)$.

We assume that channel coefficients are i.i.d. drawn from a continuous distribution and remain constant across time, i.e., $\mathbf{H}_{k}(t) = \mathbf{H}_{k}$ for all $t \in \mathbb{N}$.
Global channel state information (CSI) is assumed to be available only at the users, but not at the transmitter. i.e., no CSIT.
Furthermore, we assume that each user selects its receiving modes in a predetermined pattern independent of channel realization, which are revealed to the transmitter.
That is, $\boldsymbol{\Gamma}_{k}(t)$ is not a function of $\{\mathbf{H}_j\}_{j\in \mathcal{K}}$ for all $k\in\mathcal{K}$ and $t\in\mathbb{N}$.

For notational convenience, from \eqref{eq:in_out}, we define the $n$ time-extended input--output relation as
\begin{align} \label{eq:in_out_extended}
\mathbf{y}^n_{k} = \boldsymbol{\Gamma}^n_{k} \mathbf{H}^n_{k} \mathbf{x}^n + \mathbf{z}^n_{k} 
\end{align}
where
\begin{align*}
\mathbf{\Gamma}^n_k&=\operatorname{diag}\left(\mathbf{\Gamma}_k(1),\cdots,\mathbf{\Gamma}_k(n)\right),\\
\mathbf{H}^n_k&= \mathbf{I}_{n} \otimes \mathbf{H}_k, \\ 
\mathbf{y}^n_{k}&=\left[\mathbf{y}^T_{k}(1)\cdots\mathbf{y}^T_{k}(n)\right]^T,\\
\mathbf{x}^n_{k}&=\left[\mathbf{x}^T_{k}(1)\cdots\mathbf{x}^T_{k}(n)\right]^T,\\
\mathbf{z}^n_{k}&=\left[\mathbf{z}^T_{k}(1)\cdots\mathbf{z}^T_{k}(n)\right]^T.
\end{align*}

\section{Linear Degrees of Freedom and Main Results}\label{sec:main_results}

\subsection{Linear Degrees of Freedom}
In this paper, we confine the transmitter to use linear precoding techniques, 
in which DoF represents the dimension of the linear subspace of transmitted signals \cite{Lashgari:14}.
Consider a linear precoding scheme with block length $n$, in which the transmitter sends the information symbols  of user $k$, denoted by $\mathbf{s}_{k} \in \mathbb{C}^{m_{k}(n)}$, through the $n$ time-extended beamforming matrix $\mathbf{V}^n_{k} \in \mathbb{C}^{n M \times m_{k}(n)}$.
Hence, the $n$ time-extended transmit signal vector is given by
\[\mathbf{x}^{n} = \sum_{j=1}^K\mathbf{V}_{j}^{n}\mathbf{s}_{j}\]
and, from \eqref{eq:in_out_extended}, the $n$ time-extended received signal vector of user $k$ is given by
\begin{align}
\mathbf{y}_{k}^{n} = \sum\limits_{j=1}^{K} \boldsymbol{\Gamma}_{k}^{n}\mathbf{H}_{k}^{n}\mathbf{V}_{j}^{n}\mathbf{s}_{j} + \mathbf{z}_{k}^{n}. \notag
\end{align}

Based on such a linear precoding scheme, we define the linear degrees of freedom as the follow,  see also \cite{Lashgari:14} for more details.
\begin{definition} \label{def:LDoF}
The linear degrees of freedom (LDoF) of $K$-tuple ($d_{1}, \cdots, d_{K}$) is said to be achievable if there exist a set of beamforming matrices $\mathbf{V}_{j}^{n}$ and selection matrices $\boldsymbol{\Gamma}_{j}^{n}$ for $j=1,\cdots,K$ almost surely satisfying 
\begin{gather}
\dim\left ( \operatorname{Proj}_{\mathcal{I}_{j}^{c}}\mathcal{R}(\boldsymbol{\Gamma}_{j}^{n}\mathbf{H}_{j}^{n}\mathbf{V}_{j}^{n}) \right ) = m_{j}(n), \notag \\
d_{j} = \underset{n \rightarrow \infty}{\lim} \frac{m_{j}(n)}{n} \notag
\end{gather}
where $\mathcal{I}_{j} = \mathcal{R}( \boldsymbol{\Gamma}_{j}^{n}\mathbf{H}_{j}^{n}[\mathbf{V}_{1}^{n}\cdots\mathbf{V}_{j-1}^{n}\mathbf{V}_{j+1}^{n}\cdots\mathbf{V}_{K}^{n}] )$ and
$\operatorname{Proj}_{\mathcal{A}^{c}}\mathcal{B}$ denotes the vector space induced by projecting the vector space $\mathcal{B}$ onto the orthogonal complement of the vector space $\mathcal{A}$. 
\end{definition}

The LDoF region $\mathcal{D}$ is the closure of the set of all achievable LDoF tuples satisfying  Definition \ref{def:LDoF} and  the sum LDoF is then given by
\begin{align}
d_{\Sigma} = \max_{(d_{1},\cdots,d_{K}) \in \mathcal{D}} \left\{\sum_{k=1}^{K}d_{k} \right\} \notag .
\end{align}

\subsection{Main Results}
For convenience of representation, the following parameters are defined.
\begin{align}
L_{\max} &= \max_{k\in\mathcal{K}}\{L_{k}\}, \notag \\
T_{k} & = \min(M,N_{k}) \mbox{ for } k \in \mathcal{K}, \notag  \\
\Lambda &= \{ k \in \mathcal{K}  :  T_{k} > L_{\max} \}, \notag \\
\eta & =\frac{\sum\limits_{i \in \Lambda}\frac{T_{i}L_{i}}{T_{i}-L_{i}}}{1+\sum\limits_{i\in\Lambda}\frac{L_{i}}{T_{i}-L_{i}}}. \label{Lambda}
\end{align}

In the following, we completely characterize the sum LDoF of the $K$-user MIMO BC with reconfigurable antennas. 

\begin{theorem} 
\label{main_thm} 
For the K-user MIMO BC with reconfigurable antennas defined in Section \ref{sec:system_model}, the sum LDoF is given by
\begin{align}
d_{\Sigma} = \min ( M,\max(L_{\max}, \eta )  ). \label{eq:sumLDoF}
\end{align}
\end{theorem}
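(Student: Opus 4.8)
The plan is to prove Theorem~\ref{main_thm} by establishing the two matching bounds $d_{\Sigma}\ge\min(M,\max(L_{\max},\eta))$ (achievability, Section~\ref{sec:achievability}) and $d_{\Sigma}\le\min(M,\max(L_{\max},\eta))$ (converse, Section~\ref{sec:converse}) separately.

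For achievability, one easy ingredient is that dedicating the whole channel to the single user $k$ with the largest $L_{k}$ and sending $\min(M,L_{k})$ streams to it attains $\min(M,L_{\max})$ sum LDoF with no alignment needed, so $d_{\Sigma}\ge\min(M,L_{\max})$. The substantive ingredient is to attain $\min(M,\eta)$ by blind interference alignment, exploiting the fact that a reconfigurable antenna turns a time-invariant channel into one whose effective matrices are correlated across time in a controlled way; this generalizes \cite{Gou:11,Wang:10}. First I would restrict to the users $i\in\Lambda$ (for which $T_{i}>L_{\max}\ge L_{i}$, so $T_{i}-L_{i}>0$ and alignment is worthwhile), fix a long supersymbol whose length is a common multiple of the $T_{i}-L_{i}$, and design for each $i\in\Lambda$ a repeated-block beamformer together with a periodic switching matrix $\boldsymbol{\Gamma}_{i}^{n}$ so that (i) in its ``alignment'' slots user $i$ observes $T_{i}$ fresh combinations of $\mathbf{H}_{i}$ that carry its own symbols at full rank, while (ii) the block repetition, seen through another user's switching pattern, forces that user's interference to fold onto a strictly smaller subspace. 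Counting the stream numbers $m_{i}(n)$ against the $nM$ transmit dimensions and the $nL_{i}$ receive dimensions of user $i$ then shows that the allocation $d_{i}=L_{i}(T_{i}-\eta)/(T_{i}-L_{i})$ is feasible with $\sum_{i\in\Lambda}d_{i}=\eta$; taking the better of this scheme and the single-user scheme, and noting that nothing can use more than $M$ transmit dimensions, yields $d_{\Sigma}\ge\min(M,\max(L_{\max},\eta))$.

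For the converse, $d_{\Sigma}\le M$ is immediate since $\sum_{j}\mathcal{R}(\mathbf{V}_{j}^{n})$ sits inside the $nM$-dimensional transmit space. The real work is $d_{\Sigma}\le\max(L_{\max},\eta)$, which I would carry out by a subspace-counting argument built on Definition~\ref{def:LDoF}. Fix a linear scheme and let $\mathcal{A}$ denote the set of users actually carrying streams. For each $i\in\mathcal{A}$ the decodability condition forces the $m_{i}(n)$-dimensional desired subspace to be disjoint from $\mathcal{I}_{i}$ inside the received space $\mathcal{R}(\boldsymbol{\Gamma}_{i}^{n}\mathbf{H}_{i}^{n})$, whose dimension is at most $n\min(L_{i},T_{i})$, giving $m_{i}(n)+\dim(\mathcal{I}_{i})\le n\min(L_{i},T_{i})$. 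The heart of the matter is a converse alignment lemma stating that, because there is no CSIT and $\mathbf{H}_{i}$ is generic while $\boldsymbol{\Gamma}_{i}^{n}\mathbf{H}_{i}^{n}$ is block diagonal with each block a row-selection of the fixed matrix $\mathbf{H}_{i}$, the interference at user $i$ cannot be suppressed below a fraction, governed by $L_{i}/T_{i}$, of the other users' stream dimensions, i.e.\ a bound of the form $\dim(\mathcal{I}_{i})\gtrsim\frac{L_{i}}{T_{i}}\sum_{j\in\mathcal{A},\,j\ne i}m_{j}(n)$ up to $o(n)$ terms, valid for the best admissible switching pattern. Substituting this into the receive-dimension bound gives, for each active user $i$ with $T_{i}>L_{i}$, the per-user inequality $d_{i}\le L_{i}(T_{i}-d_{\Sigma})/(T_{i}-L_{i})$; summing these over the active users telescopes into a bound of the form $d_{\Sigma}\le\eta$, while the active users with $T_{k}\le L_{\max}$ are shown to be dominated by a single $L_{\max}$ term. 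Reconciling the worst-case active set with the definitions of $\Lambda$ and $\eta$ then closes the argument as $d_{\Sigma}\le\max(L_{\max},\eta)$.

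The main obstacle I expect is the converse alignment lemma, and within it the fact that the reconfigurable antennas hand the receivers extra freedom through $\boldsymbol{\Gamma}_{i}^{n}$ that could in principle be exploited to over-align interference; the bound must therefore hold simultaneously over all admissible beamformers \emph{and} all admissible switching patterns. Keeping the $o(n)$ corrections under control, carrying out the bookkeeping of which rows of each $\mathbf{H}_{i}$ are reused across the supersymbol, and showing that the telescoping sum is tight enough to match the achievable $\eta$ exactly (rather than a weaker bound) are the delicate points; I would expect an induction on $|\mathcal{A}|$ together with a rank argument using genericity of the channels to be the right machinery.
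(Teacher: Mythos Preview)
Your achievability sketch is in the right spirit and mirrors the paper's Section~\ref{achievability_thm1}: serve only the users in $\Lambda$ with a blind IA construction and otherwise fall back on single-user transmission. (A minor quibble: the per-user rates that fall out of the blind IA construction are $d_{i}=\frac{T_{i}L_{i}/(T_{i}-L_{i})}{1+\sum_{p\in\Lambda}L_{p}/(T_{p}-L_{p})}$, not your $d_{i}=L_{i}(T_{i}-\eta)/(T_{i}-L_{i})$; both sum to $\eta$, but the latter is not in general a vertex of the region and would need time-sharing.)

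The converse, however, has a genuine gap: your ``converse alignment lemma'' is stated with the wrong ratio and is in fact \emph{false}. You claim $\dim(\mathcal{I}_{i})\gtrsim\frac{L_{i}}{T_{i}}\sum_{j\neq i}m_{j}(n)$, equivalently $\frac{d_{i}}{L_{i}}+\frac{1}{T_{i}}\sum_{j\neq i}d_{j}\le 1$. Take $K=2$, $M\ge 100$, $(L_{1},T_{1})=(1,100)$, $(L_{2},T_{2})=(1,2)$. The paper's achievable scheme attains $(d_{1},d_{2})=(100/199,\,198/199)$, but your inequality at $i=2$ reads $d_{2}+d_{1}/2=198/199+50/199=248/199>1$, so the achievable point lies \emph{outside} your claimed outer region. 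The reason is structural: once Lemma~\ref{lemma_sel} removes the switching at user $i$, the extent to which user $j$'s streams can be aligned at user $i$ is governed by user $j$'s effective rank $T_{j}$, not by user $i$'s $T_{i}$. The correct per-user inequality (the paper's \eqref{dof_result2}) is $\frac{d_{k}}{L_{k}}+\sum_{j\in\Lambda\setminus\{k\}}\frac{d_{j}}{T_{j}}+\sum_{j\notin\Lambda}\frac{d_{j}}{L_{\max}}\le 1$, with $T_{j}$ in the denominators.

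Getting there requires more than a single ``interference cannot be over-aligned'' lemma. The paper first passes to a genie-aided extended model in which only the distinguished user $k$ retains reconfigurable antennas while every other user $j$ is upgraded to $T_{j}$ (or $L_{\max}$) conventional antennas; this is what makes $T_{j}$ appear. It then uses Lemma~\ref{lemma_sel} to drop the switching at user $k$, and crucially a \emph{chain} inequality (Lemma~\ref{lemma_equi}) that steps through the users in increasing order of $\Delta_{i}$, combining submodularity of rank with the statistical equivalence of generic row-submatrices (Lemma~\ref{lemma_stat}). Only after telescoping this chain does one obtain $\frac{d_{k}}{L_{k}}+\sum_{j\neq k}\frac{d_{j}}{\Delta_{j}}\le 1$, and the LP over the $|\Lambda|$ such inequalities (Lemma~\ref{lemma_upper}) then yields $\max(L_{\max},\eta)$. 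Your plan to ``sum per-user inequalities'' does lead to $\eta$ once you have the right inequalities, but the single-user lemma you wrote down does not hold, and the antenna-enhancement plus chain argument is precisely the missing idea.
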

\begin{proof}
We refer to Section \ref{subsec:converse1} for the converse proof and Section \ref{achievability_thm1} for the achievability proof.  
\end{proof}

\begin{remark}
From Theorem \ref{main_thm}, $N_k$ greater than $M$ cannot further increase $d_{\Sigma}$. Therefore, the number of preset modes $N_k$ for maximizing $d_{\Sigma}$ is enough to set $N_k=M$ for $k\in\mathcal{K}$. Note that this remark is valid only in MIMO BC with reconfigurable antennas and it is shown in \cite{Wang:14} that the number of preset modes greater than that of transmit antennas can increase sum DoF in MIMO IC with reconfigurable antennas.
\end{remark}

\begin{figure}[!t]
\includegraphics[width=3.5in]{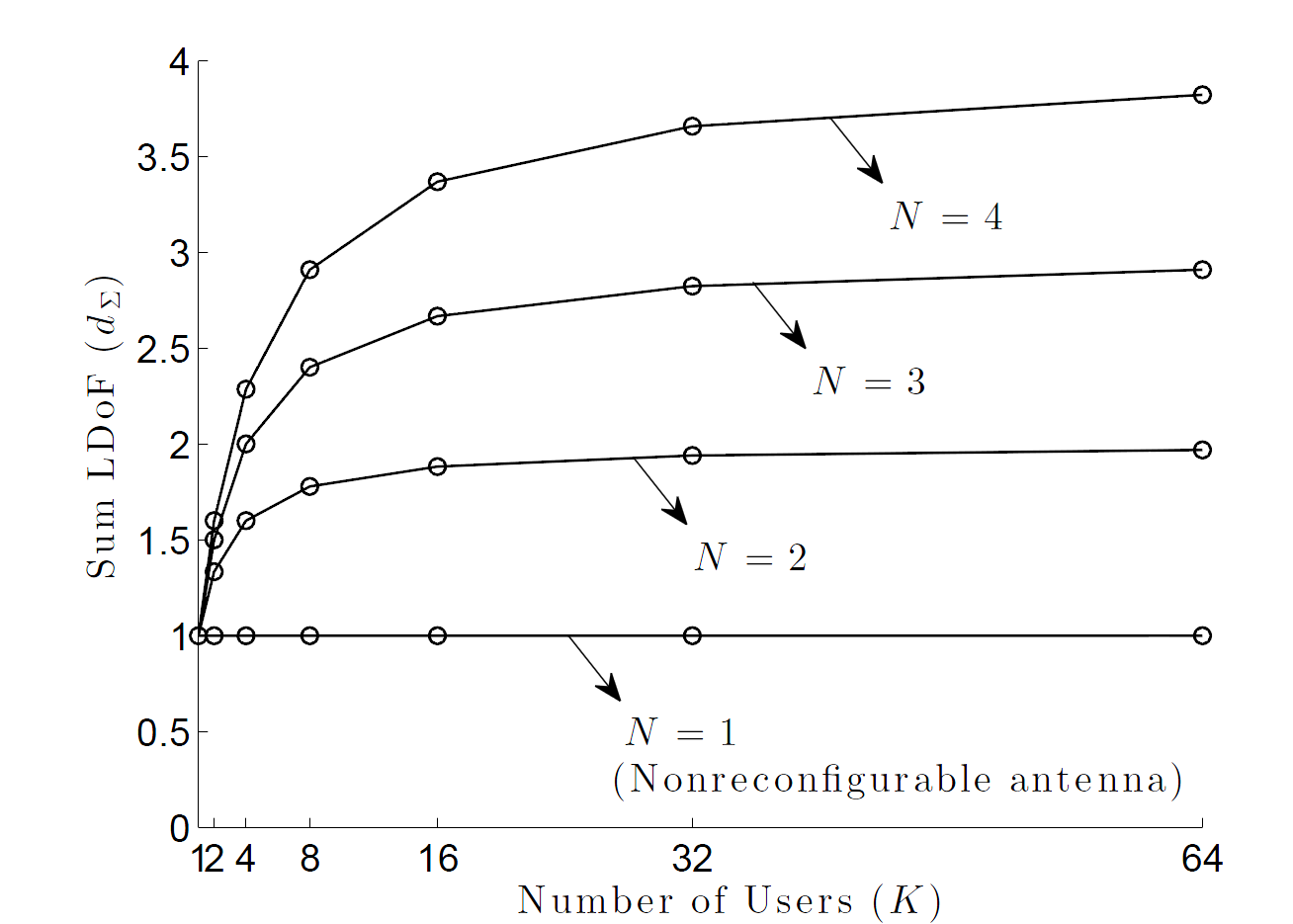}
\centering
\caption{Sum LDoF $d_{\Sigma}$ with respect to $K$ when $M=4$ and $L=1$.}
\label{ex1}
\end{figure}

\begin{example}Consider the symmetric $K$-user MIMO BC with reconfigurable antennas in Section \ref{sec:system_model} in which $N_k=N$ and $L_k=L$ for all $k\in\mathcal{K}$. 
For this case, 
\begin{align}
d_{\Sigma}=\min\left(M,\max\left(L,\frac{KL\min(M,N)}{KL+\min(M,N)-L}\right)\right)
\end{align}
from Theorem \ref{main_thm}.
To figure out the impact of reconfigurable antennas, let us focus on the limiting case where $K$ tends to infinity.
Then 
\begin{align}
\lim_{K\to\infty} d_{\Sigma}=\min(M,\max(L, \min(M,N)))=\min(M,N)
\end{align}
regardless of $L$.
Note that $d_{\Sigma}=\min(M,L)$ for the symmetric $K$-user MIMO BC without reconfigurable antennas, which corresponds to the case where $N=L$.
Therefore, reconfigurable antennas can significantly improve the sum LDoF as both $M$ and $N$ increase. Figure \ref{ex1} plots $d_{\Sigma}$ with respect to $K$ when $M=4$ and $L=1$. As the number of preset modes $N$ increases, the DoF gain from reconfigurable antennas increases compared to the conventional (nonreconfigurable) antenna model, i.e., $N=L$.  
\end{example}

We further derive the LDoF region $\mathcal{D}$ for a class of antenna configurations in the following theorem.
\begin{theorem}
\label{main_thm2}
Consider the $K$-user MIMO BC with reconfigurable antennas defined in Section \ref{sec:system_model}.
If $\ M > L_{\max}$ and $N_{k} > L_{\max}$ for all $ k \in \mathcal{K}$, then the LDoF region $\mathcal{D}$ consists of all $K$-tuples $(d_{1},\cdots, d_{K})$ satisfying
\begin{align}
\label{eq:ineq1}
\frac{d_{k}}{L_{k}}  + \sum\limits_{j =1, j\neq  k}^{K} \frac{d_{j}}{T_{j}}  \leq 1
\end{align}
for all $k \in \mathcal{K}$.
\end{theorem}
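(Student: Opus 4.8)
The plan is to establish both the achievability (the region in \eqref{eq:ineq1} is achievable) and the converse (no tuple outside it is achievable). For the \emph{converse}, I would fix an arbitrary user $k\in\mathcal{K}$ and argue that the $K$ inequalities of type \eqref{eq:ineq1} are implied by a genie-aided / dimension-counting argument in the linear framework of Definition \ref{def:LDoF}. The natural approach is: give user $k$ the ability to decode everyone's messages (or, more carefully in the LDoF setting, track the dimension of the union of all received beamforming subspaces at a single receiver's $n$ preset-mode observation space) and compare it against the available signal dimension $n\min(M,N_k)=nT_k$ seen by user $k$. The key structural facts I would invoke are (i) because the channels are time-invariant and generic and there is no CSIT, the beamforming matrix $\mathbf{V}_j^n$ is chosen blindly, so $\mathcal{R}(\mathbf{H}_k^n\mathbf{V}_j^n)$ behaves ``generically'' with respect to the (blind, channel-independent) selection pattern $\boldsymbol{\Gamma}_k^n$; (ii) decodability of $\mathbf{s}_j$ at user $j$ forces $\mathcal{R}(\boldsymbol{\Gamma}_j^n\mathbf{H}_j^n\mathbf{V}_j^n)$ to have dimension $m_j(n)$ transverse to the interference, which gives the per-user scaling $d_j/T_j$ for a generic receiver and the stronger $d_k/L_k$ for the observer itself since user $k$ only harvests $L_k$ of its $N_k$ modes per slot. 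Summing these contributions inside the $nT_k$-dimensional (or $nL_k$-dimensional, appropriately) space of user $k$ yields \eqref{eq:ineq1}. I expect this counting step — correctly accounting for the interaction between the blind selection matrices $\boldsymbol{\Gamma}_j^n$ and the subspaces, and showing the bound holds \emph{almost surely} rather than merely in expectation — to be the main obstacle; one likely needs the hypothesis $M>L_{\max}$, $N_k>L_{\max}$ precisely to guarantee these generic-position arguments go through without degenerate rank collapses.

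For the \emph{achievability}, I would exhibit a linear blind-IA scheme attaining every corner point of the polytope defined by \eqref{eq:ineq1} and then invoke time-sharing (which is legitimate since the LDoF region is defined as a closure and linear schemes can be interleaved over blocks). The corner points of this polytope are naturally indexed: the symmetric-type point where all users are active, and the points where a subset of users is silenced. For a point where user $k$ gets the ``boosted'' rate $d_k\propto L_k$ and the others get $d_j\propto T_j$, the scheme should be a direct generalization of the Gou--Wang--Jafar / Wang construction already referenced in the paper (achieving $\frac{MNK}{M+NK-N}$ in the symmetric case): design a blind-IA block in which user $k$'s reconfigurable antenna switches among its $L_k$ active modes and the alignment slots are arranged so that, over a suitably chosen supersymbol length, user $k$ sees its desired symbols in $L_k$-dimensional pieces interference-free while every other user $j$ sees theirs in $T_j$-dimensional pieces. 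I would reuse the achievability machinery of Section \ref{sec:achievability} (cited for Theorem \ref{main_thm}) essentially verbatim, noting that the sum-LDoF-optimal scheme there, when specialized and combined with single-user-type schemes via time-sharing, already traces out the full region under the hypotheses $M>L_{\max}$, $N_k>L_{\max}$.

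Finally, I would close the argument by checking consistency with Theorem \ref{main_thm}: maximizing $\sum_k d_k$ over the polytope \eqref{eq:ineq1} must return $d_\Sigma=\min(M,\max(L_{\max},\eta))$. Under the standing hypotheses every $T_k=\min(M,N_k)>L_{\max}$, so $\Lambda=\mathcal{K}$ and $\eta$ in \eqref{Lambda} is exactly the value of the linear program $\max\sum d_k$ subject to the $K$ constraints $\frac{d_k}{L_k}+\sum_{j\neq k}\frac{d_j}{T_j}\le 1$; verifying this is a routine Lagrangian/vertex computation that I would relegate to a short remark. The genuinely new content — and the hard part — is the converse: showing that the single family \eqref{eq:ineq1}, one inequality per user, is \emph{tight} as the complete description of $\mathcal{D}$, which requires the dimension-counting at a single receiver to be simultaneously valid for all $k$ and to admit no additional binding constraints.
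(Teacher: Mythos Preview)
Your achievability outline matches the paper's almost exactly: Section \ref{achievability_thm2} characterizes the vertices of the polytope via Lemma \ref{active} (at a vertex, exactly $K$ of the constraints $\{I_{1k},I_{2k}\}_{k\in\mathcal{K}}$ are active and $I_{1k},I_{2k}$ are never simultaneously active, so each vertex is indexed by a subset $\Lambda_1\subseteq\mathcal{K}$ of supported users with $d_j=0$ for $j\notin\Lambda_1$), solves the resulting linear system to get \eqref{eq_dof}, and observes that this is precisely the point achieved by the blind-IA scheme of Section \ref{achievability_thm1} (Case 3-2) restricted to the users in $\Lambda_1$. Time-sharing then fills in the polytope. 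Your description of the corners and the scheme is correct in substance.

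The converse is where your proposal has a genuine gap. The paper does \emph{not} run a genie argument of the form ``let user $k$ decode all messages and count dimensions against $nT_k$.'' Instead, inequality \eqref{eq:ineq1} for a fixed $k$ is obtained by (i) passing to an \emph{extended} BC in which only user $k$ retains reconfigurable antennas while every other user $j\in\Lambda$ is upgraded to $T_j$ conventional antennas (Fig.~\ref{system_extended}); (ii) applying Lemma \ref{lemma_sel}, which shows that any selection pattern $\boldsymbol{\Gamma}_k^n$ can only \emph{increase} the interference rank at user $k$ compared to a fixed $L_k$-row block; and (iii) applying Lemma \ref{lemma_equi}, a chain of rank-ratio inequalities whose proof rests on \emph{submodularity of rank} (Lemma \ref{lemma_submod}) together with statistical equivalence of generic channel rows (Lemma \ref{lemma_stat}). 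Step (iii) is exactly what manufactures the coefficients $1/T_j$ for $j\neq k$ versus $1/L_k$ for user $k$; your heuristic ``decodability at user $j$ forces a $d_j/T_j$ cost in user $k$'s space'' is the \emph{conclusion} of Lemma \ref{lemma_equi}, but a bare dimension count at a single receiver cannot by itself distinguish $T_j$ from $L_j$ for other users without the submodularity chain. The converse of Theorem \ref{main_thm2} in the paper is literally the specialization of \eqref{dof_result2} to $\Lambda=\mathcal{K}$, so the missing ingredient in your plan is precisely the machinery of Lemmas \ref{lemma_sel}--\ref{lemma_equi}.
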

\begin{proof}
We refer to Section \ref{subsec:converse2} for the converse proof and Section \ref{achievability_thm2} for the achievability proof.  
\end{proof}

\begin{figure}[!t]
\includegraphics[width=3in]{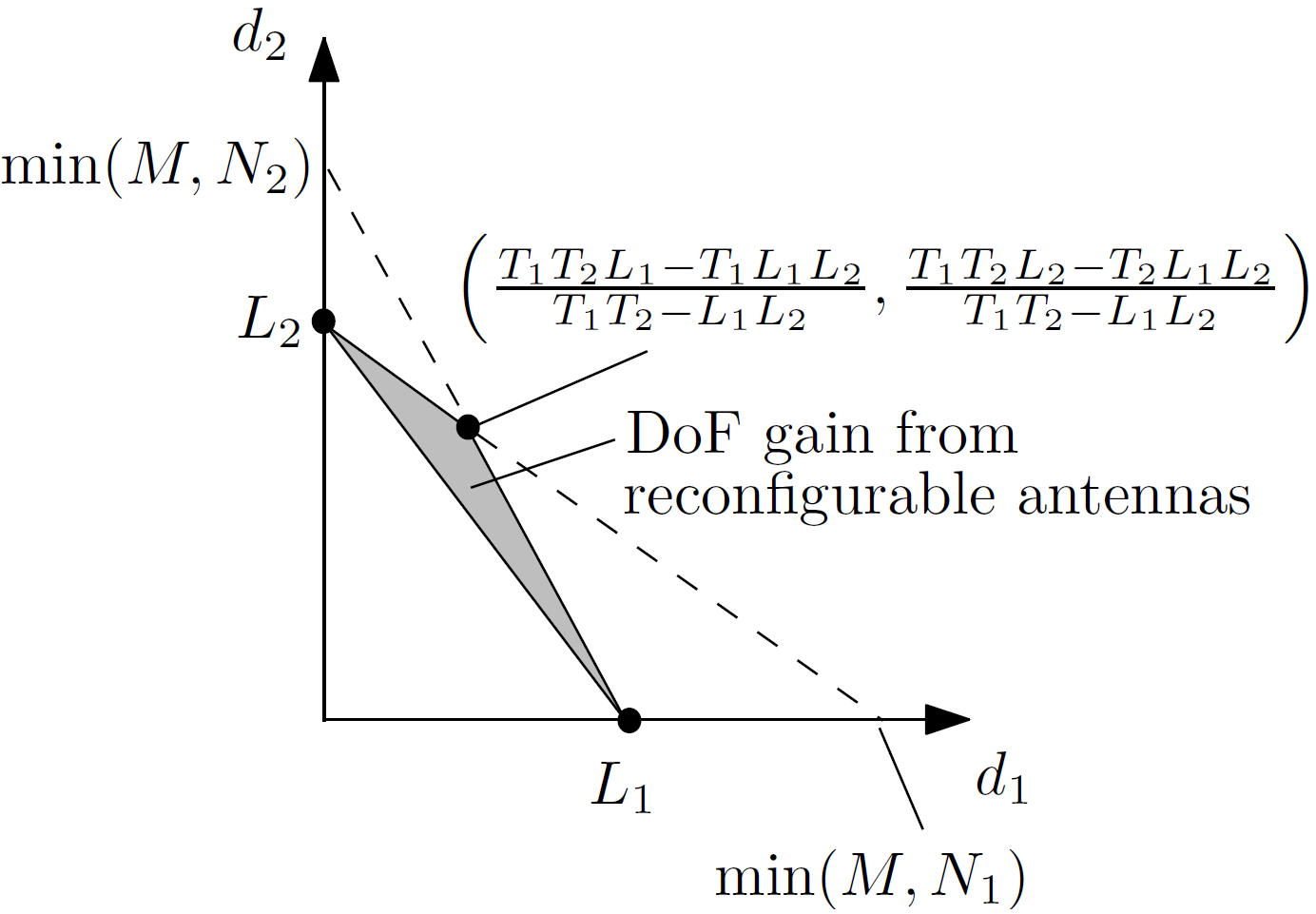}
\centering
\caption{LDoF region $\mathcal{D}$ for the $2$-user MIMO BC with reconfigurable antennas, where $M,N_1,N_2>\max(L_1,L_2)$.}
\label{dof_region}
\end{figure}

\begin{figure}[!t]
\includegraphics[width=4in]{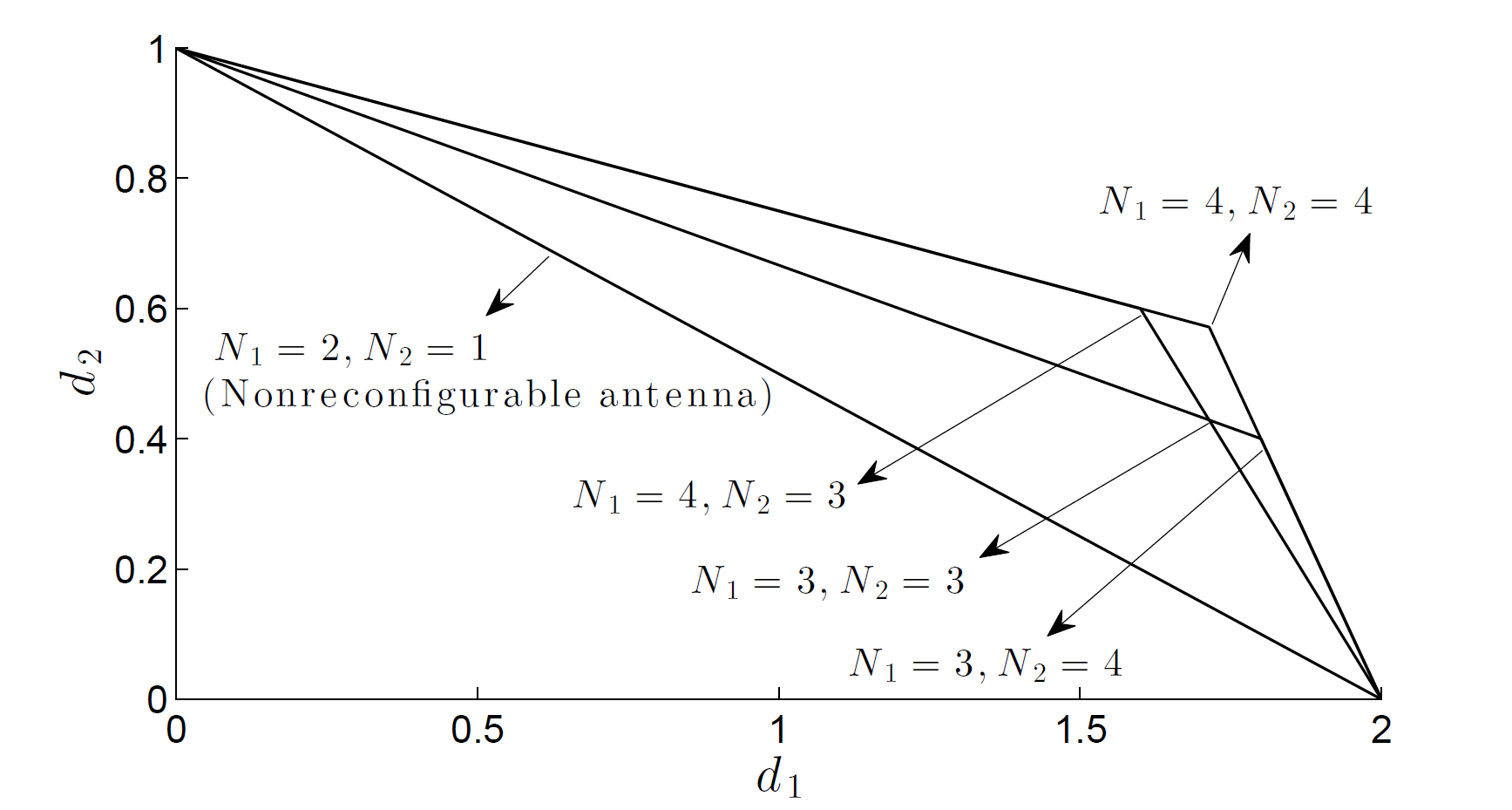}
\centering
\caption{LDoF region $\mathcal{D}$ when $K=2$, $M=4$, $L_1=2$, and $L_2=1$.}
\label{ex2}
\end{figure}

\begin{example}
Consider the $2$-user MIMO BC with reconfigurable antennas in Section \ref{sec:system_model} in which $M,N_1,N_2>\max(L_1,L_2)$.
From Theorem \ref{main_thm2}, the LDoF region $\mathcal{D}$ is then given as in Fig. \ref{dof_region}. For the conventional (nonreconfigurable) antenna model, where $N_1=L_1$ and $N_2=L_2$, $\mathcal{D}$ is given by the time-sharing region between $(L_1,0)$ and $(0,L_2)$.
Hence $\mathcal{D}$ enlarges as $N_1$ and $N_2$ increase, which demonstrate the benefit of reconfigurable antennas. Figure \ref{ex2} plots $\mathcal{D}$ when $K=2$, $M=4$, $L_1=2$, and $L_2=1$.
\end{example}

From Theorem \ref{main_thm}, the sum LDoF is derived for a class of the $K$-user MIMO IC with reconfigurable antennas in the following.
We omit the formal definition of LDoF for the $K$-user MIMO IC with reconfigurable antennas, which can be straightforwardly defined in the same manner as in Definition \ref{def:LDoF}.

\begin{figure}[!t]
\includegraphics[width=3in]{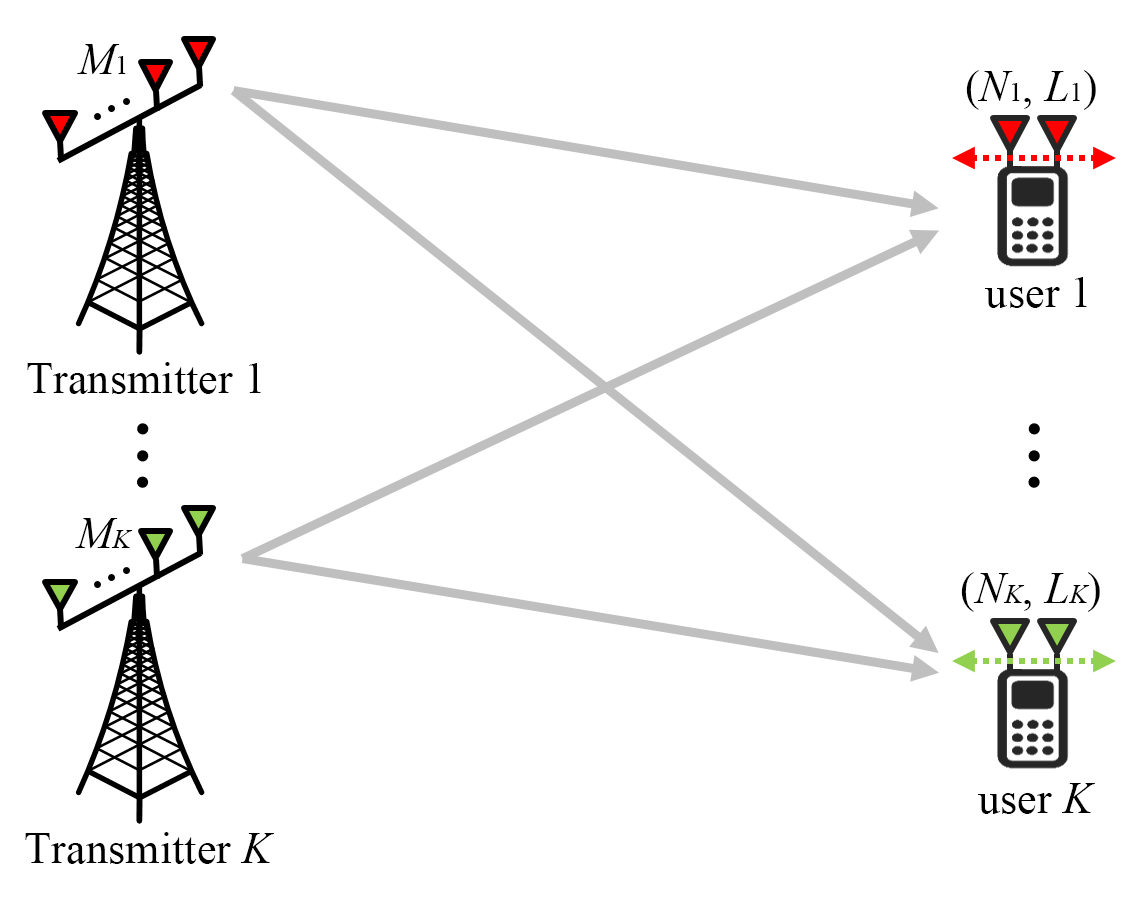}
\centering
\caption{$K$-user MIMO IC with reconfigurable antennas.}
\label{system_IC}
\end{figure}


\begin{corollary} \label{main_thm3}
Consider the $K$-user MIMO IC with reconfigurable antennas depicted in Fig. \ref{system_IC} in which transmitter $k \in \mathcal{K}$ is equipped with $M_{k}$ antennas and user $k$ is equipped with a set of reconfigurable antennas whose RF-chains are able to choose $L_{k}$ receiving mode from $N_{k}$ preset modes, where $N_{k} \geq L_{k}$. 
If $M_{k} \geq N_{k}$ for all $k \in \mathcal{K}$, then the sum LDoF  is given by 
\begin{align} \label{eq:dof_ic} 
d_{\Sigma,\text {IC}} = \max(L_{\max}, \eta_{\text {IC}})
\end{align}
where $\eta_{\text {IC}}$ is defined as $\eta$ with $\Lambda = \{ k \in \mathcal{K} : N_{k} > L_{\max} \}$ and $T_{k} = N_{k}$ for all $k \in \Lambda$
\end{corollary}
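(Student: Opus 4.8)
Both bounds are obtained from Theorem~\ref{main_thm}, with the hypothesis $M_k\ge N_k$ playing a role in each direction.

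\emph{Converse.} I would let all $K$ transmitters cooperate; this can only enlarge the family of achievable linear schemes. The resulting network is a $K$-user MIMO BC with a single transmitter of $M:=\sum_{k\in\mathcal{K}}M_k$ antennas whose channel to receiver $k$ is the concatenation $[\,\mathbf{H}^{(k1)}\;\cdots\;\mathbf{H}^{(kK)}\,]\in\mathbb{C}^{N_k\times M}$. Since the $\mathbf{H}^{(kj)}$ are i.i.d.\ from a continuous distribution, so is this concatenated matrix, so the cooperative channel is an instance of the model of Section~\ref{sec:system_model}; moreover, if the BC transmitter emulates the IC by putting transmitter $i$'s signal on its own block of $M_i$ coordinates, then the interference seen by receiver $k$ is exactly that of the IC, so every achievable IC LDoF tuple is an achievable BC LDoF tuple. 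Hence $d_{\Sigma,\text{IC}}\le d_{\Sigma}$ for this BC. Because $M\ge M_k\ge N_k$, we get $T_k=\min(M,N_k)=N_k$, so $\Lambda$ and $\eta$ of Theorem~\ref{main_thm} specialize precisely to $\Lambda=\{k:N_k>L_{\max}\}$ and $\eta_{\text{IC}}$. Finally $L_{\max}\le\max_k N_k\le M$ and $\eta_{\text{IC}}\le M$ (it is $0$ if $\Lambda=\emptyset$ and otherwise strictly below a weighted mean of $\{N_i\}_{i\in\Lambda}$, hence below $\max_{i\in\Lambda}N_i\le M$), so $\min(M,\max(L_{\max},\eta_{\text{IC}}))=\max(L_{\max},\eta_{\text{IC}})$, which is the claimed upper bound.

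\emph{Achievability.} If $\eta_{\text{IC}}\le L_{\max}$, choose $k^\star$ with $L_{k^\star}=L_{\max}$ and let transmitter $k^\star$ send $L_{\max}$ independent streams to receiver $k^\star$ (feasible as $M_{k^\star}\ge N_{k^\star}\ge L_{\max}$ and the direct channel is generic) while all other transmitters are silent; then $L_{\max}$ is attained interference-free. If $\eta_{\text{IC}}>L_{\max}$, I would transplant the blind-IA scheme of Section~\ref{achievability_thm1} serving the users in $\Lambda=\{k:N_k>L_{\max}\}$. In that scheme the transmit beamformers are fixed (no CSIT is used) and are designed so that, after each receiver applies its predetermined switching pattern, every foreign stream collapses into a small subspace; this alignment is \emph{blind}, i.e.\ it follows only from the repetition of beamformer columns across the time slots on which the relevant selection matrix repeats, and so it is preserved under left-multiplication by any time-invariant channel. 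Hence the beamformers of distinct users can be supported on disjoint antenna blocks, with user $j$'s block of size $N_j\le M_j$ assigned to transmitter $j$; the interference at each receiver $k$ then still collapses as in the BC, while receiver $k$'s desired signal $\boldsymbol{\Gamma}_k^{n}(\mathbf{I}_n\otimes\mathbf{H}^{(kk)})\tilde{\mathbf{V}}_k^{n}$ (here $\tilde{\mathbf{V}}_k^{n}$ is the precoder transmitter $k$ ends up using) retains full dimension $m_k$ because $M_k\ge N_k$ makes $\mathbf{H}^{(kk)}$ full row rank almost surely. Re-running the dimension count of Section~\ref{achievability_thm1} with $\mathbf{H}_k$ replaced by $\mathbf{H}^{(kk)}$ and with the generic cross channels $\mathbf{H}^{(kj)}$ then gives the sum LDoF $\eta_{\text{IC}}$, and combining the two regimes yields $d_{\Sigma,\text{IC}}\ge\max(L_{\max},\eta_{\text{IC}})$, matching the converse.

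\emph{Main obstacle.} The delicate part is the achievability transfer: in the BC all interfering streams reach a given receiver through one common channel, whereas in the IC the stream of transmitter $j$ reaches receiver $k$ through the distinct channel $\mathbf{H}^{(kj)}$. I expect the effort to lie in verifying that the alignment of Section~\ref{achievability_thm1} is genuinely channel-blind, so that it persists stream by stream under these distinct channels, and that genericity of $\{\mathbf{H}^{(kj)}\}_{j,k}$ still prevents receiver $k$'s desired signal from falling into the aligned interference. The role of $M_k\ge N_k$ is precisely to keep the direct channels full row rank throughout this argument.
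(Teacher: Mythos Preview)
Your proposal is correct and follows essentially the same approach as the paper: the converse is obtained by letting the transmitters cooperate and applying Theorem~\ref{main_thm} to the resulting BC with $M=\sum_k M_k$ antennas (where $M_k\ge N_k$ forces $T_k=N_k$ and hence the BC parameters $\Lambda,\eta$ coincide with the IC ones), and the achievability is obtained by noting that the blind-IA scheme of Section~\ref{achievability_thm1} requires no transmitter cooperation, so each transmitter $k\in\Lambda$ can run Steps~1--4 with $M=M_k$ and $T_k=N_k$ on its own. You are in fact more explicit than the paper on two points it leaves implicit: the verification that $\max(L_{\max},\eta_{\text{IC}})\le \sum_k M_k$ so that the outer $\min$ in Theorem~\ref{main_thm} drops, and the reason the interference alignment survives the passage to distinct cross-channels $\mathbf{H}^{(kj)}$ (namely, that the alignment in Section~\ref{achievability_thm1} depends only on the time-repetition structure of the precoders and the switching pattern, not on the particular constant channel multiplying them).
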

\begin{IEEEproof}
Obviously, the achievable LDoF of the $K$-user MIMO IC with reconfigurable antennas defined in Corollary \ref{main_thm3} is upper bounded by $d_{\Sigma}$ of the $K$-user MIMO BC with reconfigurable antennas where the transmitter is equipped with $\sum_{k=1}^K M_k$ antennas and user $k \in \mathcal{K}$ is equipped with a set of reconfigurable antennas whose RF-chains are able to choose $L_{k}$ receiving modes from $N_{k}$ preset modes.
Hence, from Theorem \ref{main_thm}, the LDoF of the considered $K$-user MIMO IC is upper bounded by \eqref{eq:dof_ic}, which completes the converse proof of Corollary \ref{main_thm3}.
We refer to Section \ref{achievability_thm3} for the achievability proof. 
\end{IEEEproof}

\begin{example} Consider the symmetric MIMO IC with reconfigurable antennas in Fig. \ref{system_IC} in which $N_k=N$ and $L_k=L$ for all $k\in\mathcal{K}$ ,where $N\geq L$. If $M\geq N$, then from Corollary \ref{main_thm3}, 
\begin{align}
d_{\Sigma, \text{IC}}=\max\left(L,\frac{KLN}{KL+N-L}\right),
\end{align}
which attains $\lim_{K\to \infty}d_{\Sigma, \text{IC}}=N$.
Note that the symmetric $K$-user MIMO IC without reconfigurable antennas is given by $d_{\Sigma, \text{IC}}=L$, which corresponds to the case where $M\geq N=L$.
Therefore, similar to the symmetric MIMO BC case, reconfigurable antenna can significantly improve the sum LDoF as both $M$ and $N$ increase with $M\geq N$.
\end{example}

The following two remarks summarize the contributions of Theorem \ref{main_thm} and Corollary \ref{main_thm3}, compared with the previous results in \cite{Wang:10, Lu:13}.

\begin{remark}
Consider the $K$-user MIMO BC with reconfigurable antennas defined in Section \ref{sec:system_model}.
If $M = N_{k}$ and $L_{k} = L$ for all $k \in \mathcal{K}$ where $M > L$, then 
\begin{align}
d_{\Sigma}=\frac{MLK}{M+LK-L}\notag
\end{align}
from Theorem \ref{main_thm}, which coincides with the previous achievability result in \cite{Wang:10}.
Hence, Theorem \ref{main_thm} not only generalizes the result in \cite{Wang:10} but it also shows the converse in the LDoF sense for general $M$, $\{N_k\}_{k\in\mathcal{K}}$, and $\{L_k\}_{k\in\mathcal{K}}$.
\end{remark}

\begin{remark}
Consider the $K$-user MIMO IC with reconfigurable antennas defined in Corollary \ref{main_thm3}.
If  $M_k=N_k > 1$ and $L_k = 1$ for all $k\in\mathcal{K}$, then
\begin{align}
d_{\Sigma,\text{IC}} =\frac{\sum\limits_{k= 1}^K\frac{N_{k}}{N_{k}-1}}{1+\sum\limits_{k=1}^K\frac{1}{N_{k}-1}} \notag
\end{align} 
from Corollary \ref{main_thm3}, which coincides with the previous achievability result in \cite{Lu:13}.
Hence, Corollary \ref{main_thm3} not only  generalizes the result in \cite{Lu:13} but it also shows the converse in the LDoF sense for a broader class of antenna configurations.
\end{remark}


\section{Converse} \label{sec:converse}
In this section, we prove the converse of Theorem  \ref{main_thm}, \ref{main_thm2}.

\subsection{Converse of Theorem \ref{main_thm}} \label{subsec:converse1}

First divide the entire parameter space into three cases as follows:
\begin{itemize}
\item Case 1: $M \leq L_{\max}$.
\item Case 2: $M > L_{\max}$ and $N_{k} \leq L_{\max}$ for all $ k \in \mathcal{K}$.
\item Case 3: $M > L_{\max}$ and $N_{k} > L_{\max}$ for some $k \in \mathcal{K}$.
\end{itemize}
Then the right hand side of \eqref{eq:sumLDoF} is given by
\begin{align}
\min ( M,\max(L_{\max}, \eta )  )=
\begin{cases}
M & \mbox{for Case 1}, \\
L_{\max} & \mbox{for Case 2}, \\
\max ( L_{\max} , \eta )  & \mbox{for Case 3}. \\
\end{cases}
\label{eq:converse_total}
\end{align}

For Case 1, an achievable sum LDoF is trivially upper bounded by the number of transmit antennas. 
Consequently, we have
\begin{align}
d_{\Sigma} \leq M & \ \mbox{ for Case 1}. \label{eq:converse_case1}
\end{align}

For Case 2, consider the \emph{extended $K$-user MIMO BC} by substituting $N_{k} = L_{\max}$ and $L_{k}=L_{\max}$ for all $k \in \mathcal{K}$ from the original $K$-user MIMO BC with reconfigurable antennas. 
That is, for the extended $K$-user MIMO BC, all users are equipped with $L_{\max}$ conventional antennas. 
Obviously, the sum DoF of the extended $K$-user MIMO BC provides an upper bound on $d_{\Sigma}$.
From the fact that the received signals of all the user are statistically equivalent in the extended $K$-user MIMO BC so that any receiver can decode all messages from the transmitter, $d_{\Sigma}$ is further bounded by the sum DoF of point-to-point MIMO BC where transmitter and receiver are equipped with $M$ and $L_{\max}$ conventional antennas respectively, given by $\min(M, L_{\max}) = L_{\max}$. Therefore, we have
\begin{align}
d_{\Sigma} \leq L_{\max} & \ \mbox{ for Case 2}. \label{eq:converse_case2}
\end{align}

Hence, for the rest of this subsection, we prove that 
\begin{align}
d_{\Sigma} & \leq \max (L_{\max},  \eta) \notag
\end{align}
by assuming that $M > L_{\max}$ and $N_{k} > L_{\max}$ for some $k \in \mathcal{K}$, which is Case 3.
Suppose that user $k$ satisfies the condition $T_k>L_{\max}$ (equivalently $k\in\Lambda$).
Then, consider the \emph{extended $K$-user MIMO BC with reconfigurable antennas at user $k$} by substituting 
$N_{i}=L_{\max}$ and $L_{i}=L_{\max}$ for all $i \in \mathcal{K} \setminus \Lambda$ and $L_{i} = N_{i}$ for all $i \in \Lambda\setminus\{k\}$ from the original $K$-user MIMO BC with reconfigurable antennas.
Hence, users in $\mathcal{K} \setminus \Lambda$ have $L_{\max}$ conventional antennas and user $i \in \Lambda\setminus\{k\}$ has $N_{i}$ conventional antennas.
Only user $k$ is equipped with reconfigurable antennas in this extended model.
Again, the sum DoF of this model provides an upper bound on $d_{\Sigma}$. 
Then, the received signal vector of user $i$ is given by
\begin{align} 
\mathbf{y}_{i}(t) =\begin{cases}
\boldsymbol{\Gamma}_{k}(t)\mathbf{H}_{k} \mathbf{x}(t) + \mathbf{z}_{k}(t) & \mbox{if } i = k, \\
\mathbf{G}_{i}\mathbf{x}(t) + \mathbf{z}_{i}(t) & \textrm{otherwise}
\end{cases}
\end{align}
where $\mathbf{G}_{i} \in \mathbb{C}^{\max(N_{i},L_{\max}) \times M}$ for $i \in \mathcal{K}\setminus \{k\}$ satisfies the channel assumption in Section \ref{sec:system_model}.

\begin{figure}[!t]
\includegraphics[width=3in]{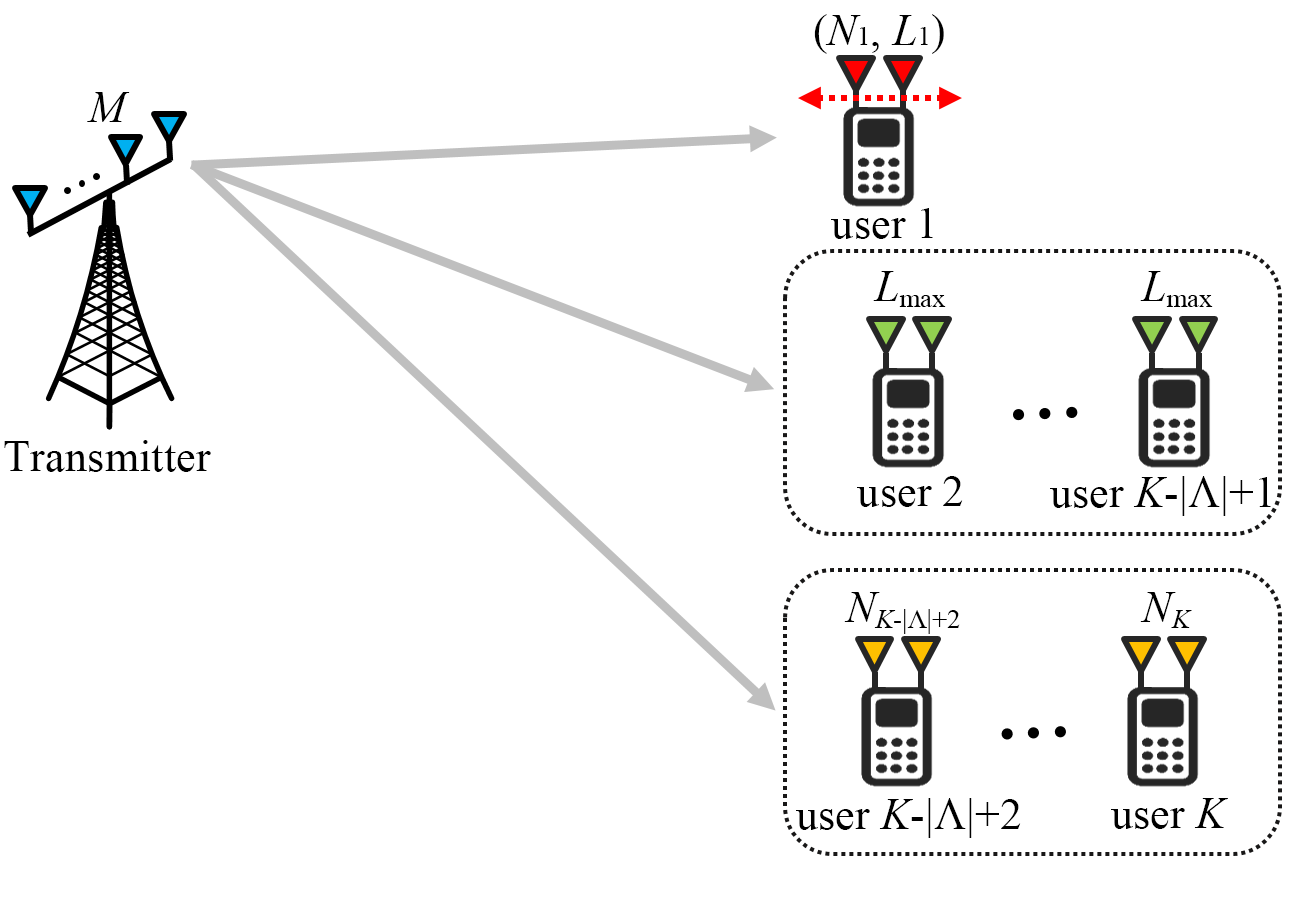}
\centering
\caption{Extended $K$-user MIMO BC with reconfigurable antennas based on the rearranged user index, where we assume that if $|\Lambda|=1$, then the set of user $K-|\Lambda|+2$ through user $K$ is empty and if $|\Lambda|=K$, then the set of user 2 through $K-|\Lambda|+1$ is empty.}
\label{system_extended}
\end{figure}

For convenience, we rearrange the users in ascending order of $L_{i}$ and denote the new index of user $i$ as $\sigma(i)$.
We assume that $\sigma(k)=1$ without loss of generality.
From now on, we denote the index $i$ as the rearranged user index. Fig. \ref{system_extended} illustrates the extended model based on the rearranged user index.
Hence, the $n$ time-extended received signal vector of user $i$ with linear precoding is given by 
\begin{align} \label{eq:ch_extended}
\mathbf{y}_{i}^{n} = \begin{cases}
\sum\limits_{j=1}^{K} \boldsymbol{\Gamma}_{i}^{n}\mathbf{H}_{i}^{n}\mathbf{V}_{j}^{n}\mathbf{s}_{j} + \mathbf{z}_{i}^{n} &\mbox{if } i = 1, \\
\sum\limits_{j=1}^{K} \mathbf{G}_{i}^{n}\mathbf{V}_{j}^{n}\mathbf{s}_{j} + \mathbf{z}_{i}^{n} & \textrm{otherwise} \\
\end{cases}
\end{align}
where $\mathbf{G}_{i}^{n}= \mathbf{I}_{n} \otimes \mathbf{G}_{i}$ for $i \in \mathcal{K}\setminus \{1\}$.
Also, we define an increasing sequence $\Delta_{i}$ for $i\in\mathcal{K}$ as
\begin{align}
\Delta_{i} = \begin{cases}
L_{k} & \mbox{if }   \sigma^{-1}(i) = k,\\
L_{\max} & \mbox{if } \sigma^{-1}(i) \in \mathcal{K}\setminus \Lambda, \\
T_{\sigma^{-1}(i)} & \textrm{otherwise}.
\end{cases} \notag
\end{align}
Note that $\Delta_{i}$ is the rank of $\mathbf{\Gamma}_1(t)\mathbf{H}_1$ for $i=1$ and the rank of $\mathbf{G}_{i}$ for $i = 2,\cdots, K$, almost surely.

In the following, we introduce three key lemmas used for proving the converse of Theorem \ref{main_thm}.
The first lemma provides an equivalent condition for decodability of messages \cite{Lashgari:14}.
\begin{lemma}[Lashgari--Avestimehr--Suh]
\label{lemma_fano}
For two matrices $\mathbf{A}$, $\mathbf{B}$ with the same row size,
\begin{align}
\operatorname{dim}(\operatorname{Proj}_{\mathcal{R}(\mathbf{A})^{c}}\mathcal{R}(\mathbf{B})) = \operatorname{rank} \left ([ \mathbf{A} \  \mathbf{B} \right ]) - \operatorname{rank} \left(\mathbf{A} \right). \notag
\end{align}
\end{lemma}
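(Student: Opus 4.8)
The statement to prove is Lemma~\ref{lemma_fano}, a basic identity in linear algebra: for two matrices $\mathbf{A}$ and $\mathbf{B}$ with the same number of rows, $\operatorname{dim}(\operatorname{Proj}_{\mathcal{R}(\mathbf{A})^{c}}\mathcal{R}(\mathbf{B})) = \operatorname{rank}([\mathbf{A}\ \mathbf{B}]) - \operatorname{rank}(\mathbf{A})$. The plan is to work entirely with column spaces and the projection onto the orthogonal complement of $\mathcal{R}(\mathbf{A})$, call it $P$. First I would observe that $\mathcal{R}([\mathbf{A}\ \mathbf{B}]) = \mathcal{R}(\mathbf{A}) + \mathcal{R}(\mathbf{B})$, so that $\operatorname{rank}([\mathbf{A}\ \mathbf{B}]) = \dim(\mathcal{R}(\mathbf{A}) + \mathcal{R}(\mathbf{B}))$, while $\operatorname{rank}(\mathbf{A}) = \dim(\mathcal{R}(\mathbf{A}))$.

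The key step is to identify $\operatorname{Proj}_{\mathcal{R}(\mathbf{A})^{c}}\mathcal{R}(\mathbf{B}) = P(\mathcal{R}(\mathbf{B}))$ as the image of $\mathcal{R}(\mathbf{B})$ under the linear map $P$, and to compute its dimension via rank--nullity applied to $P$ restricted to $\mathcal{R}(\mathbf{B})$. The nullspace of $P$ restricted to $\mathcal{R}(\mathbf{B})$ is exactly $\mathcal{R}(\mathbf{B}) \cap \ker(P) = \mathcal{R}(\mathbf{B}) \cap \mathcal{R}(\mathbf{A})$, since $\ker(P) = \mathcal{R}(\mathbf{A})$. Hence
\begin{align}
\dim\left(\operatorname{Proj}_{\mathcal{R}(\mathbf{A})^{c}}\mathcal{R}(\mathbf{B})\right) = \dim(\mathcal{R}(\mathbf{B})) - \dim(\mathcal{R}(\mathbf{A}) \cap \mathcal{R}(\mathbf{B})). \notag
\end{align}
Then I would invoke the standard dimension formula $\dim(\mathcal{R}(\mathbf{A}) + \mathcal{R}(\mathbf{B})) = \dim(\mathcal{R}(\mathbf{A})) + \dim(\mathcal{R}(\mathbf{B})) - \dim(\mathcal{R}(\mathbf{A}) \cap \mathcal{R}(\mathbf{B}))$ to rewrite $\dim(\mathcal{R}(\mathbf{B})) - \dim(\mathcal{R}(\mathbf{A}) \cap \mathcal{R}(\mathbf{B})) = \dim(\mathcal{R}(\mathbf{A}) + \mathcal{R}(\mathbf{B})) - \dim(\mathcal{R}(\mathbf{A})) = \operatorname{rank}([\mathbf{A}\ \mathbf{B}]) - \operatorname{rank}(\mathbf{A})$, which is exactly the claim.

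There is no real obstacle here; the only thing to be careful about is the bookkeeping of which space $P$ acts on and verifying $\ker(P) = \mathcal{R}(\mathbf{A})$ (true because $P$ is the orthogonal projector onto $\mathcal{R}(\mathbf{A})^\perp$), plus stating the two elementary facts used — that concatenating columns gives the sum of column spaces, and the inclusion--exclusion dimension formula for subspace sums. Since this result is quoted from \cite{Lashgari:14}, I would most likely present it as a short self-contained proof along these lines, or simply cite the reference; either way the argument is the three-line computation above.
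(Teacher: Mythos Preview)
Your proof is correct and complete; the rank--nullity argument combined with the subspace dimension formula is exactly the standard way to establish this identity. The paper itself does not give a proof but simply refers to \cite[Lemma~1]{Lashgari:14}, so your self-contained three-line argument is actually more informative than what appears in the paper while being entirely consistent with it.
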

\begin{proof}
We refer to \cite[Lemma 1]{Lashgari:14} for the proof.
\end{proof}

The second lemma states that mode switching does not decrease the dimension of the interference space of user 1 almost surely.
\begin{lemma}
\label{lemma_sel}
Consider the extended $K$-user MIMO BC with reconfigurable antennas at user 1 depicted in Fig. \ref{system_extended}. Let $\mathbf{G}_{1}\in\mathbb{C}^{L_1\times M}$ denote the matrix consisting of the first through the $L_{1}$th row vectors of $\mathbf{H}_{1}$ and $\mathbf{G}_{1}^{n}= \mathbf{I}_{n} \otimes \mathbf{G}_1$. 
For any mode switching pattern $\boldsymbol{\Gamma}_{1}^{n}$, the following relation holds almost surely:
\begin{align}
\operatorname{rank} \left (\boldsymbol{\Gamma}_{1}^{n}\mathbf{H}_{1}^{n}[\mathbf{V}_{2}^{n} \cdots \mathbf{V}_{K}^{n}] \right ) \geq  \operatorname{rank} \left (\mathbf{G}_{1}^{n}[\mathbf{V}_{2}^{n} \cdots \mathbf{V}_{K}^{n}] \right ). \label{eq_sel}
\end{align}
\end{lemma}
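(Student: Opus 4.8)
The plan is to decouple the two sides of \eqref{eq_sel} by a genericity argument, reducing the almost-sure inequality to a statement about a single, carefully chosen realization of $\mathbf{H}_1$. The structural fact I will rely on is that, by the no-CSIT assumption and the assumption that the mode-switching patterns are predetermined and channel-independent, both $\boldsymbol{\Gamma}_1^n$ and $[\mathbf{V}_2^n\cdots\mathbf{V}_K^n]$ are fixed matrices not depending on $\mathbf{H}_1$. Hence every entry of $\boldsymbol{\Gamma}_1^n\mathbf{H}_1^n[\mathbf{V}_2^n\cdots\mathbf{V}_K^n]$ and of $\mathbf{G}_1^n[\mathbf{V}_2^n\cdots\mathbf{V}_K^n]$ is a polynomial in the i.i.d.\ continuously distributed entries of $\mathbf{H}_1$, so each of the two ranks equals, almost surely, its maximal value over all realizations of $\mathbf{H}_1$ (every $r\times r$ minor is such a polynomial, and a polynomial either vanishes identically or is nonzero almost surely).

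Write $r^\star := \max_{\mathbf{G}\in\mathbb{C}^{L_1\times M}}\operatorname{rank}\big((\mathbf{I}_n\otimes\mathbf{G})[\mathbf{V}_2^n\cdots\mathbf{V}_K^n]\big)$. Since $\mathbf{G}_1$ is just the first $L_1$ rows of $\mathbf{H}_1$, it is an $L_1\times M$ matrix with i.i.d.\ continuous entries, so $\operatorname{rank}\big(\mathbf{G}_1^n[\mathbf{V}_2^n\cdots\mathbf{V}_K^n]\big)=r^\star$ almost surely; and $\operatorname{rank}\big(\boldsymbol{\Gamma}_1^n\mathbf{H}_1^n[\mathbf{V}_2^n\cdots\mathbf{V}_K^n]\big)$ almost surely equals its maximum over realizations of $\mathbf{H}_1$. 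It therefore suffices to produce one realization $\mathbf{H}_1^\star$ at which the latter rank is at least $r^\star$.

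I would take $\mathbf{H}_1^\star=\mathbf{C}\mathbf{G}^\star$, where $\mathbf{G}^\star\in\mathbb{C}^{L_1\times M}$ attains the maximum, $\operatorname{rank}\big((\mathbf{I}_n\otimes\mathbf{G}^\star)[\mathbf{V}_2^n\cdots\mathbf{V}_K^n]\big)=r^\star$, and $\mathbf{C}\in\mathbb{C}^{N_1\times L_1}$ is a matrix every one of whose $L_1\times L_1$ submatrices is nonsingular (a Vandermonde matrix with $N_1$ distinct nodes works). Because $\boldsymbol{\Gamma}_1(t)$ picks $L_1$ distinct rows at each time $t$, the matrix $\mathbf{C}_t:=\boldsymbol{\Gamma}_1(t)\mathbf{C}$ is an invertible $L_1\times L_1$ submatrix of $\mathbf{C}$, and therefore
\[
\boldsymbol{\Gamma}_1^n(\mathbf{I}_n\otimes\mathbf{H}_1^\star)=\operatorname{diag}(\mathbf{C}_1,\dots,\mathbf{C}_n)\,(\mathbf{I}_n\otimes\mathbf{G}^\star)
\]
with $\operatorname{diag}(\mathbf{C}_1,\dots,\mathbf{C}_n)$ a nonsingular $nL_1\times nL_1$ matrix. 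Since left-multiplication by a nonsingular matrix preserves rank,
\[
\operatorname{rank}\big(\boldsymbol{\Gamma}_1^n(\mathbf{I}_n\otimes\mathbf{H}_1^\star)[\mathbf{V}_2^n\cdots\mathbf{V}_K^n]\big)=\operatorname{rank}\big((\mathbf{I}_n\otimes\mathbf{G}^\star)[\mathbf{V}_2^n\cdots\mathbf{V}_K^n]\big)=r^\star,
\]
which is exactly the required realization. Combining this with the two almost-sure identities of the previous paragraph yields \eqref{eq_sel} almost surely.

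The only real obstacle is keeping the logical reduction airtight: the two ranks being compared are functions of the \emph{same} random $\mathbf{H}_1$, yet the argument compares them by evaluating each at its own generic value, which is legitimate precisely because each separately is almost surely constant and equal to that generic value. Once that is granted, the remaining pieces — the existence of a Vandermonde-type $\mathbf{C}$ with all maximal minors nonzero, and the block-diagonal factorization of $\boldsymbol{\Gamma}_1^n(\mathbf{I}_n\otimes\mathbf{H}_1^\star)$ — are routine.
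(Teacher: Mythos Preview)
Your proof is correct and uses essentially the same ingredients as the paper's: the polynomial/genericity argument that ranks are almost surely maximal, and the Vandermonde-type matrix $\mathbf{C}$ with all $L_1\times L_1$ minors nonsingular so that $\boldsymbol{\Gamma}_1(t)\mathbf{C}\mathbf{G}^\star$ differs from $\mathbf{G}^\star$ by an invertible left factor. The only cosmetic difference is that the paper conditions on $\mathbf{G}_1$ and argues by contradiction over the remaining rows $\mathbf{G}_1^c$, whereas you randomize over all of $\mathbf{H}_1$ at once and argue directly; your route is slightly cleaner, while the paper's conditioning yields the marginally stronger statement that the inequality holds for \emph{every} realization of $\mathbf{G}_1$ (not just generic ones), though this extra strength is not needed for the lemma as stated.
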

\begin{IEEEproof}
We refer to Appendix \ref{proof_sel} for the proof.
\end{IEEEproof}
Although the definition of $\mathbf{G}^n_{1}$ in Lemma \ref{lemma_sel} is not consistent with those of $\mathbf{G}^n_{i}$ for $i = 2,\cdots, K$ in \eqref{eq:ch_extended}, we adopt this notation for easy presentation of the converse proof.
The third lemma shows the relation of the dimensions of the interference space between user $i$ and user $i-1$.
\begin{lemma}
\label{lemma_equi}
Consider the extended $K$-user MIMO BC with reconfigurable antennas at user 1 depicted in Fig. \ref{system_extended}. The following relations hold almost surely:
\begin{align} \label{eq: relation_1}
\frac{1}{\Delta_{i-1}}\operatorname{rank} \left (\mathbf{G}_{i-1}^{n}[\mathbf{V}_{i}^{n} \cdots \mathbf{V}_{K}^{n}] \right ) 
&\geq \frac{1}{\Delta_{i}}\operatorname{rank} \left (\mathbf{G}_{i}^{n}[\mathbf{V}_{i+1}^{n} \cdots \mathbf{V}_{K}^{n}] \right ) 
+ \frac{1}{\Delta_{i}} \operatorname{dim}(\operatorname{Proj}_{\mathcal{I}_{i}^{c}}\mathcal{R}(\mathbf{G}_{i}^{n}\mathbf{V}_{i}^{n})),
\end{align}
for  $i = 2,\cdots, K-1$ and 
\begin{align} \label{eq: relation_2}
\frac{1}{\Delta_{K-1}}\operatorname{rank} \left (\mathbf{G}_{K-1}^{n}\mathbf{V}_{K}^{n} \right ) 
&\geq \frac{1}{\Delta_{K}} \operatorname{dim}(\operatorname{Proj}_{\mathcal{I}_{K}^{c}}\mathcal{R}(\mathbf{G}_{K}^{n}\mathbf{V}_{K}^{n})). 
\end{align}
\end{lemma}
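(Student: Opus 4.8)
The plan is to use Lemma \ref{lemma_fano} to convert the decodability constraint at user $i$ into a rank identity, and then to exploit the statistical symmetry of the genie-augmented channels in the extended model — in particular, that the users have been reindexed so that $\Delta_2\le\Delta_3\le\cdots\le\Delta_K$ — to relate the interference dimension seen by user $i-1$ to that seen by user $i$. First I would note that, by the decodability requirement in Definition \ref{def:LDoF} applied to user $i$, the signal space $\mathcal{R}(\mathbf{G}_i^n\mathbf{V}_i^n)$ contributes $\dim(\operatorname{Proj}_{\mathcal{I}_i^c}\mathcal{R}(\mathbf{G}_i^n\mathbf{V}_i^n))$ independent dimensions beyond the interference space $\mathcal{I}_i=\mathcal{R}(\mathbf{G}_i^n[\mathbf{V}_{i+1}^n\cdots\mathbf{V}_K^n])$ (using the rearranged index, the interferers at user $i$ among users $i,\dots,K$ are exactly $i+1,\dots,K$, the lower-indexed messages having already been accounted for in the telescoping argument that this lemma feeds). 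Hence by Lemma \ref{lemma_fano},
\begin{align}
\operatorname{rank}\left(\mathbf{G}_i^n[\mathbf{V}_i^n\cdots\mathbf{V}_K^n]\right)
= \operatorname{rank}\left(\mathbf{G}_i^n[\mathbf{V}_{i+1}^n\cdots\mathbf{V}_K^n]\right)
+ \dim(\operatorname{Proj}_{\mathcal{I}_i^c}\mathcal{R}(\mathbf{G}_i^n\mathbf{V}_i^n)). \notag
\end{align}

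Next I would establish the ``rate transfer'' inequality between consecutive users: for any collection of beamforming matrices $\mathbf{W}^n$ (here $\mathbf{W}^n=[\mathbf{V}_i^n\cdots\mathbf{V}_K^n]$), almost surely
\begin{align}
\frac{1}{\Delta_{i-1}}\operatorname{rank}\left(\mathbf{G}_{i-1}^n\mathbf{W}^n\right)
\ \ge\ \frac{1}{\Delta_i}\operatorname{rank}\left(\mathbf{G}_i^n\mathbf{W}^n\right). \notag
\end{align}
The intuition is that $\mathbf{G}_{i-1}^n=\mathbf{I}_n\otimes\mathbf{G}_{i-1}$ is a generic $\Delta_{i-1}$-rows-per-symbol observation while $\mathbf{G}_i^n$ is a generic $\Delta_i$-rows-per-symbol observation with $\Delta_{i-1}\le\Delta_i$; partitioning the $n\Delta_i$ rows of $\mathbf{G}_i^n$ into $\Delta_i$ groups of $n$ rows (one row per time slot) and applying a union/averaging bound over the $\binom{\Delta_i}{\Delta_{i-1}}$-type subselections, the rank that a generic $\Delta_{i-1}$-row projection preserves is at least $\tfrac{\Delta_{i-1}}{\Delta_i}$ of the full rank. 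Combining this with the displayed rank identity (shifting index $i\to i$ inside, and noting $\mathbf{W}^n=[\mathbf{V}_i^n\cdots\mathbf{V}_K^n]$) gives
\begin{align}
\frac{1}{\Delta_{i-1}}\operatorname{rank}\left(\mathbf{G}_{i-1}^n[\mathbf{V}_i^n\cdots\mathbf{V}_K^n]\right)
&\ge \frac{1}{\Delta_i}\left[\operatorname{rank}\left(\mathbf{G}_i^n[\mathbf{V}_{i+1}^n\cdots\mathbf{V}_K^n]\right)+\dim(\operatorname{Proj}_{\mathcal{I}_i^c}\mathcal{R}(\mathbf{G}_i^n\mathbf{V}_i^n))\right], \notag
\end{align}
which is \eqref{eq: relation_1}. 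The boundary case \eqref{eq: relation_2} is the same computation with the convention that there are no interferers beyond user $K$, so $\operatorname{rank}(\mathbf{G}_K^n[\mathbf{V}_{K+1}^n\cdots])=0$ and only the $\dim(\operatorname{Proj}_{\mathcal{I}_K^c}\mathcal{R}(\mathbf{G}_K^n\mathbf{V}_K^n))$ term survives on the right.

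The main obstacle I anticipate is making the ``rate transfer'' inequality rigorous in the almost-sure sense: one must argue that a generic selection of $\Delta_{i-1}$ out of $\Delta_i$ linear combinations of the per-symbol channel outputs retains at least the $\tfrac{\Delta_{i-1}}{\Delta_i}$ fraction of the rank of $\mathbf{G}_i^n\mathbf{W}^n$, uniformly over the (finitely many, predetermined) mode-switching patterns and over the beamformers $\mathbf{V}_j^n$ (which themselves may depend on nothing but the revealed switching pattern, since there is no CSIT). This is where genericity of the continuously-distributed channel entries is essential and where one should either invoke a counting/averaging argument over row-subsets or reduce to a statement about ranks of generic block matrices; the block-diagonal Kronecker structure $\mathbf{I}_n\otimes\mathbf{G}$ should make the per-time-slot decomposition clean, but care is needed because $\mathbf{W}^n$ couples the time slots. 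I would handle this by conditioning on $\mathbf{W}^n$, observing that $\mathbf{G}_i$ (hence $\mathbf{G}_i^n$) is independent of $\mathbf{W}^n$, and then applying the generic-rank bound for the resulting fixed subspace — relegating the detailed argument, as the paper does with Lemma \ref{lemma_sel}, to an appendix.
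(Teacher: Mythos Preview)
Your high-level strategy matches the paper's: prove a ``rank ratio'' inequality
\[
\frac{1}{\Delta_{i-1}}\operatorname{rank}\!\left(\mathbf{G}_{i-1}^{n}\mathbf{W}^{n}\right)\ \ge\ \frac{1}{\Delta_i}\operatorname{rank}\!\left(\mathbf{G}_{i}^{n}\mathbf{W}^{n}\right)
\]
and then combine it with the Lemma~\ref{lemma_fano} decomposition. Two points need fixing, one minor and one substantive.

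\textbf{Minor:} Your displayed rank identity is stated with $\mathcal{I}_i=\mathcal{R}(\mathbf{G}_i^n[\mathbf{V}_{i+1}^n\cdots\mathbf{V}_K^n])$, but in Definition~\ref{def:LDoF} the interference space $\mathcal{I}_i$ contains \emph{all} other users' signals, including $\mathbf{V}_1^n,\dots,\mathbf{V}_{i-1}^n$. Lemma~\ref{lemma_fano} gives the equality only for the smaller space $\mathcal{I}_i':=\mathcal{R}(\mathbf{G}_i^n[\mathbf{V}_{i+1}^n\cdots\mathbf{V}_K^n])$; since $\mathcal{I}_i'\subseteq\mathcal{I}_i$, you then get $\dim(\operatorname{Proj}_{(\mathcal{I}_i')^c}\cdot)\ge\dim(\operatorname{Proj}_{\mathcal{I}_i^c}\cdot)$, which is the inequality you actually need. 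The paper makes exactly this two-step move.

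\textbf{Substantive:} The ``averaging over $\binom{\Delta_i}{\Delta_{i-1}}$-type subselections'' heuristic is where the real work hides, and averaging alone does not deliver the bound, because rank is subadditive rather than additive over row blocks. The paper's proof uses two ingredients you have not named. First, a statistical-equivalence lemma (the paper's Lemma~\ref{lemma_stat}): for generic channels, \emph{any} $m$ rows of $\mathbf{G}_i^n$ (one per time slot) give the same rank as any $m$ rows of $\mathbf{G}_j^n$, almost surely. This is the precise form of the ``genericity'' you invoke. Second, and crucially, \emph{submodularity of rank} (the paper's Lemma~\ref{lemma_submod}): $\operatorname{rank}[\mathbf{A}\ \mathbf{C}]-\operatorname{rank}[\mathbf{C}]\ge\operatorname{rank}[\mathbf{A}\ \mathbf{B}\ \mathbf{C}]-\operatorname{rank}[\mathbf{B}\ \mathbf{C}]$. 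With these two tools the paper telescopes the rank of $\mathbf{G}_{i-1}^n\mathbf{W}^n$ row-block by row-block, uses symmetry to make every marginal increment equal, and uses submodularity to compare the marginal increment of the $\Delta_{i-1}$-th row to those of rows $\Delta_{i-1}+1,\dots,\Delta_i$; this is what turns ``all $m$-subsets have equal rank'' into the concavity statement $r_m/m\ge r_{m'}/m'$ for $m\le m'$. Your plan should replace the averaging sketch by this submodularity-plus-symmetry telescoping; without submodularity the step does not close.
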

\begin{IEEEproof}
We refer to Appendix \ref{proof_equi} for the proof.
\end{IEEEproof}

We are now ready to prove the converse of Theorem \ref{main_thm}. From the definition of $m_{1}(n) $, we have
\begin{subequations}
\begin{align}
m_{1}(n) 
&=  \operatorname{dim}(\operatorname{Proj}_{\mathcal{I}_{1}^{c}}\mathcal{R}(\boldsymbol{\Gamma}_{1}^{n}\mathbf{H}_{1}^{n}\mathbf{V}_{1}^{n})) \notag \\
&=  \operatorname{rank} \left (\boldsymbol{\Gamma}_{1}^{n}\mathbf{H}_{1}^{n}[\mathbf{V}_{1}^{n} \cdots \mathbf{V}_{K}^{n}] \right ) - \operatorname{rank}\left (\boldsymbol{\Gamma}_{1}^{n}\mathbf{H}_{1}^{n}[\mathbf{V}_{2}^{n} \cdots \mathbf{V}_{K}^{n}] \right ) \label{eq_dof11}\\
&\leq  n \Delta_{1} - \operatorname{rank}\left (\boldsymbol{\Gamma}_{1}^{n}\mathbf{H}_{1}^{n}[\mathbf{V}_{2}^{n} \cdots \mathbf{V}_{K}^{n}] \right ) \notag \\
&\overset{a.s.}{\leq}  n \Delta_{1}  - \operatorname{rank} \left (\mathbf{G}_{1}^{n}[\mathbf{V}_{2}^{n} \cdots \mathbf{V}_{K}^{n}] \right ) \label{eq_dof12}\\
&\overset{a.s.}{\leq} n \Delta_{1} - \sum\limits_{i=2}^{K} \frac{\Delta_{1}}{\Delta_{i}} \operatorname{dim}(\operatorname{Proj}_{\mathcal{I}_{i}^{c}}\mathcal{R}(\mathbf{G}_{i}^{n}\mathbf{V}_{i}^{n})) \label{eq_dof14} \\
&= n \Delta_{1} - \sum\limits_{i=2}^{K} \frac{\Delta_{1}}{\Delta_{i}} m_{i}(n) \label{eq_dof15}
\end{align}
\end{subequations}
where \eqref{eq_dof11}, \eqref{eq_dof12}, \eqref{eq_dof14}, and \eqref{eq_dof15} follow from Lemma \ref{lemma_fano}, \ref{lemma_sel}, \ref{lemma_equi}, and Definition \ref{def:LDoF} respectively.
Then, by dividing both sides by $n$ and letting $n$ to infinity, we have
\begin{align}
\label{dof_result21}
\sum\limits_{i=1}^{K} \frac{1}{\Delta_{i}} d_{i} \leq 1.
\end{align}

Rearranging \eqref{dof_result21} with respect to the original index, i.e., $\sigma^{-1}(i)$ provides
\begin{align}
\label{dof_result2}
\frac{1}{L_{k}} d_{k} + \sum\limits_{i \in \Lambda\setminus \{ k\} } \frac{1}{T_{i}} d_{i}+\sum\limits_{i \notin \Lambda } \frac{1}{L_{\max}} d_{i} \leq 1.
\end{align}

Since \eqref{dof_result2} holds for all $k \in \Lambda$, we have total $|\Lambda|$ inequalities composing the outer region of $\mathcal{D}$.
Then, we obtain an upper bound on $d_{\Sigma}$ by solving the linear programming in the following lemma.

\begin{lemma} \label{lemma:linear_program}
\label{lemma_upper} Consider the following optimization problem assuming that $M > L_{\max}$ and $N_{k} > L_{\max}$ for some $k \in \mathcal{K}$:
\begin{align}
\operatorname{maximize} \sum\limits_{i=1}^{K} d_{i} \notag 
\end{align}
\begin{align}
\textrm{subject to } & \frac{1}{L_{k}} d_{k} + \sum\limits_{i \in \Lambda\setminus \{ k\} } \frac{1}{T_{i}} d_{i}+\sum\limits_{i \notin \Lambda } \frac{1}{L_{\max}} d_{i} \leq 1, \ \ \forall k\in \Lambda,
 \notag \\
& d_i \geq 0, \ \ \forall i\in\mathcal{K}.  \notag
\end{align}
Then 
\begin{align}
\sum\limits_{i=1}^{K} d_{i} \leq 
\max \left ( \eta, L_{\max} \right ) \notag.
\end{align}
\end{lemma}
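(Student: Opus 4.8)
The plan is to treat the stated problem as a finite linear program and bound its optimum by exhibiting a feasible point of the dual program whose objective value is $\max(\eta,L_{\max})$; weak LP duality then yields the claimed inequality. First I would write the dual: introducing a multiplier $\lambda_k\ge 0$ for each primal constraint $k\in\Lambda$, the dual minimizes $S:=\sum_{k\in\Lambda}\lambda_k$ subject to $S\ge L_{\max}$ (the column constraint coming from every index $i\notin\Lambda$, present whenever $\mathcal{K}\setminus\Lambda\neq\emptyset$) together with one constraint per $i\in\Lambda$, namely $\frac{\lambda_i}{L_i}+\frac{1}{T_i}\sum_{k\in\Lambda\setminus\{i\}}\lambda_k\ge 1$. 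Since $i\in\Lambda$ forces $T_i>L_{\max}\ge L_i$, this last constraint rewrites as $\lambda_i\ge \frac{L_i(T_i-S)}{T_i-L_i}$, so the whole dual reduces to choosing nonnegative $\lambda_i$ with prescribed total mass $S$ that dominate these $S$-dependent lower bounds, while making $S$ as small as possible.

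Next I would split into the two regimes appearing in $\max(\eta,L_{\max})$. If $\eta\ge L_{\max}$, I would try $S=\eta$ with $\lambda_i=\frac{L_i(T_i-\eta)}{T_i-L_i}$ for $i\in\Lambda$: a direct computation from the definition of $\eta$ gives $\sum_{i\in\Lambda}\lambda_i=\sum_{i\in\Lambda}\frac{L_iT_i}{T_i-L_i}-\eta\sum_{i\in\Lambda}\frac{L_i}{T_i-L_i}=\eta$, so the mass constraint and $S\ge L_{\max}$ both hold and each per-$i$ inequality holds with equality; the remaining thing to check is $\lambda_i\ge 0$, i.e. $T_i\ge\eta$ for all $i\in\Lambda$. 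If instead $\eta<L_{\max}$, I would take $S=L_{\max}$; the per-$i$ lower bounds become $\frac{L_i(T_i-L_{\max})}{T_i-L_i}\ge 0$, and the same algebraic identity gives $\sum_{i\in\Lambda}\frac{L_i(T_i-L_{\max})}{T_i-L_i}=\eta+(\eta-L_{\max})\sum_{i\in\Lambda}\frac{L_i}{T_i-L_i}\le\eta\le L_{\max}$, so there is enough slack to inflate the $\lambda_i$ up to total mass $L_{\max}$ while keeping each above its lower bound. In either regime the dual objective equals $\max(\eta,L_{\max})$, which by weak duality upper-bounds $\sum_i d_i$. (Equivalently, one may argue directly by taking the convex combination of the $|\Lambda|$ primal inequalities with weights these $\lambda_k$, normalized.)

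The step I expect to be the main obstacle is the nonnegativity verification in the first regime, i.e. showing $\eta\le T_i$ for every $i\in\Lambda$ and not merely on average. What is immediate is that each individual cross term obeys $\frac{L_j(T_j-T_i)}{T_j-L_j}<L_j\le L_{\max}<T_i$ (using $L_j\le L_{\max}<T_i<T_j$ on the terms that are positive, the rest being nonpositive), so the issue reduces to controlling the aggregate of such terms; making this precise, and in particular isolating which features of an admissible antenna configuration ($T_i=\min(M,N_i)$, $L_{\max}=\max_k L_k$, and $T_i>L_{\max}$ for $i\in\Lambda$) force the summed bound, is where the real work lies. Once that monotonicity-type fact is in hand, I would separately sanity-check the degenerate cases $|\Lambda|=1$ (where $\eta=L_k$ and the bound collapses to $L_{\max}$) and $\mathcal{K}=\Lambda$ (where the $S\ge L_{\max}$ dual constraint is absent), since the dual construction behaves differently there.
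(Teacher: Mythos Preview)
Your LP–duality approach is essentially the paper's argument in different clothing. The paper left–multiplies the equality $\mathbf{A}_1[d_1,\ldots,d_{|\Lambda|}]^T+\mathbf{A}_2[d_{|\Lambda|+1},\ldots,d_K]^T+\mathbf{x}=\mathbf{1}$ by $\mathbf{1}^T\mathbf{A}_1^{-1}$ and then discards the slack via ``$\mathbf{1}^T\mathbf{A}_1^{-1}\mathbf{x}\ge 0$''. That is exactly a weighted sum of the primal inequalities with weights $\boldsymbol{\lambda}^T=\mathbf{1}^T\mathbf{A}_1^{-1}$, and a two–line check (solve $\mathbf{A}_1^{T}\boldsymbol{\lambda}=\mathbf{1}$) shows these weights coincide with your $\lambda_i=L_i(T_i-\eta)/(T_i-L_i)$. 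So the step you flag as the obstacle, $\lambda_i\ge 0$ (equivalently $\eta\le T_i$ for every $i\in\Lambda$), is precisely the step the paper asserts without justification.

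That step is not merely hard; it is false in general, and with it the lemma as stated. Take $K=1001$, $M=100$, $L_k=1$ for all $k$, $N_1=2$, and $N_k=100$ for $k\ge 2$. Then $\Lambda=\mathcal{K}$, $T_1=2$, $T_k=100$ for $k\ge 2$, $L_{\max}=1$, and
\[
\eta=\frac{2+1000\cdot\tfrac{100}{99}}{2+\tfrac{1000}{99}}=\frac{100198}{1198}\approx 83.6>T_1,
\]
so $\lambda_1<0$. Worse, the primal point $d_1=0$, $d_k=100/1099$ for $k\ge 2$ is feasible and attains $\sum_k d_k=100000/1099\approx 91.0>\max(\eta,L_{\max})$. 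Hence no dual certificate of value $\max(\eta,L_{\max})$ can exist. Your instinct about where the real work lies was correct, but the resolution is not to supply the missing inequality; it is to amend the statement, e.g.\ by replacing $\eta$ with $\max_{\Lambda'\subseteq\Lambda}\eta(\Lambda')$ (in the example this maximum is $100000/1099$, attained at $\Lambda'=\Lambda\setminus\{1\}$, and matches the LP optimum).
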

\begin{IEEEproof}
We refer to Appendix \ref{proof_upper} for the proof.
\end{IEEEproof}

Therefore, combining \eqref{eq:converse_total}, \eqref{eq:converse_case1}, \eqref{eq:converse_case2} and the result in Lemma \ref{lemma:linear_program} provides
\begin{align} 
d_{\Sigma} \leq \min( M , \max ( L_{\max}, \eta ) )\notag,
\end{align} 
which completes the converse proof of Theorem \ref{main_thm}.


\subsection{Converse of Theorem \ref{main_thm2}} \label{subsec:converse2}
Notice that the condition $M > L_{\max}$ and $N_{k} > L_{\max}$ in Theorem \ref{main_thm2} is a special class of Case 3 defined in Section \ref{subsec:converse1} satisfying that $\Lambda=\mathcal{K}$.
Therefore, in the same manner in Section \ref{subsec:converse1}, we have  \eqref{dof_result2} with $\Lambda=\mathcal{K}$.
Therefore,
\begin{align}
\frac{d_{k}}{L_{k}}  + \sum\limits_{i =1, i\neq  k}^{K} \frac{d_{i}}{T_{i}}  \leq 1, & \ \  \forall k \in\mathcal{K}, \notag
\end{align}
which completes the converse proof of Theorem \ref{main_thm2}.

\section{Achievability}\label{sec:achievability}

In this section, we prove achievability of Theorems \ref{main_thm}, \ref{main_thm2}, and Corollary \ref{main_thm3}.
The proposed blind IA scheme generalizes those in \cite{Gou:10, Gou:11, Wang:10}, but it cannot be straightforwardly obtained from \cite{Gou:10, Gou:11, Wang:10} due to general antenna configurations of $M$, $\{N_{k}\}_{k\in\mathcal{K}}$, and $\{L_{k}\}_{k\in\mathcal{K}}$ considered in this paper.
For better understanding, we also provide an example for the proposed blind IA scheme based on the two-user case in Appendix \ref{appendix:example}.

\subsection{Achievability of Theorem \ref{main_thm}}
\label{achievability_thm1}

First divide the entire parameter space into four cases as follows:
\begin{itemize}
\item Case 1: $M \leq L_{\max}$.
\item Case 2: $M > L_{\max}$ and $N_{k} \leq L_{\max}$ for all $ k \in \mathcal{K}$.
\item Case 3-1: $M > L_{\max}$, $N_{k} > L_{\max}$ for some $k \in \mathcal{K}$, and $\eta \leq L_{\max}$,
\item Case 3-2: $M > L_{\max}$, $N_{k} > L_{\max}$ for some $k \in \mathcal{K}$, and $\eta > L_{\max}$,
\end{itemize}
where Cases 1 and 2 are identical to those in Section \ref{subsec:converse1} and Case 3-1 and Case 3-2 are two partitions of Case 3 in Section \ref{subsec:converse1}.
The right side of \eqref{eq:sumLDoF} is then given by
\begin{align}  \label{eq:dof_total}
\min( M , \max ( L_{\max} , \eta ) ) =
\begin{cases}
M & \mbox{for Case 1}, \\
L_{\max} & \mbox{for Case 2 or Case 3-1}, \\
\eta & \mbox{for Case 3-2}. \\
\end{cases} 
\end{align}

For Cases 1, 2, and 3-1, the sum DoF is trivially achievable by only supporting the user having the maximum number of RF-chains.
Hence, 
\begin{align} \label{eq:dof1_2}
d_{\Sigma} = 
\begin{cases}
M & \mbox{ for Case 1}, \\
L_{\max} & \mbox{ for Case 2 or Case 3-1} \\
\end{cases}
\end{align}
is achievable.
For the rest of this subsection, we prove that 
\begin{align}
d_{\Sigma} = \eta = \frac{\sum\limits_{k \in \Lambda}\frac{T_{k}L_{k}}{T_{k}-L_{k}}}{1+\sum\limits_{k \in \Lambda}\frac{L_{k}}{T_{k}-L_{k}}} 
\end{align}
is achievable by assuming that $M > L_{\max}$, $N_{i} > L_{\max}$ for some $i \in \mathcal{K}$, and $\eta > L_{\max}$, which is Case 3-2.
For this case, only the users in $\Lambda$ are supported, i.e., $d_{i} = 0$ for all $i \notin \Lambda$.
Suppose that $\Lambda=\{1,2,\cdots,|\Lambda|\}$ without loss of generality.
For easy representation, let us define $S_{i}$, $U_{i}$, and $W_{i}$ for $i \in \Lambda$ and define $U$ and $W$ as
\begin{align}
S_{i} & = \begin{cases}
\frac{T_{i}}{L_{i}}-1 & \mbox{if } T_{i}|L_{i} = 0, \\
T_{i} \setminus L_{i}  & \textrm{otherwise},
\end{cases} \notag 
\\
U_{i} & = S_{i}\prod_{p \in \Lambda\setminus\{i\}} (T_{p} - L_{p}), \notag
\\
W_{i} & = \prod_{p \in \Lambda\setminus\{i\}} S_{p}, \notag
\\
U & = \prod_{p \in \Lambda} (T_{p} - L_{p}), \notag
\\
W & =\prod_{p\in\Lambda}S_{p} \label{def:ach_parameter} .
\end{align}


\subsubsection{Transmit beamforming design} \label{subsubsec:transmit_signal}
To construct transmit beamforming, we adopt a bottom-up approach as in the following steps.

{\bf Step 1} (Alignment block): 
As the first step, we construct alignment blocks, which will be used for building alignment units in the next step.
The basic concept of alignment block in this paper is similar to those in \cite{Gou:11, Wang:10}.
Define $\mathbf{I}_{M,T_{i}} = [\mathbf{I}_{T_{i}} \ \mathbf{0}_{T_{i},M-T_{i}}]^{T}$ and the $j$th information vectors of user $i$ as $\mathbf{s}_{j}^{[i]}\in \mathbb{C}^{L_{i} T_{i}}$, which consists of $L_{i} T_{i}$ independent information symbols, where $j = 1, \cdots, U_{i}W_{i}$. 
Then the $j$th alignment block of user $i$, denoted by $\mathbf{v}_{j}^{[i]} \in \mathbb{C}^{MT_{i}}$, is defined as
\begin{align}
\mathbf{v}_{j}^{[i]} =\left [(\mathbf{v}_{j,1}^{[i]})^{T} (\mathbf{v}_{j,2}^{[i]})^{T} \cdots (\mathbf{v}_{j,S_{i}+1}^{[i]})^{T}\right]^{T} \notag
\end{align}
where for $k=1,\cdots,S_{i}+1$,
\begin{align} \label{eq:align_block}
\mathbf{v}_{j,k}^{[i]} =\begin{cases}
(\boldsymbol{\Phi}\otimes \mathbf{I}_{M,T_{i}})\mathbf{s}_{j}^{[i]}
\in \mathbb{C}^{(T_{i}|L_{i}) M} &\mbox{if } T_{i}|L_{i} \neq 0 \mbox{ and } k=1,\\
(\mathbf{I}_{L_{i}} \otimes \mathbf{I}_{M,T_{i}}) \mathbf{s}_{j}^{[i]}
\in \mathbb{C}^{L_{i} M}&\mbox{otherwise}.
\end{cases}
\end{align}
Here $\boldsymbol{\Phi} \in \mathbb{C}^{T_{i}|L_{i} \times L_{i}}$ is a random matrix whose entries are i.i.d. drawn from a continuous distribution.
From \eqref{eq:align_block}, the following relation holds:
\begin{align}
\label{eq_blockre}
\mathbf{v}_{j,k}^{[i]} = \begin{cases}
(\boldsymbol{\Phi}\otimes \mathbf{I}_{M})\mathbf{v}_{j,S_{i}+1}^{[i]} & \mbox{if } T_{i}|L_{i} \neq 0 \mbox{ and } k = 1, \\
\mathbf{v}_{j,S_{i}+1}^{[i]} & \textrm{otherwise}. 
\end{cases}
\end{align}

{\bf Step 2} (Alignment unit): 
Next, we build an alignment unit using $U_{i}$ alignment blocks.
Specifically, $\mathbf{v}_{1+(j-1)U_{i}}^{[i]}$ through $\mathbf{v}_{jU_{i}}^{[i]}$ are used for building the $j$th  alignment unit of user $i$, denoted by $\mathbf{u}_{j}^{[i]} \in \mathbb{C}^{MT_{i}U_{i}}$ where $j = 1,\cdots, W_{i}$, which is given as 
\begin{align}
\mathbf{u}_{j}^{[i]} =\left [(\mathbf{u}_{j,1}^{[i]})^{T} (\mathbf{u}_{j,2}^{[i]})^{T} \cdots (\mathbf{u}_{j,S_{i}+1}^{[i]})^{T}\right]^{T} \label{eq:align_unit}
\end{align}
where 
\begin{align}
\mathbf{u}_{j,k}^{[i]} =
\left [
\begin{array}{c}
\mathbf{v}_{1+(j-1)U_{i},(1-k)|S_{i}+1}^{[i]} \\
\mathbf{v}_{2+(j-1)U_{i},(2-k)|S_{i}+1}^{[i]} \\
\vdots \\
\mathbf{v}_{jU_{i},(U_{i}-k)|S_{i}+1}^{[i]}
\end{array}
\right] \in \mathbb{C}^{MU} \notag 
\end{align}
for $k = 1,\cdots,S_{i}$ and
\begin{align}
\mathbf{u}_{j,S_{i}+1}^{[i]} =
\left [
\begin{array}{c}
\mathbf{v}_{1+(j-1)U_{i},S_{i}+1}^{[i]} \\
\mathbf{v}_{2+(j-1)U_{i},S_{i}+1}^{[i]} \\
\vdots \\
\mathbf{v}_{jU_{i},S_{i}+1}^{[i]}
\end{array}
\right] \in \mathbb{C}^{L_{i}MU_{i}}.
\notag
\end{align}
%

From \eqref{eq_blockre}, the following relations hold for $k = 1,\cdots,S_{i}$:
\begin{align}
\label{eq_blockre2}
\mathbf{u}_{j,k}^{[i]} =  \begin{cases}
\mathbf{u}_{j,S_{i}+1}^{[i]} & \mbox{if } T_{i}|L_{i} = 0,\\
(\mathbf{I}_{{U_{i}}/{S_{i}}} \otimes \mathbf{Q}_{k}^{[i]} \otimes \mathbf{I}_{M})\mathbf{u}_{j,S_{i}+1}^{[i]} & \textrm{otherwise}\\
\end{cases}
\end{align}
where $\mathbf{Q}_{k}^{[i]}\in \mathbb{C}^{(T_{i}-L_{i}) \times L_{i}S_{i}}$ is the block-diagonal matrix consisting of $S_{i}$ blocks whose blocks are all $\mathbf{I}_{L_{i}}$ except that the $k$th block is $\boldsymbol{\Phi}$. 
For convenience, let us call $\mathbf{u}_{j,k}^{[i]}$ as the $k$th  sub-unit  of $\mathbf{u}_{j}^{[i]}$. 



{\bf Step 3} (Transmit signal vector for user $i$): We then construct the transmit signal vector for user $i$ using $\mathbf{u}_{1}^{[i]}$ through $\mathbf{u}_{W_{i}}^{[i]}$.
The transmit signal for user $i$, denoted by $\mathbf{x}_{i}\in\mathbb{C}^{MUW + M\sum_{i \in \Lambda} L_{i}U_{i}W_{i}}$, is defined as
\begin{align}
\label{eq_x1x2}
\mathbf{x}_{i} = \left[ \mathbf{x}_{i,1}^{T} \ \mathbf{0}_{C_{1,i-1}\times 1}^{T} \ \mathbf{x}_{i,2}^{T} \ \mathbf{0}_{C_{i+1,|\Lambda|}\times 1}^{T} \right ]^{T}
\end{align}
where $C_{l,m} = \sum_{p=l}^{m}L_{p}MU_{p}W_{p}$ and $\mathbf{x}_{i,1}$ consists of $\{ \mathbf{u}_{j,k}^{[i]} \}_{k=1,\cdots,S_{i}}^{j=1,\cdots,W_{i}}$, total $W$ sub-units, and $\mathbf{x}_{i,2}$ consists of $\{ \mathbf{u}_{j,k}^{[i]} \}_{k=S_{i}+1}^{j=1,\cdots,W_{i}}$, total $W_{i}$ sub-units, defined as in the followings.
\begin{align}
\label{eq_x1x3}
\mathbf{x}_{i,1} = \left [
\begin{array}{c}
\mathbf{u}_{f^{[i]}(1)}^{[i]}
\\
\mathbf{u}_{f^{[i]}(2)}^{[i]}
\\
\vdots
\\
\mathbf{u}_{f^{[i]}(W)}^{[i]}
\end{array}
\right ] \in \mathbb{C}^{M U W},  
\ 
\mathbf{x}_{i,2} = \left [
\begin{array}{c}
\mathbf{u}_{1,S_{i}+1}^{[i]}
\\
\mathbf{u}_{2,S_{i}+1}^{[i]}
\\
\vdots
\\
\mathbf{u}_{W_{i},S_{i}+1}^{[i]}
\end{array}
\right ] \in \mathbb{C}^{ L_{i} M U_{i} W_{i}}
\end{align}
Here, $f^{[i]}$ for ${i\in\Lambda}$ is a function on $\{l \in \mathbb{N} :1 \leq l \leq W \}$ to $\{(j,k) \in \mathbb{N}^{2} : 1 \leq j \leq W_{i}, 1 \leq k \leq S_{i} \}$, defined by $f^{[i]}(l) = (f_{1}^{[i]}(l), f_{2}^{[i]}(l))$ such that
\begin{align}\label{eq:g}
&f_{1}^{[i]}(l) = ((l-1)\setminus\prod_{p=1}^{i}S_{p}) \prod_{p=1}^{i-1}S_{p} + 1 + (l-1)|\prod_{p=1}^{i-1}S_{p}, \notag
\\ 
&f_{2}^{[i]}(l) = ((l-1)|\prod_{p=1}^{i}S_{p}) \setminus \prod_{p=1}^{i-1}S_{p}   +1. 
\end{align}
The following lemma shows that every element of $\{ \mathbf{u}_{j,k}^{[i]} \}_{k=1,\cdots,S_{i}}^{j=1,\cdots,W_{i}}$ appears once in $\mathbf{x}_{i,1}$. 
\begin{lemma}\label{lemma:inverse}
Let $\mathcal{A} = \{l \in \mathbb{N} : 1 \leq l \leq W \}$, $\mathcal{B} = \{(j,k) \in \mathbb{N}^{2} : 1 \leq j \leq W_{i}, 1 \leq k \leq S_{i} \}$.
Let $f^{[i]}$ for $i\in\Lambda$ be a function on $\mathcal{A}$ to $\mathcal{B}$ defined in \eqref{eq:g} and 
let $g^{[i]}$ for $i\in\Lambda$ be a function on $\mathcal{B}$ to $\mathcal{A}$ defined by
\begin{align}
g^{[i]}(j,k) = 1 +  ((j-1)\setminus\prod_{p=1}^{i-1}S_{p}) \prod_{p=1}^{i}S_{p} +  (j-1)|\prod\limits_{p = 1}^{i-1}S_{p} + (k-1)\prod\limits_{p=1}^{i-1}S_{p}. \label{eq:f}
\end{align}
Then $g^{[i]}$ is the inverse function of $f^{[i]}$. 
\end{lemma}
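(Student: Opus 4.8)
The plan is to read each quotient ($\setminus$) and remainder ($|$) operation appearing in \eqref{eq:g} and \eqref{eq:f} as an operation on the mixed-radix representation of an integer in $\{0,1,\dots,W-1\}$ with radices $S_1,\dots,S_{|\Lambda|}$. I would set $P_0=1$ and $P_m=\prod_{p=1}^{m}S_p$ for $1\le m\le|\Lambda|$, so that $P_i=P_{i-1}S_i$, $P_{|\Lambda|}=W$, and (since $\Lambda=\{1,\dots,|\Lambda|\}$ here) $W_i=P_{i-1}\cdot(W/P_i)$. Because $\mathcal{A}$ and $\mathcal{B}$ are finite with $|\mathcal{A}|=W=W_iS_i=|\mathcal{B}|$, it is enough to prove that $g^{[i]}$ is a left inverse of $f^{[i]}$; injectivity of $f^{[i]}$ then forces bijectivity, and hence $g^{[i]}=(f^{[i]})^{-1}$.

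First I would set up the digit dictionary. For $l\in\mathcal{A}$ the integer $l-1$ has a unique expansion $l-1=\sum_{m=1}^{|\Lambda|}a_mP_{m-1}$ with $0\le a_m\le S_m-1$, and one checks directly that $(l-1)\,|\,P_{i-1}=\sum_{m=1}^{i-1}a_mP_{m-1}$, $(l-1)\,|\,P_{i}=\sum_{m=1}^{i}a_mP_{m-1}$, and $(l-1)\setminus P_i=\sum_{m=i+1}^{|\Lambda|}a_m(P_{m-1}/P_i)$, the last because every place value $P_{m-1}$ with $m\ge i+1$ is a multiple of $P_i$. Substituting into \eqref{eq:g} and simplifying gives $f_2^{[i]}(l)=a_i+1$ and $f_1^{[i]}(l)=1+\left(\sum_{m=i+1}^{|\Lambda|}a_m(P_{m-1}/P_i)\right)P_{i-1}+\sum_{m=1}^{i-1}a_mP_{m-1}$ --- in words, $f^{[i]}$ deletes the digit $a_i$, shifts the higher digits down one position, and reports $k-1=a_i$. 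A telescoping estimate, $\sum_{m=i+1}^{|\Lambda|}(S_m-1)(P_{m-1}/P_i)=W/P_i-1$, combined with $\sum_{m=1}^{i-1}a_mP_{m-1}\le P_{i-1}-1$, then shows $1\le f_1^{[i]}(l)\le W_i$, confirming that $f^{[i]}$ maps into $\mathcal{B}$; the bound $0\le a_i\le S_i-1$ gives $1\le f_2^{[i]}(l)\le S_i$.

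Next I would compose. Writing $j=f_1^{[i]}(l)$ and $b=\sum_{m=i+1}^{|\Lambda|}a_m(P_{m-1}/P_i)$, the inequality $\sum_{m=1}^{i-1}a_mP_{m-1}<P_{i-1}$ applied to $j-1=bP_{i-1}+\sum_{m=1}^{i-1}a_mP_{m-1}$ yields $(j-1)\setminus P_{i-1}=b$ and $(j-1)\,|\,P_{i-1}=\sum_{m=1}^{i-1}a_mP_{m-1}$. Substituting these, together with $k-1=a_i$ and $bP_i=\sum_{m=i+1}^{|\Lambda|}a_mP_{m-1}$, into \eqref{eq:f} collapses the sum:
\[
g^{[i]}(f^{[i]}(l))=1+\sum_{m=1}^{i-1}a_mP_{m-1}+a_iP_{i-1}+\sum_{m=i+1}^{|\Lambda|}a_mP_{m-1}=1+\sum_{m=1}^{|\Lambda|}a_mP_{m-1}=l.
\]
Thus $g^{[i]}\circ f^{[i]}$ is the identity on $\mathcal{A}$; a symmetric telescoping estimate confirms $g^{[i]}$ maps $\mathcal{B}$ into $\mathcal{A}$, and the cardinality argument finishes the proof.

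The only real obstacle is the index bookkeeping --- matching each quotient/remainder expression in \eqref{eq:g} and \eqref{eq:f} to the correct mixed-radix digit, keeping the ``$l-1$ versus $l$'' shift and the convention $P_0=1$ consistent, and verifying the telescoping range estimates so that $f^{[i]}$ and $g^{[i]}$ genuinely have codomains $\mathcal{B}$ and $\mathcal{A}$. Once the digit dictionary is in place, the cancellation in the composition step is a single line, so I expect the write-up to be short.
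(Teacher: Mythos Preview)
Your proposal is correct and, at its core, does the same thing the paper does: verify $g^{[i]}\big(f^{[i]}(l)\big)=l$ for every $l\in\mathcal{A}$. The paper's proof is a three-line direct substitution, using only the identity $(l-1)\,|\,P_{i-1}=\big((l-1)\,|\,P_i\big)\,|\,P_{i-1}$ to collapse the expression to $\big((l-1)\setminus P_i\big)P_i+(l-1)\,|\,P_i+1=l$; it does not introduce the mixed-radix expansion explicitly, nor does it check codomains or invoke the cardinality argument. Your mixed-radix framing is a cleaner way to organize the same bookkeeping and makes the ``delete the $i$th digit / reinsert it'' interpretation of $f^{[i]}$ and $g^{[i]}$ transparent, and your added verification that $f^{[i]}$ and $g^{[i]}$ actually land in $\mathcal{B}$ and $\mathcal{A}$, together with $|\mathcal{A}|=|\mathcal{B}|$, fills a gap the paper leaves implicit. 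So: same approach, but your version is more complete and arguably more readable.
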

\begin{IEEEproof}
We refer to Appendix \ref{properties1} for the proof.
\end{IEEEproof}


{\bf Step 4} (Transmit signal vector): Finally, transmit signal vector $\mathbf{x}^{n}$ is the sum of the transmit signal vector for each user as
\begin{align}
\label{eq_total}
\mathbf{x}^{n} = \sum\limits_{i\in\Lambda}\mathbf{x}_{i}
\end{align}
where
\begin{align}
\label{eq_duration}
n = UW + \sum_{i \in \Lambda} L_{i}U_{i}W_{i}
\end{align}
because $\mathbf{x}_{i}\in\mathbb{C}^{MUW + M\sum_{i \in \Lambda} L_{i}U_{i}W_{i}}$.
That is, at time $t=1 ,\cdots, UW + \sum_{i \in \Lambda} L_{i}U_{i}W_{i}$, the transmitter sends from the $((t-1)M+1)$th to the $(tM)$th elements of $ \sum_{i\in\Lambda}\mathbf{x}_{i}$ through $M$ antennas.

\subsubsection{Mode switching patterns at receivers} \label{subsubsec:mode_switching}
Based on the proposed transmit beamforming stated above, we design the mode switching patterns at receivers, which is fixed regardless of channel realizations.
From \eqref{eq_x1x2} and \eqref{eq_total}, we have
\begin{align}
\mathbf{x}^{n} = 
[\underbrace{(\sum_{i\in\Lambda}\mathbf{x}_{i,1})^{T}}_{\mbox{\footnotesize block}\ 1}\ \underbrace{\mathbf{x}_{1,2}^{T} \ \mathbf{x}_{2,2}^{T} \ \cdots \ \mathbf{x}_{|\Lambda|,2}^{T}}_{\mbox{\footnotesize block}\ 2} ]^{T}. \label{eq:total1}
\end{align}
Let us denote $\sum_{i\in\Lambda}\mathbf{x}_{i,1}$ and $[\mathbf{x}_{1,2}^{T} \ \mathbf{x}_{2,2}^{T} \ \cdots \ \mathbf{x}_{|\Lambda|,2}^{T}]^{T}$ in $\mathbf{x}^{n}$ as block 1 and block 2  respectively.
Subsequently, the received signal vector of user $i$ is divided as
\begin{align}
\mathbf{y}_{i}^{n} = \left [ \mathbf{y}_{i,0}^{T} \ \mathbf{y}_{i,1}^{T} \cdots \mathbf{y}_{i,|\Lambda|}^{T} \right ]^{T}
\notag
\end{align}
where $\mathbf{y}_{i,0}$ and $\mathbf{y}_{i,j}$ for $j = 1, \cdots, |\Lambda|$ are the received signal vectors induced by $\sum_{i\in\Lambda}\mathbf{x}_{i,1}$ and $\mathbf{x}_{j,2}$ respectively.
Now, we design each user's mode switching pattern during blocks 1 and 2 in the following.
For convenience, we simply call a selection pattern (of user $i$ at time $t$) to denote a specific selection matrix $\mathbf{\Gamma}_i(t)$.
We omit rigorous description of selection patterns, nonetheless one can infer them from associated channel matrices induced by selection matrices, i.e., $\mathbf{\Gamma}_i(t)\mathbf{H}_{i}$.

{\bf Mode switching pattern during block 1}:
From \eqref{eq_x1x3}, block 1 is divided as
\begin{align}
\sum_{i\in\Lambda}\mathbf{x}_{i,1} = \left [
\begin{array}{c}
\sum\limits_{i \in \Lambda}\mathbf{u}_{f^{[i]}(1)}^{[i]} \\
\vdots \\
\sum\limits_{i \in \Lambda}\mathbf{u}_{f^{[i]}(W)}^{[i]}
\end{array}
\right ]. \label{eq:block1}
\end{align}

Note that the time interval for transmitting block 1 is 
\begin{align}\label{eq:time_block1}
1\leq t\leq UW.
\end{align}
During block 1, user $i$ exploits a set of $S_{i}$ selection patterns repeatedly over the entire time interval in \eqref{eq:time_block1}.
The channel matrix associated with the $j$th selection pattern, denoted by $\mathbf{H}_{i,j} \in \mathbb{C}^{L_{i} \times M}$, is given by 
\begin{align}
\begin{array}{cc}
\mathbf{H}_{i,j} = \left [ 
\begin{array}{c}
\mathbf{h}_{i,1+(j-1)L_{i}} \\
\mathbf{h}_{i,2+(j-1)L_{i}} \\
\vdots \\
\mathbf{h}_{i,jL_{i}}
\end{array}
\right ] & j = 1,\cdots, S_{i} \\
\end{array} \notag
\end{align}
where $\mathbf{h}_{k,l}\in \mathbb{C}^{1\times M}$ is the $l$th row vector of $\mathbf{H}_{k}$ for $k\in\Lambda$ and $l = 1,\cdots, N_{k}$.
%
For this case, at each time instant,  each user chooses the selection pattern of which index is same as that of the currently transmitted  sub-unit  of his transmit signal vector.
One can see from \eqref{eq:block1} that the sub-unit of user $i$ transmitted at time $t=1,\cdots,UW$ is given by $\mathbf{u}_{f^{[i]}(l(t))}^{[i]}$ where $l(t) = 1 + (t-1)\setminus U$.
Then, at time $t=1,\cdots,UW$, the user $i$ receives the transmit signal vector using the $f_{2}^{[i]}(l(t))$th selection pattern, associated with $\mathbf{H}_{i,f_{2}^{[i]}(l(t))}$.
As a result, the received signal vector of user $i$ during $t=1,\cdots,UW$ is given by
\begin{align} \label{eq:y_i0}
\mathbf{y}_{i,0}= 
\left [
\begin{array}{c}
(\mathbf{I}_{U} \otimes \mathbf{H}_{i,f_{2}^{[i]}(1)} )\sum\limits_{j \in \Lambda}\mathbf{u}_{f^{[j]}(1)}^{[j]}
\\
(\mathbf{I}_{U} \otimes \mathbf{H}_{i,f_{2}^{[i]}(2)} )\sum\limits_{j \in \Lambda}\mathbf{u}_{f^{[j]}(2)}^{[j]}
\\
\vdots
\\
(\mathbf{I}_{U} \otimes \mathbf{H}_{i,f_{2}^{[i]}(W)} ) \sum\limits_{j \in \Lambda}\mathbf{u}_{f^{[j]}(W)}^{[j]}
\end{array} \right ]. 
\end{align}

{\bf Mode switching pattern during block 2}:
We divide block 2 into desired signal and interference signal parts of user $i$, in which
desired signal part is $\mathbf{x}_{i,2}$  and interference signal part is the rest of block 2 except $\mathbf{x}_{i,2}$.
Note that the time interval for transmitting $\mathbf{x}_{i,2}$ is 
\begin{align}\label{eq:time_block2_desired}
a_{i}+1 \leq t \leq a_{i+1}
\end{align}
where $a_{i} = UW + \sum_{p=1}^{i-1}L_{p}U_{p}W_{p}$.
Let us define $\mathbf{H}_{i,S_{i}+1} \in \mathbb{C}^{(T_{i} - L_{i}S_{i}) \times M}$ and $\mathbf{H}_{i,j,k} \in \mathbb{C}^{(L_{i}-T_{i}|L_{i}) \times M}$ as
\begin{align} \label{eq:H_i_S}
\mathbf{H}_{i,S_{i}+1} = \left [ 
\begin{array}{c}
\mathbf{h}_{i,L_{i}S_{i}+1} \\
\mathbf{h}_{i,L_{i}S_{i}+2} \\
\vdots \\
\mathbf{h}_{i,T_{i}}
\end{array}
\right ] 
\end{align} 
and
\begin{align} \label{eq:H_ijk}
\mathbf{H}_{i,j,k} = 
\left [ 
\begin{array}{c}
\mathbf{h}_{i, (j-1) L_{i} + (k-1)|L_{i}+1} \\
\mathbf{h}_{i, (j-1) L_{i} + k|L_{i}+1} \\
\vdots \\
\mathbf{h}_{i, (j-1) L_{i} + (k-2 + L_{i}-(T_{i}|L_{i}))|L_{i}+1}
\end{array}
\right ] \ 
\end{align}
for $j = 1,\cdots, S_{i}$ and $k = 1,\cdots, L_{i}$.

First consider the desired signal part of user $i$ during block 2.
For the transmission of $\mathbf{x}_{i,2}$, the mode switching pattern of user $i$ differs according to the value of $T_{i}|L_{i}$.
If $T_{i}|L_{i}=0$, then user $i$ exploits a single selection pattern repeatedly over the entire time interval in \eqref{eq:time_block2_desired}. 
The channel matrix associated with the selection pattern is given by \eqref{eq:H_i_S}, where $\mathbf{H}_{i,S_{i}+1}  \in \mathbb{C}^{L_{i} \times M}$ in this case.
%
If $T_{i}|L_{i}\neq 0$, then 
 user $i$ exploits a set of $L_{i}S_{i}$ selection patterns repeatedly over the entire time interval in \eqref{eq:time_block2_desired}, i.e., the number of $L_{i}U_{i}W_{i}/(L_{i}S_{i}) = \prod_{p \in \Lambda \setminus \{i\}} S_{p} (T_{p}-L_{p})$ repetitions.
 The channel matrix associated with the $j$th selection pattern, denoted by $\mathbf{H}_{i,S_{i}+1,j}  \in \mathbb{C}^{L_{i} \times M}$ for $j = 1 , \cdots ,  L_{i}S_{i}  $ which can be constructed from \eqref{eq:H_i_S} and \eqref{eq:H_ijk}, is given by 
\begin{align}
\label{eq_chan}
\mathbf{H}_{i,S_{i} + 1,j} = 
\left [ 
\begin{array}{c}
\mathbf{H}_{i,S_{i}+1}
\\
\mathbf{H}_{i,(j-1)\setminus L_{i}+1, (j-1)|L_{i} + 1}
\end{array}
\right ].
\end{align}
Then, at time $t=a_i+1,\cdots,a_{i+1}$, user $i$ receives the transmit signal vector using the selection pattern corresponding to $\mathbf{H}_{i,S_{i}+1,l_{i}(t)}$ where
$l_{i}(t) =1 +  (t-a_{i}-1)|(L_{i}S_{i})$.
As a result, the received signal vector of user $i$ during $t=a_i+1,\cdots,a_{i+1}$ is given by
\begin{align}
\mathbf{y}_{i,i}= 
\left [
\begin{array}{c}
( \mathbf{I}_{U_{i}/S_{i}} \otimes \mathbf{H}_{i,S_{i}+1}')\mathbf{u}_{1,S_{i}+1}^{[i]}
\\
( \mathbf{I}_{U_{i}/S_{i}} \otimes \mathbf{H}_{i,S_{i}+1}')\mathbf{u}_{2,S_{i}+1}^{[i]}
\\
\vdots
\\
( \mathbf{I}_{U_{i}/S_{i}} \otimes \mathbf{H}_{i,S_{i}+1}')\mathbf{u}_{W_{i},S_{i}+1}^{[i]}
\end{array}
\right ] \label{eq_rs3_desired}
\end{align}
where 
\begin{align}
\mathbf{H}_{i,S_{i}+1}' = \begin{cases}
\mathbf{I}_{L_{i}S_{i}} \otimes \mathbf{H}_{i,S_{i}+1} & \mbox{if } T_{i}|L_{i}=0, \\
\operatorname{diag}\left ( \mathbf{H}_{i,S_{i}+1,1},\mathbf{H}_{i,S_{i}+1,2}, \cdots, \mathbf{H}_{i,S_{i}+1,L_{i}S_{i}} \right ) & \textrm{otherwise}.
\end{cases}
\notag
\end{align}

Now consider the interference signal part of user $i$ during block 2.
From \eqref{eq:time_block2_desired}, the time interval for transmitting $\mathbf{x}_{i',2}$, where $i' \in \Lambda\setminus \{i\}$, is given by
\begin{align} \label{eq:time_block2_intf}
a_{i'}+1 \leq t \leq a_{i'+1}.
\end{align}
For the transmission of  $\mathbf{x}_{i',2}$, user $i$ exploits the same set of $S_{i}$ selection patterns used for block 1 again over the entire time interval in \eqref{eq:time_block2_intf}.
For this case, at each time instant, each user chooses the selection pattern of which index is the same as that used to receive the first sub-unit of the alignment unit to which the currently transmitted sub-unit belongs.
Specifically, from \eqref{eq_x1x3}, the sub-unit of user $i'$ transmitted at time $t= a_{i'}+1 ,\cdots, a_{i'+1}$ is given by $\mathbf{u}_{l(t),S_{i'}+1}^{[i']}$ where $l(t) = 1+ (t-1-a_{i'}) \setminus (L_{i'}U_{i'})$. 
Since $\mathbf{u}_{l(t),1}^{[i']} = \mathbf{u}_{f^{[i']}(g^{[i']}(l(t),1))}^{[i']}$ from Lemma \ref{lemma:inverse},
$\mathbf{u}_{l(t),1}^{[i']}$ is a summand of $\sum_{j \in \Lambda}\mathbf{u}_{f^{[j]}(g^{[i']}(l(t),1))}^{[j]}$, 
which means from \eqref{eq:block1} that $\mathbf{u}_{l(t),1}^{[i']}$ is transmitted simultaneously with $\mathbf{u}_{f^{[i]}(g^{[i']}(l(t),1))}^{[i]}$ in block 1. That is, user $i$ exploits the $f_{2}^{[i]}(g^{[i']}(l(t),1))$th selection pattern to receive $\mathbf{u}_{f^{[i]}(g^{[i']}(l(t),1))}^{[i]}$
so that, at time $t= a_{i'}+1 ,\cdots, a_{i'+1}$, user $i$ receives the transmit signal vector using the $f_{2}^{[i]}(g^{[i']}(l(t),1))$th selection pattern, associated with $\mathbf{H}_{i,f_{2}^{[i]}(g^{[i']}(l(t),1))}$.
As a result, the received signal vector of user $i$ induced by $\mathbf{x}_{i',2}$ is
\begin{align}
\label{eq_rs3}
\mathbf{y}_{i,i'} = 
\left [
\begin{array}{c}
(\mathbf{I}_{L_{i}U_{i}} \otimes \mathbf{H}_{i,f_{2}^{[i]}(g^{[i']}(1,1))})\mathbf{u}_{1,S_{i'}+1}^{[i']}
\\
(\mathbf{I}_{L_{i}U_{i}} \otimes \mathbf{H}_{i,f_{2}^{[i]}(g^{[i']}(2,1))})\mathbf{u}_{2,S_{i'}+1}^{[i']}
\\
\vdots
\\
(\mathbf{I}_{L_{i}U_{i}} \otimes \mathbf{H}_{i,f_{2}^{[i]}(g^{[i']}(W_{i'},1))} ) \mathbf{u}_{W_{i'},S_{i'}+1}^{[i']}
\end{array} \right ] .
\end{align}

\subsubsection{Interference cancellation at receivers}
In the following, we show that user $i$ can eliminate all interference signals contained in $\mathbf{y}_{i,0}$ in \eqref{eq:y_i0} using the received interference signal parts during block 2, i.e., $\mathbf{y}_{i,i'}$ in \eqref{eq_rs3} for all $i'\in\Lambda\setminus\{i\}$. 
First, we introduce the following lemma, which plays a key role to verify such interference cancellation.
\begin{lemma}\label{lemma:const}
Let $\mathcal{A} = \{l \in \mathbb{N} : 1 \leq l \leq W \}$, $\mathcal{B} = \{(j,k) \in \mathbb{N}^{2} : 1 \leq j \leq W_{i}, 1 \leq k \leq S_{i} \}$.
Let $f^{[i]}$ for $i\in\Lambda$ be a function on $\mathcal{A}$ to $\mathcal{B}$ defined in \eqref{eq:g} and 
let $g^{[i]}$ for $i\in\Lambda$ be a function on $\mathcal{B}$ to $\mathcal{A}$ defined in \eqref{eq:f}.
For $i,i' \in \Lambda$ where $i\neq i'$ and $(j,k), (j,k') \in \mathcal{B}$, the following relation holds:
\begin{align}f_{2}^{[i]}(g^{[i']}(j,k)) = f_{2}^{[i]}(g^{[i']}(j,k')) \notag
\end{align}
\end{lemma}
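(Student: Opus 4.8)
The plan is to interpret every index occurring in the statement through its mixed‑radix expansion in the ``bases'' $S_1,\dots,S_{|\Lambda|}$. Write $\pi_m:=\prod_{p=1}^{m}S_p$ for $0\le m\le|\Lambda|$, so that $\pi_0=1$ and $\pi_{|\Lambda|}=W$. The first step is to observe that, directly from \eqref{eq:g}, the map $f_2^{[i]}$ is exactly the extraction of the $i$‑th mixed‑radix digit: for every $l\in\mathcal{A}$,
\[
f_2^{[i]}(l)-1 \;=\; D_i(l-1), \qquad\text{where}\qquad D_i(m):=\bigl(m \bmod \pi_{i}\bigr)\setminus \pi_{i-1}.
\]
The second step is to rewrite $g^{[i']}$ from \eqref{eq:f} in the same language: putting $r:=(j-1)\bmod\pi_{i'-1}$ and $q:=(j-1)\setminus\pi_{i'-1}$ (so $0\le r<\pi_{i'-1}$ and, since $j-1<W_{i'}=W/S_{i'}$, also $0\le q<W/\pi_{i'}$), equation \eqref{eq:f} becomes
\[
g^{[i']}(j,k)-1 \;=\; r \;+\; (k-1)\,\pi_{i'-1} \;+\; q\,\pi_{i'}.
\]
A short range check, in which the three summands have maximal values summing to $W-1$, confirms that $g^{[i']}(j,k)\in\mathcal{A}$, and, more to the point, exhibits $g^{[i']}(j,k)-1$ as the integer whose digits below position $i'$ come from $r$, whose $i'$‑th digit is $k-1$, and whose digits above position $i'$ come from $q$.

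With this dictionary in place, the claim follows from a case split on the position of $i$ relative to $i'$, the case $i=i'$ being excluded by hypothesis. If $i<i'$, then $\pi_i$ divides both $\pi_{i'-1}$ and $\pi_{i'}$, so $g^{[i']}(j,k)-1\equiv r\pmod{\pi_i}$ irrespective of $k$; since $D_i(m)$ depends on $m$ only through $m\bmod\pi_i$, we obtain $D_i(g^{[i']}(j,k)-1)=D_i(g^{[i']}(j,k')-1)$. If $i>i'$, set $a:=r+(k-1)\pi_{i'-1}$; one checks $0\le a<\pi_{i'}\le\pi_{i-1}$, so writing $g^{[i']}(j,k)-1=a+q\pi_{i'}$, reducing modulo $\pi_i$ and then dividing by $\pi_{i-1}$ annihilates the contribution of $a$ and leaves $(q\bmod(\pi_i/\pi_{i'}))\setminus(\pi_{i-1}/\pi_{i'})$, which does not involve $k$. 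In either case $f_2^{[i]}(g^{[i']}(j,k))=f_2^{[i]}(g^{[i']}(j,k'))$, which is the assertion. It is convenient to isolate the two elementary facts about $D_i$ used here, namely invariance under adding multiples of $\pi_i$, and $D_i(a+b\pi_{i'})$ being independent of $a$ whenever $0\le a<\pi_{i'}\le\pi_{i-1}$, as a one‑line sublemma before the case analysis.

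The conceptual content is merely that the mixed‑radix digits of an integer are independent coordinates and that $g^{[i']}$ overwrites only coordinate $i'$. Accordingly I do not expect any genuine obstacle; the only thing requiring care is the bookkeeping with the quotient ($\setminus$) and remainder ($|$) operators and keeping the range constraints $0\le r<\pi_{i'-1}$, $0\le k-1\le S_{i'}-1$, $0\le q<W/\pi_{i'}$ straight, in particular verifying the inequality $a<\pi_{i'}$ that legitimizes the ``no‑carry'' reduction in the case $i>i'$. Everything else reduces to elementary integer arithmetic.
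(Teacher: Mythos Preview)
Your proof is correct and follows essentially the same route as the paper: both split into the cases $i<i'$ and $i>i'$, rewrite $g^{[i']}(j,k)-1$ as $r+(k-1)\pi_{i'-1}+q\,\pi_{i'}$, and use the range bounds $r<\pi_{i'-1}$, $k-1<S_{i'}$ to perform the same no-carry reductions modulo $\pi_i$ and $\pi_{i-1}$, arriving at the identical $k$-free expression $(q\bmod(\pi_i/\pi_{i'}))\setminus(\pi_{i-1}/\pi_{i'})$ in the harder case. Your mixed-radix/digit-extraction language for $f_2^{[i]}$ and the shorthand $\pi_m$ make the bookkeeping cleaner than the paper's bare quotient/remainder manipulations, but the argument is the same.
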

\begin{IEEEproof}
We refer to Appendix \ref{properties2} for the proof.
\end{IEEEproof}

Consider $(\mathbf{I}_{U} \otimes \mathbf{H}_{i,f_{2}^{[i]}(l)} )\mathbf{u}_{f^{[i']}(l)}^{[i']}$ for $i'\in\Lambda\setminus\{i\}$ and $1\leq l \leq W$, which is an interference vector in $\mathbf{y}_{i,0}$.
We have
\begin{align} \label{eq_rsequ}
(\mathbf{I}_{U} \otimes \mathbf{H}_{i,f_{2}^{[i]}(l)} )\mathbf{u}_{f^{[i']}(l)}^{[i']} 
&= (\mathbf{I}_{U} \otimes \mathbf{H}_{i,f_{2}^{[i]}(g^{[i']}(f^{[i']}(l)))})\mathbf{u}_{f^{[i']}(l)}^{[i']}\notag \\
& = (\mathbf{I}_{U} \otimes \mathbf{H}_{i,f_{2}^{[i]}(g^{[i']}(f_{1}^{[i']}(l),1))})\mathbf{u}_{f^{[i']}(l)}^{[i']} 
\end{align}
where the first and second equalities follow from Lemma \ref{lemma:inverse} and Lemma \ref{lemma:const} respectively.
If $T_{i'}|L_{i'} = 0$, then, substituting \eqref{eq_blockre2} into \eqref{eq_rsequ}, we have 
\begin{align}
\label{eq_inf1}
(\mathbf{I}_{U} \otimes \mathbf{H}_{i,f_{2}^{[i]}(l)} )\mathbf{u}_{f^{[i']}(l)}^{[i']} 
=(\mathbf{I}_{L_{i}U_{i}} \otimes \mathbf{H}_{i,f_{2}^{[i]}(g^{[i']}(f_{1}^{[i']}(l),1))})\mathbf{u}_{f_{1}^{[i']}(l),S_{i'}+1}^{[i']}
\end{align}
where $U = L_{i}U_{i}$ for this case.
If $T_{i'}|L_{i'} \neq 0$, then, substituting \eqref{eq_blockre2} into \eqref{eq_rsequ}, we have
\begin{align}
(\mathbf{I}_{U} \otimes \mathbf{H}_{i,f_{2}^{[i]}(l)} )\mathbf{u}_{f^{[i']}(l)}^{[i']} 
&=(\mathbf{I}_{U} \otimes \mathbf{H}_{i,f_{2}^{[i]}(g^{[i']}(f_{1}^{[i']}(l),1))}) (\mathbf{I}_{{U_{i}}/{S_{i}}} \otimes \mathbf{Q}_{f_{2}^{[i']}(l)}^{[i']}  \otimes \mathbf{I}_{M})\mathbf{u}_{f_{1}^{[i']}(l),S_{i'}+1}^{[i']} \notag \\
& = (\mathbf{I}_{{U_{i}}/{S_{i}}} \otimes \mathbf{Q}_{f_{2}^{[i']}(l)}^{[i']}\otimes \mathbf{I}_{L_{i}})(\mathbf{I}_{L_{i}U_{i}} \otimes \mathbf{H}_{i,f_{2}^{[i]}(g^{[i']}(f_{1}^{[i']}(l),1))})\mathbf{u}_{f_{1}^{[i']}(l),S_{i'}+1}^{[i']} .
\label{eq_inf2}
\end{align}
Here \eqref{eq_inf2} comes from the following relation:
\begin{align}
\label{eq_inter}
& (\mathbf{I}_{U} \otimes \mathbf{H}_{i,f_{2}^{[i]}(g^{[i']}(f_{1}^{[i']}(l),1))})(\mathbf{I}_{{U_{i}}/{S_{i}}} \otimes \mathbf{Q}_{f_{2}^{[i']}(l)}^{[i']}  \otimes \mathbf{I}_{M}) \notag\\
& = (\mathbf{I}_{{U_{i}}/{S_{i}}} \otimes \mathbf{Q}_{f_{2}^{[i']}(l)}^{[i']}  \otimes \mathbf{H}_{i,f_{2}^{[i]}(g^{[i']}(f_{1}^{[i']}(l),1))})  \notag\\
& = ((\mathbf{I}_{{U_{i}}/{S_{i}}} \otimes \mathbf{Q}_{f_{2}^{[i']}(l)}^{[i']}) \mathbf{I}_{L_{i}U_{i}})  
\otimes ( \mathbf{I}_{L_{i}} \mathbf{H}_{i,f_{2}^{[i]}(g^{[i']}(f_{1}^{[i']}(l),1))})  \notag\\
& = (\mathbf{I}_{{U_{i}}/{S_{i}}} \otimes \mathbf{Q}_{f_{2}^{[i']}(l)}^{[i']}\otimes \mathbf{I}_{L_{i}})(\mathbf{I}_{L_{i}U_{i}} \otimes \mathbf{H}_{i,f_{2}^{[i]}(g^{[i']}(f_{1}^{[i']}(l),1))})
\end{align}
where \eqref{eq_inter} follows from the mixed-product property that for matrices $\mathbf{A}$, $\mathbf{B}$, $\mathbf{C}$, and $\mathbf{D}$ in which the matrix products $\mathbf{AC}$ and $\mathbf{BD}$ can be defined, $(\mathbf{A} \otimes \mathbf{B}) (\mathbf{C} \otimes \mathbf{D}) = \mathbf{AC} \otimes \mathbf{BD}$, see \cite[Lemma 4.2.10]{horn1991topics}.

From \eqref{eq_rs3}, user $i$ is able to extract the following vector from $\mathbf{y}_{i,i'}$:
\begin{align}
\label{eq_inf4}
(\mathbf{I}_{L_{i}U_{i}} \otimes \mathbf{H}_{i,f_{2}^{[i]}(g^{[i']}(f_{1}^{[i']}(l),1))})\mathbf{u}_{f_{1}^{[i']}(l),S_{i'}+1}^{[i']}.
\end{align}
Then, user $i$ constructs $(\mathbf{I}_{U} \otimes \mathbf{H}_{i,f_{2}^{[i]}(l)} )\mathbf{u}_{f^{[i']}(l)}^{[i']}$ using \eqref{eq_inf4} from the relations in \eqref{eq_inf1} and \eqref{eq_inf2} and subtracts it from $\mathbf{y}_{i,0}$.
In the same manner, user $i$ can remove all interference vectors in $\mathbf{y}_{i,0}$.

\subsubsection{Achievable LDoF}
Let us denote the remaining signal vector after cancelling all interference vectors in $\mathbf{y}_{i,0}$ as $\mathbf{y}_{i,0}'$.
Combining $\mathbf{y}_{i,0}'$ with $\mathbf{y}_{i,i}$, user $i$ has
\begin{align} \label{eq_result}
\left [
\begin{array}{c}
\mathbf{y}_{i,0}' \\
\mathbf{y}_{i,i} \\
\end{array}
\right ]
=
\left [
\begin{array}{c}
(\mathbf{I}_{U} \otimes \mathbf{H}_{i,f_{2}^{[i]}(1)} )\mathbf{u}_{f^{[i]}(1)}^{[i]}
\\
\vdots
\\
(\mathbf{I}_{U} \otimes \mathbf{H}_{i,f_{2}^{[i]}(\prod_{p \in \Lambda} S_{p})} ) \mathbf{u}_{f^{[i]}(\prod_{p \in \Lambda} S_{p})}^{[i]}
\\
( \mathbf{I}_{U_{i}/S_{i}} \otimes \mathbf{H}_{i,S_{i}+1}')\mathbf{u}_{1,S_{i}+1}^{[i]}
\\
\vdots
\\
( \mathbf{I}_{U_{i}/S_{i}} \otimes \mathbf{H}_{i,S_{i}+1}')\mathbf{u}_{W_{i},S_{i}+1}^{[i]}
\end{array} \right ].
\end{align}
By classifying \eqref{eq_result} by  alignment units,  \eqref{eq_result2} is decomposed into $W_{i}$ segments as follows:
\begin{align}
\label{eq_result2}
\left [
\begin{array}{c}
(\mathbf{I}_{U} \otimes \mathbf{H}_{i,1} )\mathbf{u}_{j,1}^{[i]}
\\
\vdots
\\
(\mathbf{I}_{U} \otimes \mathbf{H}_{i,S_{i}} )\mathbf{u}_{j,S_{i}}^{[i]} 
\\
( \mathbf{I}_{U_{i}/S_{i}} \otimes \mathbf{H}_{i,S_{i}+1}')\mathbf{u}_{j,S_{i}+1}^{[i]}
\end{array} \right ] \ \ j = 1, \cdots , W_{i}.
\end{align}
By classifying \eqref{eq_result2} by  alignment blocks,  \eqref{eq_result} is further decomposed into $U_{i}W_{i}$ segments as follows:
\begin{align}
\label{eq_result3}
\left [
\begin{array}{c}
(\mathbf{I}_{N_{i}} \otimes \mathbf{H}_{i,(j-1)|S_{i}+1} )\mathbf{v}_{j,1}^{[i]}
\\
(\mathbf{I}_{N_{i}} \otimes \mathbf{H}_{i,(j-2)|S_{i}+1} )\mathbf{v}_{j,2}^{[i]}
\\
\vdots
\\
(\mathbf{I}_{N_{i}} \otimes \mathbf{H}_{i,(j-S_{i})|S_{i}+1} )\mathbf{v}_{j,S_{i}}^{[i]} 
\\
\mathbf{H}_{i,S_{i}+1,j}'\mathbf{v}_{j,S_{i}+1}^{[i]}
\end{array} \right ] \ \ j = 1, \cdots , U_{i}W_{i}
\end{align}
where  
\begin{align}
\mathbf{H}_{i,S_{i}+1,j}' =
\begin{cases}
\mathbf{I}_{L_{i}} \otimes \mathbf{H}_{i,S_{i}+1} & \mbox{if } T_{i}|L_{i} = 0, \\
\operatorname{diag}\left( \mathbf{H}_{i,S_{i+1},((j-1)|S_{i})L_{i}+1}, \mathbf{H}_{i,S_{i+1},((j-1)|S_{i})L_{i}+2}, \cdots, \mathbf{H}_{i,S_{i+1},((j-1)|S_{i}+1)L_{i}} \right) & \textrm{otherwise}.
\end{cases} \notag
\end{align}

If $T_{i}|L_{i}=0$, then, substituting \eqref{eq:align_block} into \eqref{eq_result3} and switching the rows, we have
\begin{align}
\label{eq_result4}
& (\mathbf{I}_{L_{i}} \otimes  \mathbf{H}_{i} ) (\mathbf{I}_{L_{i}} \otimes \mathbf{I}_{M,T_{i}}) \mathbf{s}_{j}^{[i]} = (\mathbf{I}_{L_{i}} \otimes  [\mathbf{H}_{i}]_{T_{i}} )\mathbf{s}_{j}^{[i]} 
\end{align}
where $[\mathbf{H}_{i}]_{T_{i}}$ is the leading principal minor of $\mathbf{H}_{i}$ of order $T_{i}$.
Since $(\mathbf{I}_{L_{i}} \otimes  [\mathbf{H}_{i}]_{T_{i}} )$ is non-singular almost surely, user $i$ can obtain $\mathbf{s}_{j}^{[i]}$ from \eqref{eq_result4} almost surely.

If $T_{i}|L_{i} \neq 0$, then, substituting \eqref{eq:align_block} into \eqref{eq_result3}, we have
\begin{align}
\label{eq_result5}
\underbrace{
\left [
\begin{array}{c}
\boldsymbol{\Phi} \otimes \mathbf{H}_{i,(j-1)|S_{i}+1} 
\\
\mathbf{I}_{N_{i}} \otimes \mathbf{H}_{i,(j-2)|S_{i}+1} 
\\
\vdots
\\
\mathbf{I}_{N_{i}} \otimes \mathbf{H}_{i,(j-S_{i})|S_{i}+1}  
\\
\mathbf{H}_{i,S_{i}+1,j}' 
\end{array} \right ]}_{\mathbf{B}_{j}^{[i]}}  (\mathbf{I}_{L_{i}} \otimes \mathbf{I}_{M,T_{i}}) \mathbf{s}_{j}^{[i]} .
\end{align}
It can be easily verified that $\mathbf{B}_{j}^{[i]} = \mathbf{C}_{j}^{[i]} (\mathbf{I}_{L_{i}} \otimes  [\mathbf{H}_{i}]_{T_{i}} )$ where $\mathbf{C}_{j}^{[i]} \in \mathbb{C}^{L_{i}T_{i} \times L_{i}T_{i}}$ for $j = 1, \cdots, U_{i}W_{i}$ is a non-singular matrix almost surely so that  user $i$ can obtain $\mathbf{s}_{j}^{[i]}$ from \eqref{eq_result5} almost surely.
Consequently, user $i$ is able to $\mathbf{s}_{1}^{[i]}$ through $\mathbf{s}_{U_{i}W_{i}}^{[i]}$ almost surely.

Since total $L_{i}T_{i}U_{i}W_{i}$ independent information symbols are delivered almost surely to user $i$ during the period given in \eqref{eq_duration}, the achievable LDoF of user $i$ is given by 
\begin{align}
d_{i}= & \ \frac{\frac{T_{i}L_{i}}{T_{i}-L_{i}}\prod\limits_{p \in \Lambda} S_{p}(T_{p} - L_{p})}
{\prod\limits_{p \in \Lambda} S_{p}(T_{p} - L_{p}) + \sum\limits_{p \in \Lambda} \frac{L_{p}}{T_{p}-L_{p}}\prod\limits_{q \in \Lambda} S_{q}(T_{q} - L_{q}) }  \\
=& \ 
\frac{\frac{T_{i}L_{i}}{T_{i}-L_{i}}}
{1 + \sum\limits_{p \in \Lambda} \frac{L_{p}}{T_{p}-L_{p}}} .
\end{align}
Therefore,
\begin{align} \label{eq:dof3_2}
d_{\Sigma}=\sum\limits_{i=1}^{K}d_{i} = \frac{\sum\limits_{i\in\Lambda}\frac{T_{i}L_{i}}{T_{i}-L_{i}}}
{1 + \sum\limits_{i \in \Lambda} \frac{L_{i}}{T_{i}-L_{i}}} 
\end{align}
is achievable for Case 3-2.
In conclusion, from \eqref{eq:dof_total}, \eqref{eq:dof1_2}, and \eqref{eq:dof3_2}, $d_{\Sigma}=\min( M , \max ( L_{\max} , \eta ) )$ is achievable, which completes the achievability proof of Theorem \ref{main_thm}.

\subsection{Achievability of Theorem \ref{main_thm2}} \label{achievability_thm2}
For notational convenience, we define the inequality in \eqref{eq:ineq1} as $I_{1k}$ and the inequality $d_{k} \geq 0$ as $I_{2k}$ in the rest of this subsection.
Since the LDoF region $\mathcal{D}$ in Theorem \ref{main_thm2} is a polyhedron, it suffices to show that all vertices of $\mathcal{D}$ are achievable. Hence, our achievability proof begins with characterizing vertices of $\mathcal{D}$.
The following lemma establishes a condition for a $K$-tuple in $\mathbb{R}^{K}$ to be a vertex of $\mathcal{D}$.
\begin{lemma}
\label{active}
Consider the LDoF region $\mathcal{D}$ in Theorem \ref{main_thm2}.
If $\mathbf{d} \in \mathbb{R}^{K}$ is a vertex of $\mathcal{D}$, then only $K$ inequalities among $\{I_{1k}, I_{2k} \}_{k \in\mathcal{K}}$ should be active
\footnote{An inequality $f(\mathbf{x}) \leq 0 $ is said to be active at $\mathbf{x}^{*}$ if $f(\mathbf{x}^{*}) = 0$ \cite[Definition 20.1]{chong2013introduction}.}
 at $\mathbf{d}$ while $I_{1k}$ and $I_{2k}$ for $k \in \mathcal{K}$ cannot be active simultaneously at $\mathbf{d}$.

\end{lemma}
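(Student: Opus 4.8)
The plan is to argue purely from general polyhedral theory: $\mathcal{D}$ is a polyhedron in $\mathbb{R}^K$ defined by the $2K$ inequalities $\{I_{1k}, I_{2k}\}_{k\in\mathcal{K}}$, and a point $\mathbf{d}$ is a vertex (extreme point) of $\mathcal{D}$ if and only if the normal vectors of the inequalities active at $\mathbf{d}$ span $\mathbb{R}^K$, i.e.\ there exist at least $K$ active inequalities whose gradients have full rank. So the real content of the lemma is two claims: (i) \emph{exactly} $K$ (not more) active inequalities can be isolated as an independent set, and (ii) $I_{1k}$ and $I_{2k}$ are never active simultaneously. First I would recall that the gradient of $I_{1k}$ is the vector $\mathbf{n}_k$ with $k$th entry $1/L_k$ and $j$th entry $1/T_j$ for $j\neq k$, while the gradient of $I_{2k}$ is $-\mathbf{e}_k$.

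The easy half is claim (ii). If both $I_{1k}$ and $I_{2k}$ were active at $\mathbf{d}$, then $d_k = 0$ and simultaneously $\sum_{j\neq k} d_j/T_j = 1$. But for \emph{any} other index $\ell\neq k$, inequality $I_{1\ell}$ reads $d_\ell/L_\ell + \sum_{j\neq \ell} d_j/T_j \le 1$; using $d_k=0$ this becomes $d_\ell/L_\ell + \sum_{j\neq k,\ell} d_j/T_j \le 1 = \sum_{j\neq k} d_j/T_j = d_\ell/T_\ell + \sum_{j\neq k,\ell} d_j/T_j$, hence $d_\ell/L_\ell \le d_\ell/T_\ell$. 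Since the hypothesis of Theorem \ref{main_thm2} gives $T_\ell = \min(M,N_\ell) > L_{\max} \ge L_\ell$, i.e.\ $1/L_\ell > 1/T_\ell$, this forces $d_\ell = 0$ for all $\ell\neq k$ as well, so $\mathbf{d} = \mathbf{0}$; but $\mathbf{0}$ makes $I_{1k}$ \emph{inactive} ($0 \le 1$ strictly), a contradiction. Hence at most one of $I_{1k}, I_{2k}$ is active for each $k$, giving at most $K$ active inequalities in total.

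For claim (i) I would then argue that a vertex requires the active gradients to span $\mathbb{R}^K$, which needs at least $K$ of them; combined with the "at most $K$" bound just proved, a vertex has \emph{exactly} $K$ active inequalities, one from each pair $\{I_{1k}, I_{2k}\}$, and these $K$ gradients are linearly independent. (The last independence fact can be recorded now or deferred: one shows that any selection of vectors, one of the form $\mathbf{n}_k$ or $-\mathbf{e}_k$ per coordinate $k$, is nonsingular — the $-\mathbf{e}_k$ rows let one pivot down to a smaller system, and the surviving $\mathbf{n}_k$-rows form a matrix $\tfrac{1}{T}\mathbf{1}\mathbf{1}^T + \operatorname{diag}(\cdots)$ type perturbation of a diagonal matrix with distinct-sign adjustments, which is invertible because $1/L_k \neq 1/T_k$.)

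The main obstacle I anticipate is not claim (ii) — that is a short computation — but making claim (i) fully rigorous: one must connect "vertex of $\mathcal{D}$" to "the set of active constraint gradients has rank $K$," and then show that whenever fewer than $K$ constraints are active the point cannot be a vertex (it lies on a line of feasible points). This is standard (e.g.\ \cite[Definition 20.1]{chong2013introduction} style), but needs to be stated carefully because $\mathcal{D}$ is bounded only implicitly; one should verify $\mathcal{D}$ is indeed a bounded polytope (which follows since each $I_{1k}$ together with $d_j\ge 0$ bounds every coordinate: $d_k \le L_k$ and $d_j \le T_j$), so that vertices exist and the active-set characterization applies cleanly.
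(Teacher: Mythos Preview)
Your proposal is correct and follows essentially the same approach as the paper. The paper's argument for claim (ii) is a minor variant of yours: rather than forcing every $d_\ell=0$ and contradicting the activity of $I_{1k}$, it observes that $\sum_i d_i/T_i=1$ implies some $d_{k^*}>0$ and then checks directly that $I_{1k^*}$ is violated; both arguments hinge on the same strict inequality $1/L_\ell>1/T_\ell$ coming from the hypothesis $T_\ell>L_{\max}\ge L_\ell$. For claim (i) the paper simply invokes the same polyhedral fact you cite (a vertex lies on at least $K$ facets), and defers the linear-independence verification you sketch to the subsequent computation of $\mathbf{A}_1^{-1}$ via Lemma~\ref{det}.
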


\begin{IEEEproof}
Assume that $I_{11}$ and $I_{21}$ are active at $\mathbf{d} = (d_{1},\cdots,d_{K}) \in \mathcal{D}$. Combining $I_{11}$ and $I_{21}$, it is followed by
\begin{align}
\sum\limits_{i=1}^{K} \frac{d_{i}}{T_{i}}=1 \notag .
\end{align}
Hence, $\mathbf{d}$ must not be a zero vector and we can find an index $k^{*}\in\mathcal{K}$ such that $I_{2k^{*}}$ is not active at $\mathbf{d}$, i.e., $d_{k^{*}} > 0$.
Then, we have
\begin{align}
\frac{d_{k^{*}}}{L_{k^{*}}}  + \sum\limits_{i =1, i\neq  k^{*}}^{K} \frac{d_{i}}{T_{i}} = 1 + (\frac{1}{L_{k^{*}}} -\frac{1}{T_{k^{*}}}) d_{k^{*}} > 1 \notag,
\end{align}
which means that $\mathbf{d}$ does not satisfy $I_{1k^{*}}$ so that $\mathbf{d} \notin \mathcal{D}$. Contradicting the assumption, $I_{1k}$ and $I_{2k}$ for $k \in \mathcal{K}$ cannot be active simultaneously at $\mathbf{d}$ and,
as a result, for $\mathbf{d} \in \mathcal{D}$, at most $K$ inequalities among $\{I_{1k}, I_{2k} \}_{k \in\mathcal{K}}$ are active at $\mathbf{d}$.
Furthermore, if $\mathbf{d}$ is a vertex of $\mathcal{D}$, then at least $K$ inequalities among $\{I_{1k}, I_{2k} \}_{k \in\mathcal{K}}$ should be active on $\mathbf{d}$ because a vertex of a polyhedron is expressed as an intersection of at least $K$ faces of the polyhedron.
Therefore, only $K$ inequalities among $\{I_{1k}, I_{2k} \}_{k \in\mathcal{K}}$ should be active at $\mathbf{d}$, which completes the proof of Lemma \ref{active}.
\end{IEEEproof}

Consider a $K$-tuple $\mathbf{d} = (d_{1},\cdots,d_{K}) \in \mathbb{R}^{K}$ such that $K$ inequalities among $\{I_{1k}, I_{2k} \}_{k \in\mathcal{K}}$ are active at $\mathbf{d}$ while $I_{1k}$ and $I_{2k}$ for $k \in \mathcal{K}$ are not active simultaneously at $\mathbf{d}$ and
let $\Lambda_{i}=\{k \in \mathcal{K} :  I_{ik} \mbox{ is active at } \mathbf{d} \}$ for $i = 1, 2$.
Note that, from Lemma \ref{active}, $\{ \Lambda_{1}$, $\Lambda_{2}\}$ is a partition of $\mathcal{K}$. 
Assume $\Lambda_{1}=\{1,\cdots,J\}$ and $\Lambda_{2}=\{J+1,\cdots,K\}$ without loss of generality. 
Composing the $K$ inequalities active at $\mathbf{d}$, we have
\begin{align}
\mathbf{A}_{1} [d_{1} \cdots d_{J}]^{T} + \left [ \frac{1}{T_{J+1}} \mathbf{1}_{J\times 1} \ \cdots \  \frac{1}{T_{K}} \mathbf{1}_{J\times 1} \right ][d_{J+1} \cdots d_{K}]^{T}  & = \mathbf{1}_{J\times1} \notag \\
[d_{J+1} \cdots d_{K}]^{T} & = \mathbf{0}_{(K-J)\times 1} \label{eq:inequalities_active}
\end{align}
where
\begin{align}
\begin{array}{l}
\mathbf{A}_{1} = 
\left[ 
\begin{array}{c}
\begin{array}{ccccc}
\frac{1}{N_{1}} & \frac{1}{T_{2}} & \frac{1}{T_{3}} & \cdots & \frac{1}{T_{J}} \\
\frac{1}{T_{1}} & \frac{1}{N_{2}} & \frac{1}{T_{3}} & \cdots & \frac{1}{T_{J}} \\
\vdots & \vdots & \vdots & \ddots & \vdots  \\
\frac{1}{T_{1}} & \frac{1}{T_{2}} & \frac{1}{T_{3}} & \cdots & \frac{1}{N_{J}} \\
\end{array}
\end{array}
\right]
\end{array} \notag .
\end{align}
Since $\mathbf{A}_{1}$ is non-singular from Lemma \ref{det} in Appendix \ref{proof_upper}, from \eqref{eq:inequalities_active}, we have
\begin{align}
\mathbf{d} = [(\mathbf{A}_{1}^{-1} \mathbf{1}_{J \times 1})^{T} \ \mathbf{0}_{(K-J) \times 1}^{T}]^{T} \notag
\end{align}
From \eqref{eq:eta}, $\mathbf{A}_{1}^{-1} \mathbf{1}_{J \times 1}$ can be calculated easily, which results that 
\begin{align}
\label{eq_dof}
d_{i} = 
\begin{cases}
\frac{\frac{T_{i}N_{i}}{T_{i}-N_{i}}}{1+\sum\limits_{k\in\Lambda_{1}}\frac{N_{k}}{T_{k}-N_{k}}} & \mbox{if } i \in \Lambda_{1}, \\
0 & \textrm{otherwise}.
\end{cases}
\end{align}
Since \eqref{eq_dof} is achievable by supporting users in $\Lambda_{1}$ with the scheme proposed in Case 3-2 of Section \ref{achievability_thm1}, 
all the vertices of $\mathcal{D}$ are achievable, which completes the achievability proof of Theorem \ref{main_thm2}.

\subsection{Achievability of Corollary \ref{main_thm3}} \label{achievability_thm3}
If $L_{\max} \geq \eta_{\text IC}$, then it is achievable by supporting an user with the maximum number of RF-chains. 
For the rest of this section we prove that $d_{\Sigma,IC} = \eta_{\text IC}$ assuming $L_{\max} < \eta_{\text IC}$.
It can be shown that \eqref{eq:dof_ic} is achievable by modifying the achievable scheme derived for Case 3-2 in Section \ref{achievability_thm1} as follows. 
Transmitter $k\in \{ k \in \mathcal{K} : N_{k} > L_{\max} \}$ constructs transmit signal vector for user $k$ in accordance with Step 1 through Step 3 in Section \ref{subsubsec:transmit_signal} by setting $M = M_{k}$ and $T_{k} = N_{k}$ and sends it as in Step 4 in Section \ref{subsubsec:transmit_signal}.
Note that since the beamforming strategy in Section \ref{subsubsec:transmit_signal} does not require cooperation among transmitters, it can be directly applied to interference channels.
Then each user receives and decodes the transmitted signal in accordance with the procedure in Section \ref{subsubsec:mode_switching}.
Note that it can be easily shown that \eqref{eq:dof_ic} is achievable, which completes the achievability proof of Corollary \ref{main_thm3}.

\section{Concluding Remarks}\label{sec:conclusion}
In this paper, the DoF of the $K$-user MIMO BC with reconfigurable antennas under no CSIT has been studied.
We completely characterized the sum LDoF of the $K$-user MIMO BC with reconfigurable antennas under general antenna configurations and further characterized the LDoF region for a class of antenna configurations.
Our results provide a comprehensive understanding of reconfigurable antennas on the LDoF of the $K$-user MIMO BC, which demonstrates that reconfigurable antennas are beneficial for a broad class of antenna configurations. 
In particular, the DoF gain from reconfigurable antennas enlarges as both the number of transmit antennas and the number of preset modes increase.
Our analysis has been further extended to characterizing the sum LDoF of the $K$-user MIMO IC with reconfigurable antennas for a class of antenna configuration, which leads to similar argument for the $K$-user MIMO BC with reconfigurable antennas.

\appendices

\section{Proof of Technical Lemmas}

\subsection{Proof of Lemma \ref{lemma_sel}}
\label{proof_sel}

Let us define $\mathbf{G}_{1}^{c}\in \mathbb{C}^{(N_1-L_1)\times M}$ as the submatrix consisting of the ($L_{1}+1$)th through the $N_1$th rows of $\mathbf{H}_{1}$.
We will prove Lemma \ref{lemma_sel} for given realization of $\mathbf{G}_{1}$ and $[\mathbf{V}_{2}^{n} \cdots \mathbf{V}_{K}^{n}]$.
That is, `almost sure' in the rest of the proof is due to the randomness of $\mathbf{G}_1^c$.
Since Lemma \ref{lemma_sel} trivially holds if $\mathbf{\Gamma}_{1}^{n}\mathbf{H}_{1}^{n} = \mathbf{G}_{1}^{n}$ or $\mathbf{G}_{1}^{n}[\mathbf{V}_{2}^{n} \cdots \mathbf{V}_{K}^{n}] = \mathbf{0}$ , we assume $\boldsymbol{\Gamma}_{1}^{n}\mathbf{H}_{1}^{n} \neq \mathbf{G}_{1}^{n}$ and $\mathbf{G}_{1}^{n}[\mathbf{V}_{2}^{n} \cdots \mathbf{V}_{K}^{n}] \neq \mathbf{0}$ from now on.

For convenience, denote $\operatorname{rank}(\mathbf{G}_{1}^{n}[\mathbf{V}_{2}^{n} \cdots \mathbf{V}_{K}^{n}])=r\geq 1$.
Define the set of column indices consisting of $r$  linearly independent columns of $\mathbf{G}_{1}^{n}[\mathbf{V}_{2}^{n} \cdots \mathbf{V}_{K}^{n}]$ as $\mathcal{I}$.
Then construct $\mathbf{A}_1\in\mathbb{C}^{nL_1\times r}$  by choosing $r$ column vectors of $\mathbf{G}_{1}^{n}[\mathbf{V}_{2}^{n} \cdots \mathbf{V}_{K}^{n}]$ whose indexes are in $\mathcal{I}$ and construct $\mathbf{A}_2\in\mathbb{C}^{nL_1\times r}$ by choosing $r$ column vectors of $\boldsymbol{\Gamma}_{1}^{n}\mathbf{H}_{1}^{n}[\mathbf{V}_{2}^{n} \cdots \mathbf{V}_{K}^{n}]$ whose indexes are in $\mathcal{I}$.
Clearly, $\mathbf{A}_1$ is of full-rank.
There exist  $\binom{nL_{1}}{r}$ choices of constructing $r\times r$  submatrices from $\mathbf{A}_1$ (or $\mathbf{A}_2$) and the determinant of each of these submatrices can be expressed as a polynomial with respect to the entries of $\mathbf{G}_{1}^{c}$.

Suppose that all $r \times r$ submatrix of $\mathbf{A}_{2}$ of which determinant is a zero polynomial with respect to the entries of $\mathbf{G}_{1}^{c}$, i.e., a constant polynomial whose coefficients are all equal to zero.
In this case, $\mathbf{A}_{2}$ is not of full-rank regardless of the entries of $\mathbf{G}_{1}^{c}$.
Hence, any matrix constructed from $\mathbf{A}_{2}$ by substituting the entries of $\mathbf{G}_{1}^{c}$ with arbitrary values is not of full-rank either.
Let us now define  $\mathbf{A}_{3}$ constructed from $\mathbf{A}_{2}$ by substituting $\mathbf{G}_{1}^{c}$ with $\mathbf{P} \mathbf{G}_{1}$,  where $\mathbf{P} \in \mathbb{C}^{(N_{1}-L_{1})\times L_{1}}$, which is not of full-rank from the above argument.
Then, we can represent $\mathbf{A}_{3}$ as $\mathbf{A}_{3} = \mathbf{Q} \mathbf{A}_{1}$ for some matrix $\mathbf{Q} \in \mathbb{C}^{n L_{i} \times n L_{i}}$.
If all square submatrices of $\mathbf{P}$ are non-singular, $\mathbf{Q}$ becomes invertible so that $\mathbf{A}_{1}$ and $\mathbf{A}_{3}$ have the same rank. We can easily find such $\mathbf{P}$, for example,  Vandermonde matrix or Cauchy matrix \cite{bader2007petascale}.
However, from the fact that $\mathbf{A}_{3}$ is not of full-rank, the result that $\mathbf{A}_3$ and $\mathbf{A_1}$ have the same rank contradicts the assumption that $\mathbf{A}_{1}$ is of full-rank.
Consequently, there exists at least one $r \times r$ submatrix of $\mathbf{A}_{2}$ of which determinant is not a zero polynomial with respect to the entries of $\mathbf{G}_{1}^{c}$.

Then now consider some $r \times r$ submatrix of $\mathbf{A}_{2}$ of which determinant is not a zero polynomial with respect to the entries of $\mathbf{G}_{1}^{c}$.
Since the entries of $\mathbf{G}_{1}^{c}$ are i.i.d drawn from a  continuous distribution, for given $\mathbf{G}_{1}$ and $[\mathbf{V}_{2}^{n} \cdots \mathbf{V}_{K}^{n}]$, the determinant of  the considered submatrix of $\mathbf{A}_{2}$  is non-zero almost surely. 
Hence, $\mathbf{A}_{2}$ is of full-rank almost surely. 
Since $\mathbf{A}_{2}$ is a submatrix of $\boldsymbol{\Gamma}_{1}^{n}\mathbf{H}_{1}^{n}[\mathbf{V}_{2}^{n} \cdots \mathbf{V}_{K}^{n}]$,
 the rank of $\boldsymbol{\Gamma}_{1}^{n}\mathbf{H}_{1}^{n}[\mathbf{V}_{2}^{n} \cdots \mathbf{V}_{K}^{n}]$ is grater than or equal to that of $\mathbf{G}_{1}^{n}[\mathbf{V}_{2}^{n} \cdots \mathbf{V}_{K}^{n}]$ almost surely, which complete the proof of Lemma \ref{lemma_sel}.

\subsection{Proof of Lemma \ref{lemma_equi}}
\label{proof_equi}

In order to prove Lemma \ref{lemma_equi}, we need the following lemmas. 
The first lemma comes from the submodularity property for rank of matrices \cite{Lashgari:14,Lovasz1983}.
\begin{lemma}[Lov$\acute{a}$sz]
\label{lemma_submod}
For matrices $\mathbf{A}$, $\mathbf{B}$, and $\mathbf{C}$ with the same number of rows,
\begin{align}
\operatorname{rank}[\mathbf{A} \mathbf{C}] - \operatorname{rank}[\mathbf{C}] \geq \operatorname{rank}[\mathbf{A}\mathbf{B} \mathbf{C}] - \operatorname{rank}[\mathbf{B}\mathbf{C}] \notag .
\end{align}
\end{lemma}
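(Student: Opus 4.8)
The plan is to pass from the matrices to their column spaces and reduce the inequality to an elementary monotonicity fact. Since $\mathbf{A}$, $\mathbf{B}$, $\mathbf{C}$ all have the same number of rows, horizontal concatenation satisfies $\mathcal{R}([\mathbf{X}\ \mathbf{Y}]) = \mathcal{R}(\mathbf{X}) + \mathcal{R}(\mathbf{Y})$, so with the shorthand $U = \mathcal{R}(\mathbf{A})$, $V = \mathcal{R}(\mathbf{C})$, and $W = \mathcal{R}(\mathbf{B})$ the claim $\operatorname{rank}[\mathbf{A}\ \mathbf{C}] - \operatorname{rank}(\mathbf{C}) \geq \operatorname{rank}[\mathbf{A}\ \mathbf{B}\ \mathbf{C}] - \operatorname{rank}[\mathbf{B}\ \mathbf{C}]$ becomes exactly
\begin{align}
\dim(U + V) - \dim(V) \geq \dim(U + W + V) - \dim(W + V). \notag
\end{align}

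First I would apply the Grassmann dimension identity $\dim(X + Y) = \dim(X) + \dim(Y) - \dim(X \cap Y)$ twice: once with $X = U$, $Y = V$, and once with $X = U$, $Y = W + V$. This rewrites the left-hand side as $\dim(U) - \dim(U \cap V)$ and the right-hand side as $\dim(U) - \dim\left(U \cap (W + V)\right)$. Cancelling the common term $\dim(U)$, the target inequality is equivalent to $\dim(U \cap V) \leq \dim\left(U \cap (W + V)\right)$.

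Finally, since $V \subseteq W + V$, we have $U \cap V \subseteq U \cap (W + V)$, and the dimension of a subspace is monotone under inclusion; this yields the desired inequality and hence the lemma. I do not expect any substantial obstacle: the only care required is the bookkeeping of which matrix corresponds to which subspace and making the concatenation-to-sum translation explicit. As an alternative one could simply invoke that the set function $S \mapsto \operatorname{rank}(\text{columns indexed by } S)$ of a fixed matrix is a matroid rank function, hence submodular \cite{Lovasz1983}, but the self-contained dimension count above is shorter and keeps the argument elementary.
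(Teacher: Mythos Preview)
Your argument is correct. The translation $\operatorname{rank}[\mathbf{X}\ \mathbf{Y}] = \dim(\mathcal{R}(\mathbf{X}) + \mathcal{R}(\mathbf{Y}))$ is valid, the two applications of the Grassmann identity are clean, and the final monotonicity step $V \subseteq W+V \Rightarrow U\cap V \subseteq U\cap(W+V)$ is exactly what is needed.

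As for comparison with the paper: the paper does not actually prove this lemma at all --- it simply cites \cite{Lovasz1983} and moves on. Your self-contained dimension-counting argument is therefore strictly more informative than what the paper supplies. The alternative you mention at the end (invoking that the column-rank function is a matroid rank function, hence submodular) is essentially the abstract framework behind the cited reference; your direct proof via the Grassmann identity is the concrete linear-algebraic instantiation of that submodularity, and for this specific inequality it is both shorter and avoids importing matroid terminology.
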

\begin{proof}
We refer to \cite{Lovasz1983} for the proof.
\end{proof}


The second lemma is one of the key properties for multantenna systems without CSIT, which means that there is no spatial preference in the received signal space without CSIT.
\begin{lemma}
\label{lemma_stat}
Let $\mathbf{A}_{i,m}$ be the submatrix consisting of arbitrary $m$ row vectors of $\mathbf{G}_{i}$ and $\mathbf{B}_{j,m}$ be the submatrix consisting of arbitrary $m$ row vectors of $\mathbf{G}_{j}$.  
Then, for all $i,j,k \in \mathcal{K}$, the following property holds almost surely:
\begin{align}
\operatorname{rank}(\mathbf{A}_{i,m}^{n}[\mathbf{V}_{k}^{n}\cdots\mathbf{V}_{K}^{n}]) = \operatorname{rank}(\mathbf{B}^n_{j,m}[\mathbf{V}_{k}^{n}\cdots\mathbf{V}_{K}^{n}]) \notag
\end{align}
where $\mathbf{A}_{i,m}^{n}= \mathbf{I}_{n} \otimes \mathbf{A}_{i,m} $ and $\mathbf{B}_{j,m}^{n}=\mathbf{I}_{n} \otimes \mathbf{B}_{j,m} $.\footnote{The maximum value of $m$ depends on $i$ and $j$, see the definition of $\{\mathbf{G}_l\}_{l\in\mathcal{K}}$.}
\end{lemma}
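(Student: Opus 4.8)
The plan is to reduce the claimed rank equality to the fact that, without CSIT, the beamforming matrices $\mathbf{V}_k^n,\cdots,\mathbf{V}_K^n$ are chosen independently of all channel realizations, so the joint distribution of the received signal subspaces is invariant under any relabeling or reshuffling of rows of the channel matrices. Concretely, fix $i,j,k$ and $m$, fix a choice of $m$ rows from $\mathbf{G}_i$ and $m$ rows from $\mathbf{G}_j$, and write $\mathbf{A}_{i,m}^n[\mathbf{V}_k^n\cdots\mathbf{V}_K^n]$ versus $\mathbf{B}_{j,m}^n[\mathbf{V}_k^n\cdots\mathbf{V}_K^n]$. Since the entries of all channel matrices are i.i.d.\ drawn from the same continuous distribution and the selection/beamforming matrices do not depend on these entries, the two $mn\times(\cdot)$ matrices $\mathbf{A}_{i,m}^n$ and $\mathbf{B}_{j,m}^n$ have the \emph{same} distribution (both are $\mathbf{I}_n$ Kronecker an $m\times M$ matrix with i.i.d.\ continuous entries), and this distribution is independent of $[\mathbf{V}_k^n\cdots\mathbf{V}_K^n]$. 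Hence the random variables $\operatorname{rank}(\mathbf{A}_{i,m}^n[\mathbf{V}_k^n\cdots\mathbf{V}_K^n])$ and $\operatorname{rank}(\mathbf{B}_{j,m}^n[\mathbf{V}_k^n\cdots\mathbf{V}_K^n])$ are equal in distribution.

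The remaining step is to upgrade equality in distribution to almost-sure equality, which is the main obstacle. The key observation is that the rank of a matrix whose entries are polynomials in some underlying continuous random variables takes its \emph{maximum possible value} almost surely: the locus where the rank drops is the common zero set of all maximal-size minors, each of which is a polynomial in the channel entries, and a nonzero polynomial vanishes on a set of Lebesgue measure zero. Thus, for a fixed realization of $[\mathbf{V}_k^n\cdots\mathbf{V}_K^n]$ (which itself may be random but is independent of the channels), $\operatorname{rank}(\mathbf{A}_{i,m}^n[\mathbf{V}_k^n\cdots\mathbf{V}_K^n])$ equals a deterministic quantity $r^*$ almost surely, namely the generic rank; and by the distributional identity above, $\operatorname{rank}(\mathbf{B}_{j,m}^n[\mathbf{V}_k^n\cdots\mathbf{V}_K^n])$ equals the \emph{same} generic rank $r^*$ almost surely. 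Therefore the two ranks coincide almost surely. Taking a union bound over the finitely many choices of $i,j,k,m$ and of which $m$ rows are selected preserves the almost-sure statement.

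The one point requiring care is conditioning: the $\mathbf{V}$'s are permitted to be random, so one should condition on $[\mathbf{V}_k^n\cdots\mathbf{V}_K^n]$, apply the polynomial-genericity argument to the conditional law of the channels (still i.i.d.\ continuous, still independent of the conditioning), conclude the almost-sure equality $\operatorname{rank}=r^*$ conditionally, and then integrate out. Since $r^*$ depends only on the dimensions $m$, $M$, $n$ and the column span structure of $[\mathbf{V}_k^n\cdots\mathbf{V}_K^n]$ — which is identical in the two expressions — it is the same on both sides, and the conditional almost-sure equality yields the unconditional one. This is exactly the ``no spatial preference'' phenomenon: with no CSIT the transmitter cannot bias its beams toward any particular row combination of any particular user's channel, so all such row combinations see statistically identical (and generically maximal) interference/signal dimensions.
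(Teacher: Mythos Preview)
Your argument is correct and rests on the same underlying mechanism as the paper's proof: once you condition on the beamforming matrices (which are independent of the channels by the no-CSIT assumption), the entries of $(\mathbf{I}_n\otimes A)[\mathbf{V}_k^n\cdots\mathbf{V}_K^n]$ are polynomials in the i.i.d.\ continuous channel entries, so the rank equals its generic value almost surely, and that generic value depends only on $m$, $M$, $n$ and the fixed $[\mathbf{V}_k^n\cdots\mathbf{V}_K^n]$.

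The packaging differs slightly. The paper proves the two inequalities separately: it conditions on a realization of $\mathbf{A}_{i,m}$ and the $\mathbf{V}$'s, takes $r$ linearly independent columns on the $\mathbf{A}$-side, and then shows (via a substitution argument borrowed from the proof of Lemma~\ref{lemma_sel}) that the corresponding $r\times r$ minor on the $\mathbf{B}$-side is not the zero polynomial in the entries of $\mathbf{B}_{j,m}$, hence nonzero almost surely. You instead invoke the generic-rank principle once on each side and note that the generic rank is identical for both. Your route is a bit more direct and automatically covers the $i=j$ case (which the paper handles by reference to Lemma~\ref{lemma_sel}); the paper's route is more explicit about \emph{why} the relevant minor polynomial is not identically zero. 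One small imprecision: the generic rank depends on the full matrix $[\mathbf{V}_k^n\cdots\mathbf{V}_K^n]$ (because of the block-diagonal structure $\mathbf{I}_n\otimes A$), not merely on its column span; this does not affect your conclusion since the map $A\mapsto(\mathbf{I}_n\otimes A)[\mathbf{V}_k^n\cdots\mathbf{V}_K^n]$ is identical on both sides.
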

\begin{IEEEproof}
For $i = j$, it can be straightforwardly  derived from the proof in Lemma \ref{lemma_sel}. Hence, we assume $i \neq j$ in the rest of the proof.
We first prove that 
\begin{align} \label{eq:rank_inequal}
\operatorname{rank}(\mathbf{A}_{i,m}^{n}[\mathbf{V}_{k}^{n}\cdots\mathbf{V}_{K}^{n}]) 
\overset{a.s.}{\leq} 
\operatorname{rank}(\mathbf{B}_{j,m}^{n}[\mathbf{V}_{k}^{n}\cdots\mathbf{V}_{K}^{n}])
\end{align}
for given realizations of $\mathbf{A}_{i,m}$ and $[\mathbf{V}_{k}^{n} \cdots \mathbf{V}_{K}^{n}]$.
Note that, for given $\mathbf{A}_{i,m}$ and $[\mathbf{V}_{k}^{n} \cdots \mathbf{V}_{K}^{n}]$, $\mathbf{A}_{i,m}^{n}[\mathbf{V}_{k}^{n}\cdots\mathbf{V}_{K}^{n}]$ is deterministic but $\mathbf{B}_{j,m}^{n}[\mathbf{V}_{k}^{n}\cdots\mathbf{V}_{K}^{n}]$ is random induced by $\mathbf{B}_{j,m}$.

If $\mathbf{A}_{i,m}^{n}[\mathbf{V}_{k}^{n}\cdots\mathbf{V}_{K}^{n}]=\mathbf{0}$, then \eqref{eq:rank_inequal} trivially holds.
Then now consider the case where $\mathbf{A}_{i,m}^{n}[\mathbf{V}_{k}^{n}\cdots\mathbf{V}_{K}^{n}]\neq \mathbf{0}$.
For convenience, denote $\operatorname{rank}(\mathbf{A}_{i,m}^{n}[\mathbf{V}_{k}^{n}\cdots\mathbf{V}_{K}^{n}]) = r \geq 1$.
Define the set of column indices consisting of $r$  linearly independent columns of $\mathbf{A}_{i,m}^{n}[\mathbf{V}_{k}^{n}\cdots\mathbf{V}_{K}^{n}]$ as $\mathcal{I}$.
Then construct $\mathbf{C}_1\in\mathbb{C}^{nm\times r}$  by choosing $r$ column vectors of $\mathbf{A}_{i,m}^{n}[\mathbf{V}_{k}^{n}\cdots\mathbf{V}_{K}^{n}]$ whose indexes are in $\mathcal{I}$
and $\mathbf{C}_2\in\mathbb{C}^{nm\times r}$  by choosing $r$ column vectors of $\mathbf{B}_{j,m}^{n}[\mathbf{V}_{k}^{n}\cdots\mathbf{V}_{K}^{n}]$ whose indexes are in $\mathcal{I}$.
Clearly, $\mathbf{C}_1$ is of full-rank.

There exist $\binom{nm}{r}$ $r \times r$ choices of constructing $r\times r$ submatrices from $\mathbf{C}_2$ and the determinant of each of these submatrices can be expressed as a polynomial with respect to the entries of $\mathbf{B}_{j,m}$.
Then from the same argument in the proof of Lemma \ref{lemma_sel}, we can show that  there exists at least one $r \times r$ submatrix of $\mathbf{C}_{2}$ of which determinant is not a zero polynomial with respect to the entries of $\mathbf{B}_{j,m}$.
Now consider one of such $r\times r$ submatrices of $\mathbf{C}_2$.
Since the entries of $\mathbf{B}_{j,m}$ are i.i.d drawn from a continuous distribution, for given $\mathbf{A}_{i,m}$ and $[\mathbf{V}_{k}^{n} \cdots \mathbf{V}_{K}^{n}]$, its determinant is non-zero almost surely. 
Hence, $\mathbf{C}_{2}$ is of full-rank almost surely and, as a result, \eqref{eq:rank_inequal} holds.
Similarly, we can also prove $\operatorname{rank}(\mathbf{A}_{i,m}^{n}[\mathbf{V}_{k}^{n}\cdots\mathbf{V}_{K}^{n}]) 
\overset{a.s.}{\geq} 
\operatorname{rank}(\mathbf{B}_{j,m}^{n}[\mathbf{V}_{k}^{n}\cdots\mathbf{V}_{K}^{n}])$.
In conclusion, Lemma \ref{lemma_stat} holds.
\end{IEEEproof}

We are now ready to prove Lemma \ref{lemma_equi}. 
Let us define 
$\mathbf{Z}^{i,j} = (\mathbf{G}_{i}^{n}[\mathbf{V}_{j}^{n}\cdots\mathbf{V}_{K}^{n}])^{T}$ and
$\mathbf{Z}_{k}^{i,j} = (\mathbf{g}_{i,k}^{n}[\mathbf{V}_{j}^{n}\cdots\mathbf{V}_{K}^{n}])^{T}$,
where $\mathbf{g}_{i,k}$ is the $k$th row vector of $\mathbf{G}_{i}$ and $\mathbf{g}_{i,k}^{n}= \mathbf{I}_{n} \otimes \mathbf{g}_{i,k}$. 
Then, for $i=2,\cdots,K$,
\begin{subequations}
\label{eq_equi4}
\begin{align}
\operatorname{rank}\left( \mathbf{Z}^{i-1,i} \right) &  \overset{a.s.}{=} \operatorname{rank} \left [ \mathbf{Z}_{\Delta_{i-1}}^{i-1,i} \cdots \mathbf{Z}_{1}^{i-1,i} \right]\label{eq_app0} \\
 &  = \operatorname{rank} \left ( \mathbf{Z}_{1}^{i-1,i} \right) 
+ \sum\limits_{k=2}^{\Delta_{i-1}} 
\left(
\operatorname{rank} \left [ \mathbf{Z}_{k}^{i-1,i} \cdots \mathbf{Z}_{1}^{i-1,i} \right] 
- \operatorname{rank} \left [ \mathbf{Z}_{k-1}^{i-1,i} \cdots \mathbf{Z}_{1}^{i-1,i} \right]
\right) \nonumber
\\
& \overset{a.s.}{=} \operatorname{rank} \left ( \mathbf{Z}_{\Delta_{i}-1}^{i-1,i} \right) 
+ \sum\limits_{k=2}^{\Delta_{i-1}} 
\left(
\operatorname{rank} \left [ \mathbf{Z}_{\Delta_{i-1}}^{i-1,i} \mathbf{Z}_{k-1}^{i-1,i}\cdots \mathbf{Z}_{1}^{i-1,i} \right] 
- \operatorname{rank} \left [ \mathbf{Z}_{k-1}^{i-1,i} \cdots \mathbf{Z}_{1}^{i-1,i} \right]
\right) 
\label{eq_app1} \\
& \geq   \sum\limits_{k=1}^{\Delta_{i-1}} 
\left(
\operatorname{rank} \left [ \mathbf{Z}_{\Delta_{i-1}}^{i-1,i} \cdots \mathbf{Z}_{1}^{i,i} \right] 
- \operatorname{rank} \left [ \mathbf{Z}_{\Delta_{i-1}-1}^{i-1,i} \cdots \mathbf{Z}_{1}^{i-1,i} \right]
\right) 
\label{eq_app2} \\
& \overset{a.s.}{=} \Delta_{i-1} \left(\operatorname{rank} \left [ \mathbf{Z}_{\Delta_{i-1}}^{i,i} \cdots \mathbf{Z}_{1}^{i,i} \right] 
- \operatorname{rank} \left [ \mathbf{Z}_{\Delta_{i-1}-1}^{i,i} \cdots \mathbf{Z}_{1}^{i,i} \right] 
\right) 
\label{eq_app3} \\
& \overset{a.s.}{=} \frac{\Delta_{i-1}}{(\Delta_{i} - \Delta_{i-1})}  
\sum\limits_{k=1}^{\Delta_{i} - \Delta_{i-1}} 
\left (
\operatorname{rank} \left [ \mathbf{Z}_{\Delta_{i-1}+k}^{i,i} \mathbf{Z}_{\Delta_{i-1}-1}^{i,i} \cdots \mathbf{Z}_{1}^{i,i} \right] 
- \operatorname{rank} \left [ \mathbf{Z}_{\Delta_{i-1}-1}^{i,i} \cdots \mathbf{Z}_{1}^{i,i} \right] 
\right ) 
\label{eq_app5}\\
& \geq \frac{\Delta_{i-1}}{(\Delta_{i} - \Delta_{i-1})}  \sum\limits_{k=1}^{\Delta_{i} - \Delta_{i-1}} 
\left(
\operatorname{rank} \left [ \mathbf{Z}_{\Delta_{i-1}+k}^{i,i} \cdots \mathbf{Z}_{1}^{i,i} \right] 
- \operatorname{rank} \left [ \mathbf{Z}_{\Delta_{i-1}+k-1}^{i,i} \cdots \mathbf{Z}_{1}^{i,i} \right] 
\right) 
\label{eq_app4} \\
& = \frac{\Delta_{i-1}}{(\Delta_{i} - \Delta_{i-1})}  
\left (
\operatorname{rank} \left [ \mathbf{Z}_{\Delta_{i}}^{i,i} \cdots \mathbf{Z}_{1}^{i,i} \right] 
- \operatorname{rank} \left [ \mathbf{Z}_{\Delta_{i-1}}^{i,i} \cdots \mathbf{Z}_{1}^{i,i} \right] 
\right) 
\notag \\
& \overset{a.s.}{=} \frac{\Delta_{i-1}}{(\Delta_{i} - \Delta_{i-1})}  
\left( 
\operatorname{rank} \left [ \mathbf{Z}_{\Delta_{i}}^{i,i} \cdots \mathbf{Z}_{1}^{i,i} \right] 
- \operatorname{rank} \left [ \mathbf{Z}_{\Delta_{i-1}}^{i-1,i} \cdots \mathbf{Z}_{1}^{i-1,i} \right] 
\right) 
\label{eq_app6} \\
& \overset{a.s.}{=} \frac{\Delta_{i-1}}{(\Delta_{i} - \Delta_{i-1})} \left(\operatorname{rank} \left ( \mathbf{Z}^{i,i} \right ) - \operatorname{rank} \left (  \mathbf{Z}^{i-1,i} \right) \right). \label{eq_app7} 
\end{align}
\end{subequations}
Here \eqref{eq_app0} holds since $[ \mathbf{Z}_{\Delta_{i-1}}^{i-1,i} \cdots \mathbf{Z}_{1}^{i-1,i}]$ is a submatrix of $\mathbf{Z}^{i-1,i}$ and 
$\mathbf{Z}^{i-1,i} =  [ \mathbf{Z}_{\Delta_{i-1}}^{i-1,i} \cdots \mathbf{Z}_{1}^{i-1,i}] \mathbf{F} $ almost surely for a matrix $\mathbf{F}$ such that
\begin{align}
\mathbf{F} = 
\left \{
\begin{array}{cc}
\mathbf{I}_{n} \otimes 
\left [  
\mathbf{I}_{\Delta_{i-1}}^{T} \  \left ((\left[\mathbf{G}_{i-1}\right]_{\Delta_{i-1}}^{T})^{-1}  (\left[\mathbf{G}_{i-1}\right]_{\Delta_{i-1}}^{c})^{T}  \right )^{T} \right ]^{T} & \textrm{if } \mathbf{G}_{i-1}  \textrm{ is a tall matrix}, \\
 \mathbf{I}_{n\Delta_{i-1}} & \textrm{otherwise} \\
\end{array}
\right.
\end{align}
where $[\mathbf{G}_{i-1}]_{\Delta_{i-1}}$ is the leading principal minor of $\mathbf{G}_{i-1}$ of order $\Delta_{i-1}$, $[\mathbf{G}_{i-1}]_{\Delta_{i-1}}^{c}$ is the remainder part of $\mathbf{G}_{i-1}$ except $[\mathbf{G}_{i-1}]_{\Delta_{i-1}}$. 
Lemma \ref{lemma_stat} is used for \eqref{eq_app1}, \eqref{eq_app3},  \eqref{eq_app5}, and \eqref{eq_app6} and Lemma \ref{lemma_submod} is used for \eqref{eq_app2} and \eqref{eq_app4}.
Also, \eqref{eq_app7}  follows since $\operatorname{rank} \left [ \mathbf{Z}_{\Delta_{i}}^{i,i} \cdots \mathbf{Z}_{1}^{i,i} \right]\overset{a.s.}{=}\operatorname{rank} \left ( \mathbf{Z}^{i,i} \right )$ and $\operatorname{rank} \left [ \mathbf{Z}_{\Delta_{i-1}}^{i-1,i} \cdots \mathbf{Z}_{1}^{i-1,i} \right]\overset{a.s.}{=}\operatorname{rank} \left ( \mathbf{Z}^{i-1,i} \right )$.
From the fact that $\operatorname{rank} \mathbf{A} = \operatorname{rank} \mathbf{A}^{T}$ for a matrix $\mathbf{A}$ whose elements are complex numbers \cite{horn2012matrix}, \eqref{eq_equi4} becomes
\begin{align}
\frac{1}{\Delta_{i-1}}\operatorname{rank} \left (\mathbf{G}_{i-1}^{n}[\mathbf{V}_{i}^{n} \cdots \mathbf{V}_{K}^{n}] \right )  \overset{a.s.}{\geq} \frac{1}{\Delta_{i}}\operatorname{rank} \left (\mathbf{G}_{i}^{n}[\mathbf{V}_{i}^{n} \cdots \mathbf{V}_{K}^{n}] \right ) \notag .
\end{align}
Then, for $i=2,\cdots,K$, we have
\begin{subequations}
\begin{align}
\frac{1}{\Delta_{i-1}}\operatorname{rank} \left (\mathbf{G}_{i-1}^{n}[\mathbf{V}_{i}^{n} \cdots \mathbf{V}_{K}^{n}] \right ) 
& \overset{a.s.}{\geq} \frac{1}{\Delta_{i}}\operatorname{rank} \left (\mathbf{G}_{i}^{n}[\mathbf{V}_{i}^{n} \cdots \mathbf{V}_{K}^{n}] \right ) \notag \\
&= \frac{1}{\Delta_{i}} \left( \operatorname{rank} (\mathbf{G}_{i}^{n}[\mathbf{V}_{i+1}^{n} \cdots \mathbf{V}_{K}^{n}]  ) + \operatorname{dim}(\operatorname{Proj}_{(\mathcal{I}_{i}')^{c}}\mathcal{R}(\mathbf{G}_{i}^{n}\mathbf{V}_{i}^{n}))  \right)\label{eq_equi1} \\
&\geq \frac{1}{\Delta_{i}}  \left( \operatorname{rank} \left (\mathbf{G}_{i}^{n}[\mathbf{V}_{i+1}^{n} \cdots \mathbf{V}_{K}^{n}]\right ) + \operatorname{dim}(\operatorname{Proj}_{\mathcal{I}_{i}^{c}}\mathcal{R}(\mathbf{G}_{i}^{n}\mathbf{V}_{i}^{n}))   \right)\label{eq_equi2}
\end{align}
\end{subequations}
where $\mathcal{I'}_{i} = \mathcal{R}(\mathbf{G}_{i}^{n}[\mathbf{V}_{i+1}^{n} \cdots \mathbf{V}_{K}^{n}])$.
Here \eqref{eq_equi1} follows from Lemma \ref{lemma_fano} and \eqref{eq_equi2} follows since $\mathcal{I}_{i}'  \subseteq \mathcal{I}_{i}$, which is given by $\mathcal{I}_{i} = \mathcal{R}( \mathbf{G}_{i}^{n}[\mathbf{V}_{1}^{n} \cdots \mathbf{V}_{i-1}^{n},\mathbf{V}_{i+1}^{n},\cdots,\mathbf{V}_{K}^{n}] )$ from Definition \ref{def:LDoF}.
Therefore \eqref{eq: relation_1} holds. In the same manner, we can proof \eqref{eq: relation_2}, which completes the proof of Lemma \ref{lemma_equi}.

\subsection{Proof of Lemma \ref{lemma_upper}}
\label{proof_upper}
Let us assume that $M > L_{\max}$ and $N_{k} > L_{\max}$ for some $k \in \mathcal{K}$, i.e., $\Lambda \neq \emptyset$.
Let $\Lambda = \{1,\cdots,|\Lambda| \}$ without loss of generality and
$\mathbf{A} = [\mathbf{A}_{1} \mathbf{A}_{2}]$ such that
\begin{align}
\label{eq_A1}
\begin{array}{l}
\mathbf{A}_{1} = 
\left[ 
\begin{array}{c}
\begin{array}{ccccc}
\frac{1}{L_{1}} & \frac{1}{T_{2}} & \frac{1}{T_{3}} & \cdots & \frac{1}{T_{|\Lambda|}} \\
\frac{1}{T_{1}} & \frac{1}{L_{2}} & \frac{1}{T_{3}} & \cdots & \frac{1}{T_{|\Lambda|}} \\
\vdots & \vdots & \vdots & \ddots & \vdots  \\
\frac{1}{T_{1}} & \frac{1}{T_{2}} & \frac{1}{T_{3}} & \cdots & \frac{1}{L_{|\Lambda|}} \\
\end{array}
\end{array}
\right]
\end{array}
\end{align}
and $\mathbf{A}_{2} = \frac{1}{L_{\max}} \mathbf{1}_{|\Lambda|\times (K-|\Lambda|)}$.
Then the optimization problem in Lemma \ref{lemma:linear_program} is rewritten as 
\begin{gather}
\textrm{maximize} \sum\limits_{i=1}^{K} d_{i} \notag \\
\textrm{subject to} \ \mathbf{Ad} + \mathbf{x} = \mathbf{1}_{|\Lambda| \times 1} \label{eq:opt_constraint}, \\
										\ \ \ \ \ \  \mathbf{d} \geq \mathbf{0}, \mathbf{x} \geq \mathbf{0} \notag
\end{gather}
where $\mathbf{d} = [d_{1} \cdots d_{K}]^{T}$, $\mathbf{x} = [x_{1} \cdots x_{|\Lambda|}]^{T}$, see also \cite{chong2013introduction}.

The following lemma provides non-singularity of $\mathbf{A}_{1}$. 
\begin{lemma}
\label{det}
For the matrix $\mathbf{A}_{1}$ defined in \eqref{eq_A1}, the determinant of $\mathbf{A}_{1}$ is given by
\begin{align}
|\mathbf{A}_{1}| = \prod\limits_{k\in\Lambda}\frac{T_{k}-L_{k}}{T_{k}L_{k}}\left ( 1 + \sum\limits_{k\in\Lambda} \frac{L_{j}}{T_{j}-L_{j}} \right ) .
\end{align}
Consequently, since $T_{i} > L_{i}$ for $i \in \Lambda$, $\mathbf{A}_{1}$ is non-singular.
\begin{IEEEproof}
It can be easily verified by mathematical induction. 
\end{IEEEproof}
\end{lemma}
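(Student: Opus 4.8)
The plan is to exhibit $\mathbf{A}_{1}$ as a rank-one update of an invertible diagonal matrix and then apply the matrix determinant lemma. Observe that, column by column, the $j$th column of $\mathbf{A}_{1}$ equals $\frac{1}{T_{j}}\mathbf{1}_{|\Lambda|\times 1}$ except in its $j$th (diagonal) entry, where it is $\frac{1}{L_{j}}$ rather than $\frac{1}{T_{j}}$. Hence, writing $\mathbf{r}=\left[\tfrac{1}{T_{1}}\ \cdots\ \tfrac{1}{T_{|\Lambda|}}\right]^{T}$,
\begin{align}
\mathbf{A}_{1} = \mathbf{D} + \mathbf{1}_{|\Lambda|\times 1}\,\mathbf{r}^{T}, \qquad
\mathbf{D} = \operatorname{diag}\!\left(\tfrac{1}{L_{1}}-\tfrac{1}{T_{1}},\ \cdots,\ \tfrac{1}{L_{|\Lambda|}}-\tfrac{1}{T_{|\Lambda|}}\right) = \operatorname{diag}\!\left(\tfrac{T_{k}-L_{k}}{T_{k}L_{k}}\right)_{k\in\Lambda}. \notag
\end{align}
Since $k\in\Lambda$ forces $T_{k}=\min(M,N_{k})>L_{\max}\geq L_{k}$, every diagonal entry of $\mathbf{D}$ is strictly positive, so $\mathbf{D}$ is invertible with $\mathbf{D}^{-1}=\operatorname{diag}\!\left(\tfrac{T_{k}L_{k}}{T_{k}-L_{k}}\right)_{k\in\Lambda}$.

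Next I would invoke the identity $\det(\mathbf{D}+\mathbf{u}\mathbf{v}^{T})=\det(\mathbf{D})\,(1+\mathbf{v}^{T}\mathbf{D}^{-1}\mathbf{u})$ (see, e.g., \cite{horn2012matrix}) with $\mathbf{u}=\mathbf{1}_{|\Lambda|\times 1}$ and $\mathbf{v}=\mathbf{r}$. This gives $\det(\mathbf{D})=\prod_{k\in\Lambda}\tfrac{T_{k}-L_{k}}{T_{k}L_{k}}$ and
\begin{align}
\mathbf{r}^{T}\mathbf{D}^{-1}\mathbf{1}_{|\Lambda|\times 1} = \sum_{k\in\Lambda}\frac{1}{T_{k}}\cdot\frac{T_{k}L_{k}}{T_{k}-L_{k}} = \sum_{k\in\Lambda}\frac{L_{k}}{T_{k}-L_{k}}, \notag
\end{align}
so that $|\mathbf{A}_{1}| = \prod_{k\in\Lambda}\tfrac{T_{k}-L_{k}}{T_{k}L_{k}}\bigl(1+\sum_{k\in\Lambda}\tfrac{L_{k}}{T_{k}-L_{k}}\bigr)$, which is the claimed expression. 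Non-singularity is then immediate: for $k\in\Lambda$ we have $T_{k}>L_{k}$, so each factor $\tfrac{T_{k}-L_{k}}{T_{k}L_{k}}$ is positive and the bracketed sum exceeds $1$; hence $|\mathbf{A}_{1}|>0$.

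An alternative, equivalent to the ``mathematical induction'' remark, is to subtract the last row of $\mathbf{A}_{1}$ from each of the other rows, which produces rows supported on only two coordinates, and then expand along the last column to obtain a determinant of exactly the same form in dimension $|\Lambda|-1$; unwinding the recursion yields the same product-times-sum formula. Either route is wholly elementary, so there is no real obstacle here; the only point that actually needs the hypotheses of the lemma is the invertibility of $\mathbf{D}$ (equivalently, that $T_{k}\neq L_{k}$), which is guaranteed precisely because $T_{k}>L_{\max}\geq L_{k}$ for every $k\in\Lambda$.
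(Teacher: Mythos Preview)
Your argument is correct. The paper itself gives no details beyond the remark ``It can be easily verified by mathematical induction,'' so your approach via the matrix determinant lemma is a genuinely different and in fact cleaner route: by recognizing $\mathbf{A}_{1}=\mathbf{D}+\mathbf{1}_{|\Lambda|\times 1}\mathbf{r}^{T}$ as a rank-one perturbation of an invertible diagonal matrix you obtain the closed-form determinant in one step, whereas the inductive proof would require setting up a recurrence (e.g., by the row-reduction you sketch at the end) and then unwinding it. Your method also makes the non-singularity conclusion immediate, since both $\det(\mathbf{D})$ and the scalar factor $1+\mathbf{r}^{T}\mathbf{D}^{-1}\mathbf{1}$ are visibly positive under the hypothesis $T_{k}>L_{k}$ for $k\in\Lambda$. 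The only caveat is that the matrix determinant lemma in the form you use requires $\mathbf{D}$ to be invertible, and you correctly note that this is exactly where the definition of $\Lambda$ enters.
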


Notice the the optimal $\mathbf{d}$ should satisfy \eqref{eq:opt_constraint}. Then, 
subtracting $\mathbf{x}$  from both sides of \eqref{eq:opt_constraint} and
multiplying them by $\mathbf{1}_{1\times |\Lambda|}\mathbf{A}_{1}^{-1}$, which is possible from Lemma \ref{det},  we have
\begin{align} 
\sum\limits_{i=1}^{|\Lambda|} d_{i} + \frac{\mathbf{1}_{1\times |\Lambda|}\mathbf{A}_{1}^{-1}\mathbf{1}_{|\Lambda|\times 1}}{L_{\max}}\sum\limits_{i=|\Lambda|+1}^{K} d_{i} = \mathbf{1}_{1\times |\Lambda|}\mathbf{A}_{1}^{-1}\mathbf{1}_{|\Lambda|\times 1} - \mathbf{1}_{1\times |\Lambda|}\mathbf{A}_{1}^{-1}\mathbf{x}.
\label{eq_upper2}
\end{align}
Note that
\begin{align}
\mathbf{1}_{1\times |\Lambda|}\mathbf{A}_{1}^{-1}\mathbf{1}_{|\Lambda|\times 1} 
&=\sum_{i\in \Lambda} [\mathbf{1}_{1\times |\Lambda|}\mathbf{A}_{1}^{-1}]_{i}\nonumber\\
&=\sum_{i\in \Lambda}  \frac{|\mathbf{A}_{1}|_{T_{i}=L_{i}=1}}{|\mathbf{A}_{1}|}\nonumber\\
&=\sum_{i\in \Lambda}\frac{\prod\limits_{k\in\Lambda, k\neq i}\frac{T_{k}-L_{k}}{T_{k}L_{k}}}
			{\prod\limits_{k\in\Lambda}\frac{T_{k}-L_{k}}{T_{k}L_{k}}\left ( 1 + \sum\limits_{j\in\Lambda} \frac{L_{j}}{T_{j}-L_{j}} \right )}\nonumber\\
&=\frac{\sum\limits_{i \in \Lambda}\frac{T_{i}L_{i}}{T_{i}-L_{i}}}{1+\sum\limits_{i\in\Lambda}\frac{L_{i}}{T_{i}-L_{i}}} = \eta. \label{eq:eta}
\end{align}
Here the third and fourth equalities follow from Cramer's rule \cite[Lemma 176]{gockenbach2011finite} and Lemma \ref{det} respectively.
Substituting \eqref{eq:eta} into \eqref{eq_upper2}, we have
\begin{align} 
\sum\limits_{i=1}^{|\Lambda|} d_{i} + \frac{\eta}{L_{\max}}\sum\limits_{i=|\Lambda|+1}^{K} d_{i} = \eta - \mathbf{1}_{1\times |\Lambda|}\mathbf{A}_{1}^{-1}\mathbf{x}.
\label{eq_upper5}
\end{align}
Therefore, 
\begin{align} 
\sum\limits_{i=1}^{K} d_{i} & \leq \max(\eta,L_{\max}) \left (  \frac{1}{\eta}\sum\limits_{i=1}^{|\Lambda|} d_{i} + \frac{1}{L_{\max}}\sum\limits_{i=|\Lambda|+1}^{K} d_{i} \right ) \nonumber\\
& = \max(\eta,L_{\max}) \frac{1}{\eta} \left ( \eta - \mathbf{1}_{1\times |\Lambda|}\mathbf{A}_{1}^{-1}\mathbf{x} \right ) \label{eq_upper3}\\
&\leq \max(\eta,L_{\max}) \label{eq_upper4} 
\end{align}
where \eqref{eq_upper3} follows from \eqref{eq_upper5} and \eqref{eq_upper4} follows since $\mathbf{1}_{1\times |\Lambda|}\mathbf{A}_{1}^{-1}\mathbf{x} \geq 0$. 
In conclusion, Lemma \ref{lemma_upper} holds.

\subsection{Proof of Lemma \ref{lemma:inverse}}
\label{properties1}
We have
\begin{align}
g^{[i]}(f^{[i]}(l)) & =  ((l-1) \setminus \prod\limits_{p=1}^{i}S_{p})\prod\limits_{p=1}^{i}S_{p} + (l-1) | \prod\limits_{p=1}^{i-1}S_{p} + 
(((l-1) | \prod\limits_{p=1}^{i}S_{p})\setminus \prod\limits_{p=1}^{i-1}S_{p})\prod\limits_{p=1}^{i-1}S_{p} + 1 \notag \\
& = ((l-1) \setminus \prod\limits_{p=1}^{i}S_{p})\prod\limits_{p=1}^{i}S_{p}  + (l-1) | \prod\limits_{p=1}^{i}S_{p} + 1 \label{eq_inv} \\
& = l \notag .
\end{align}
Here \eqref{eq_inv} follows that $(l-1) | \prod_{p=1}^{i-1}S_{p} = ((l-1) | \prod_{p=1}^{i}S_{p})|\prod_{p=1}^{i-1}S_{p}$.
In conclusion, Lemma \ref{lemma:inverse} holds.

\subsection{Proof of Lemma \ref{lemma:const}}
\label{properties2}
We now prove that, for $i, i' \in \mathcal{K}$ where $i \neq i'$,
\begin{align}
f_{2}^{[i]}(g^{[i']}(j,k)) = 
\begin{cases}
((j-1)|\prod\limits_{p=1}^{i'-1}S_{p} ) \setminus \prod\limits_{p=1}^{i-1}S_{p} + 1 & \mbox{if } i < i', \\
((j-1)\setminus\prod\limits_{p=1}^{i'-1}S_{p})|\prod\limits_{p=i'+1}^{i}S_{p})\setminus \prod\limits_{p=i'+1}^{i-1}S_{p} + 1 & \mbox{if } i > i'.
\end{cases}
\label{eq:const_proof}
\end{align}
From the definition of $f^{[i]}$ and $g^{[i']}$, \eqref{eq:const_proof} holds trivially for $i < i'$. Hence, assume $i > i'$ in the rest of this section.

For easy representation of the proof, for $i > i'$, denote, 
\begin{align}
& a_0 = (j-1)\setminus\prod_{p=1}^{i'-1}S_{p} \notag \\ 
& a_1 = a_0 \setminus \prod_{p=i'+1}^{i}S_p, \ \  b_1 = a_0 | \prod_{p=i'+1}^{i}S_p \notag \\
& a_2 = b_1 \setminus \prod_{p=i'+1}^{i-1}S_p, \ \ b_2  = b_1 | \prod_{p=i'+1}^{i-1}S_p \notag
\end{align}
From the definition of $g^{[i']}$, the following relation holds for $j \in \mathcal{A}$, $k \in \mathcal{B}$, and $i>i'$:
\begin{align}
g^{[i']}(j,k) - 1 & = a_0 \prod_{p=1}^{i'}S_{p} +  c  \notag \\
									& = a_1 \prod_{p=1}^{i}S_{p} +  b_1\prod_{p=1}^{i'}S_{p} +  c \label{eq:const_proof2}
\end{align}
where $c = (j-1)|\prod_{p = 1}^{i'-1}S_{p} + (k-1)\prod_{p=1}^{i'-1}S_{p}$ and the second equality follows since $a_0 = a_1 \prod_{p=i'+1}^{i}S_p + b_1$.
From the fact that $b_1 \leq \prod_{p = i' +1}^{i}S_p - 1$ and $c < \prod_{p = 1}^{i'}S_p$, one can see that $b_1\prod_{p=1}^{i'}S_{p} +  c < \prod_{p=1}^{i}S_{p}$, which results from \eqref{eq:const_proof2} that
\begin{align}
(g^{[i']}(j,k) - 1)|\prod_{p = 1}^{i}S_p & = b_1\prod_{p=1}^{i'}S_{p} +  c \notag \\
																				 & = a_2\prod_{p=1}^{i-1}S_{p} + b_2\prod_{p=1}^{i'}S_{p} +  c \label{eq:const_proof3}
\end{align}
where the second equality follows that $b_1 = a_2 \prod_{p=i'+1}^{i-1}S_p + b_2$.
Since $b_2 \leq \prod_{p = i' +1}^{i-1}S_p - 1$ and $c < \prod_{p = 1}^{i'}S_p$, one can see that $b_2\prod_{p=1}^{i'}S_{p} +  c < \prod_{p=1}^{i-1}S_{p}$, which results from \eqref{eq:const_proof3} that
\begin{align}
f_{2}^{[i]}(g^{[i']}(j,k)) & = ((g^{[i']}(j,k) - 1)|\prod_{p = 1}^{i}S_p) \setminus \prod_{p=1}^{i-1}S_p + 1 \notag \\
													 & = a_2 + 1, \notag 
\end{align}
which completes the proof of Lemma \ref{lemma:const}.

\section{Blind IA for a Two-User Example} \label{appendix:example}
For better understanding of the proposed blind IA stated in Section \ref{achievability_thm1}, we provide a two-user example here. 
Consider the two-user MIMO BC with reconfigurable antennas defined in Section \ref{sec:system_model} where $M = N_{1} = N_{2} = 3$, $L_{1} = 1$, and $L_{2} = 2$.
From \eqref{def:ach_parameter}, $T_{1}=T_{2}=3$, $S_{2}=W_{1}=1$, and $S_{2}=U_{1}=U_{2}=U=W_{2}=W = 2$.

\subsubsection{Transmit beamforming design} In Step 1, user 1 needs two information vectors ($U_{1}W_{1}=2$) of which size is three ($T_{1}L_{1} = 3$)
and user 2 needs four information vectors ($U_{2}W_{2}=4$) of which size is six ($T_{2}L_{2} = 6$).
Let us denote the information vectors of user 1 as $\mathbf{s}_{1}^{[1]}$ and $\mathbf{s}_{2}^{[1]} \in \mathbb{C}^{3}$ 
and denote the information vectors of user 2 as $\mathbf{s}_{1}^{[2]}$, $\mathbf{s}_{2}^{[2]}$, $\mathbf{s}_{3}^{[2]}$, and $\mathbf{s}_{4}^{[2]} \in \mathbb{C}^{6}$. 
Then, from \eqref{eq:align_block}, the alignment block of user 1 $\mathbf{v}_{j}^{[1]}$ for $j=1,2$ is given by
\begin{align}
\mathbf{v}_{j}^{[1]}
=
\left[
\begin{array}{c}
\mathbf{v}_{j,1}^{[1]} \\ \hline
\mathbf{v}_{j,2}^{[1]} \\ \hline
\mathbf{v}_{j,3}^{[1]}
\end{array}
\right]
= 
\left[
\begin{array}{c}
(\mathbf{I}_{1} \otimes \mathbf{I}_{3} ) \mathbf{s}_{j}^{[1]} \\ \hline
(\mathbf{I}_{1} \otimes \mathbf{I}_{3} ) \mathbf{s}_{j}^{[1]} \\ \hline
(\mathbf{I}_{1} \otimes \mathbf{I}_{3} ) \mathbf{s}_{j}^{[1]}
\end{array}
\right]
=
\left[
\begin{array}{c}
\mathbf{I}_{3} \\ \hline
\mathbf{I}_{3} \\ \hline
\mathbf{I}_{3}
\end{array}
\right] \mathbf{s}_{j}^{[1]} \in \mathbb{C}^{9} \notag, \ j = 1, 2
\end{align}
and from \eqref{eq:align_block}, the alignment block of user 2 $\mathbf{v}_{j}^{[2]}$ for $j=1,2,3,4$ are given by
\begin{align}
&
\mathbf{v}_{j}^{[2]} 
=
\left[
\begin{array}{c}
\mathbf{v}_{j,1}^{[1]} \\ \hline
\mathbf{v}_{j,2}^{[1]} \\
\end{array}
\right]
= 
\left[
\begin{array}{c}
(\boldsymbol{\Phi} \otimes \mathbf{I}_{3}) \mathbf{s}_{j}^{[2]} \\ \hline
(\mathbf{I}_{2} \otimes \mathbf{I}_{3}) \mathbf{s}_{j}^{[2]} \\
\end{array}
\right]
=
\left[
\begin{array}{c}
\boldsymbol{\Phi} \otimes \mathbf{I}_{3} \\ \hline 
\mathbf{I}_{6} \\
\end{array}
\right] \mathbf{s}_{j}^{[2]} \in \mathbb{C}^{9}, \notag \ j = 1, 2, 3, 4
\end{align}
where $\boldsymbol{\Phi} = [\phi_{11} \ \phi_{12}] \in \mathbb{C}^{1 \times 2}$ is a random matrix of which entries are i.i.d. continuous random variables.

In Step 2, we construct one $(W_{1} = 1)$ alignment unit of user 1 and two $(W_{2} = 2)$ alignment units of user 2.
Then, from \eqref{eq:align_unit},  alignment unit of user 1 $\mathbf{u}_{1}^{[1]}$ is given by
\begin{align}
\mathbf{u}_{1}^{[1]}
= \left[
\begin{array}{c}
\mathbf{u}_{1,1}^{[1]} \\ \hline
\mathbf{u}_{1,2}^{[1]} \\ \hline
\mathbf{u}_{1,3}^{[1]} \\
\end{array}
\right ]
= \left[
\begin{array}{c}
\mathbf{v}_{1,1}^{[1]} \\
\mathbf{v}_{2,2}^{[1]} \\ \hline
\mathbf{v}_{1,2}^{[1]} \\
\mathbf{v}_{2,1}^{[1]} \\ \hline
\mathbf{v}_{1,3}^{[1]} \\
\mathbf{v}_{2,3}^{[1]} \\
\end{array}
\right ]
= 
\left[
\begin{array}{c}
\mathbf{I}_{6} \\ \hline
\mathbf{I}_{6} \\ \hline
\mathbf{I}_{6} \\
\end{array}
\right ]
\left [
\begin{array}{c}
\mathbf{s}_{1}^{[1]} \\
\mathbf{s}_{2}^{[1]}
\end{array}
\right ] \in \mathbb{C}^{18} \notag
\end{align}
and the alignment unit of user 2 $\mathbf{u}_{j}^{[2]}$ for $j=1,2$ is given by
\begin{align}
\mathbf{u}_{j}^{[2]}
=
\left[
\begin{array}{c}
\mathbf{u}_{j,1}^{[2]} \\ \hline
\mathbf{u}_{j,2}^{[2]} \\
\end{array}
\right ]
=
\left[
\begin{array}{c}
\mathbf{v}_{2j-1,1}^{[2]} \\
\mathbf{v}_{2j,1}^{[2]} \\ \hline
\mathbf{v}_{2j-1,2}^{[2]} \\
\mathbf{v}_{2j,2}^{[2]} \\
\end{array}
\right ]
= \left[
\begin{array}{cc}
\boldsymbol{\Phi} \otimes \mathbf{I}_{3} & \mathbf{0}_{3 \times 6} \\
\mathbf{0}_{3 \times 6} & \boldsymbol{\Phi} \otimes \mathbf{I}_{3} \\ \hline
\mathbf{I}_{6} & \mathbf{0}_{6}\\
\mathbf{0}_{6} & \mathbf{I}_{6}\\
\end{array}
\right ]
\left [
\begin{array}{c}
\mathbf{s}_{2j-1}^{[2]} \\ 
\mathbf{s}_{2j}^{[2]}
\end{array}
\right ] \in \mathbb{C}^{18}. \notag
\end{align}

In Step 3, we construct the transmit signal vector for each user.
For this case, $f^{[i]}$ for $i=1,2$ is given by
\begin{align}
\left [ f^{[1]}(1) \ f^{[1]}(2)  \right ] = \left [ (1,1) \ (1,2) \right ], \
\left [ f^{[2]}(1) \ f^{[2]}(2)  \right ] = \left [ (1,1) \ (2,1) \right ] \notag
\end{align}
Then, from \eqref{eq_x1x3}, we have 
\begin{align}
& \mathbf{x}_{1,1} =
\left [ (\mathbf{u}_{1,1}^{[1]})^{T} \ (\mathbf{u}_{1,2}^{[1]})^{T} \right ]^{T}
, \ 
\mathbf{x}_{1,2} = \mathbf{u}_{1,3}^{[1]},
\notag
\\
& \mathbf{x}_{2,1}
=
\left [ (\mathbf{u}_{1,1}^{[2]})^{T} \ (\mathbf{u}_{2,1}^{[2]})^{T} \right ]^{T}
, \ 
\mathbf{x}_{2,2} = 
\left [ (\mathbf{u}_{1,2}^{[2]})^{T} \ (\mathbf{u}_{2,2}^{[2]})^{T} \right ]^{T}.
 \notag
\end{align}
Subsequently, from \eqref{eq_x1x2}, the transmit signal vector for user 1 is given by
\begin{align}
\mathbf{x}_{1}
=\left [
\begin{array}{c}
\mathbf{x}_{1,1} \\ \hline
\mathbf{x}_{1,2} \\ \hline
\mathbf{0}_{24 \times 1} \\
\end{array}
\right]
= 
\left [
\begin{array}{c}
\mathbf{u}_{1,1}^{[1]} \\ 
\mathbf{u}_{1,2}^{[1]} \\ \hline
\mathbf{u}_{1,3}^{[1]} \\ \hline
\mathbf{0}_{24 \times 6} \\
\end{array}
\right ]
=\left [
\begin{array}{c}
\mathbf{I}_{6} \\
\mathbf{I}_{6} \\ \hline
\mathbf{I}_{6} \\ \hline
\mathbf{0}_{24 \times 6} \\
\end{array}
\right]
\left [
\begin{array}{c}
\mathbf{s}_{1}^{[1]} \\
\mathbf{s}_{2}^{[1]}
\end{array}
\right ] \in \mathbb{C}^{42} \notag
\end{align}
and the transmit signal vector for user 2 is given by
\begin{align}
\mathbf{x}_{2}
=\left [
\begin{array}{c}
\mathbf{x}_{2,1} \\ \hline
\mathbf{0}_{6 \times 1} \\ \hline
\mathbf{x}_{2,2} \\
\end{array}
\right]
= 
\left [
\begin{array}{c}
\mathbf{u}_{1,1}^{[2]} \\ 
\mathbf{u}_{2,1}^{[2]} \\ \hline
\mathbf{0}_{6 \times 1} \\ \hline
\mathbf{u}_{1,2}^{[2]} \\ 
\mathbf{u}_{2,2}^{[2]} \\
\end{array}
\right ]
 \notag =
\left [
\begin{array}{cccc}
\boldsymbol{\Phi} \otimes \mathbf{I}_{3} & \mathbf{0}_{3 \times 6} & \mathbf{0}_{3 \times 6} & \mathbf{0}_{3 \times 6}\\
\mathbf{0}_{3 \times 6} & \boldsymbol{\Phi} \otimes \mathbf{I}_{3} & \mathbf{0}_{3 \times 6} & \mathbf{0}_{3 \times 6}\\
\mathbf{0}_{3 \times 6} & \mathbf{0}_{3 \times 6}  & \boldsymbol{\Phi} \otimes \mathbf{I}_{3} & \mathbf{0}_{3 \times 6} \\
\mathbf{0}_{3 \times 6} & \mathbf{0}_{3 \times 6} & \mathbf{0}_{3 \times 6} & \boldsymbol{\Phi} \otimes \mathbf{I}_{3} \\ \hline
\mathbf{0}_{6} & \mathbf{0}_{6} & \mathbf{0}_{6} & \mathbf{0}_{6}
\\ \hline
\mathbf{I}_{6} & \mathbf{0}_{6} & \mathbf{0}_{6} & \mathbf{0}_{6}\\ 
\mathbf{0}_{6} & \mathbf{I}_{6} & \mathbf{0}_{6} & \mathbf{0}_{6}\\
\mathbf{0}_{6} & \mathbf{0}_{6} & \mathbf{I}_{6} & \mathbf{0}_{6}\\
\mathbf{0}_{6} & \mathbf{0}_{6} & \mathbf{0}_{6} & \mathbf{I}_{6}
\end{array}
\right ]
\left [
\begin{array}{c}
\mathbf{s}_{1}^{[2]} \\
\mathbf{s}_{2}^{[2]} \\
\mathbf{s}_{3}^{[2]} \\
\mathbf{s}_{4}^{[2]}
\end{array}
\right ] \in \mathbb{C}^{42}. \notag
\end{align}
In Step 4, the overall transmit signal vector is given by
\begin{align}
\mathbf{x}^{n} = \mathbf{x}_{1} + \mathbf{x}_{2} \notag
\end{align}
where $n = 14$. 

\subsubsection{Mode switching patterns at receivers}
From \eqref{eq:time_block1}, the time interval for transmitting block 1 is given by $1 \leq t \leq 4$.
For this case, user 1 has two selection patterns and user 2 has one selection pattern, in which the associated channel matrices are given by
\begin{align}
& \mathbf{H}_{1,j} = \mathbf{h}_{1,j} \in \mathbb{C}^{1 \times 3}, \ j = 1, 2, \label{eq:example_pattern} \\
& \mathbf{H}_{2,1} =
\left [\mathbf{h}_{2,1}^{T} \ \mathbf{h}_{2,2}^{T} \right ]^{T} \in \mathbb{C}^{2 \times 3} \notag
\end{align}
respectively.
As explained in Section \ref{achievability_thm1}, when block 1 is transmitted, each user chooses the selection pattern of which index is the same as that of the currently transmitted sub-unit of his transmit signal vector.
Since the indexes of the transmitted sub-unit of users 1 and 2 are 1, 1, 2, 2 and 1, 1, 1, 1 respectively for $1 \leq t \leq 4$, from \eqref{eq:y_i0},
the received signal vectors of users 1 and 2 during $1 \leq t \leq 4$ are given by
\begin{align} 
& \mathbf{y}_{1,0} = 
\left [
\begin{array}{c}
\begin{array}{cc}
\mathbf{H}_{1,1} & \mathbf{0}_{1 \times 3} \\
\mathbf{0}_{1 \times 3} & \mathbf{H}_{1,1} \\
\mathbf{H}_{1,2} & \mathbf{0}_{1 \times 3} \\
\mathbf{0}_{1 \times 3} & \mathbf{H}_{1,2} \\
\end{array}
\end{array}
\right ]
\left [
\begin{array}{c}
\mathbf{s}_{1}^{[1]} \\
\mathbf{s}_{2}^{[1]}
\end{array}
\right ] 
+
\left [
\begin{array}{cccc}
\boldsymbol{\Phi} \otimes \mathbf{H}_{1,1} & \mathbf{0}_{1 \times 6} & \mathbf{0}_{1 \times 6} & \mathbf{0}_{1 \times 6} \\
\mathbf{0}_{1 \times 6} & \boldsymbol{\Phi} \otimes \mathbf{H}_{1,1} & \mathbf{0}_{1 \times 6} & \mathbf{0}_{1 \times 6} \\
\mathbf{0}_{1 \times 6} & \mathbf{0}_{1 \times 6} & \boldsymbol{\Phi} \otimes \mathbf{H}_{1,2} & \mathbf{0}_{1 \times 6} \\
\mathbf{0}_{1 \times 6} & \mathbf{0}_{1 \times 6} & \mathbf{0}_{1 \times 6} & \boldsymbol{\Phi} \otimes \mathbf{H}_{1,2} \\
\end{array}
\right ]
\left [
\begin{array}{c}
\mathbf{s}_{1}^{[2]} \\
\mathbf{s}_{2}^{[2]} \\
\mathbf{s}_{3}^{[2]} \\
\mathbf{s}_{4}^{[2]}
\end{array}
\right ] , \label{eq:example_y10}
\\ &
\mathbf{y}_{2,0} = 
\left [
\begin{array}{c}
\begin{array}{cc}
\mathbf{H}_{2,1} & \mathbf{0}_{2 \times 3} \\
\mathbf{0}_{2 \times 3} & \mathbf{H}_{2,1} \\
\mathbf{H}_{2,1} & \mathbf{0}_{2 \times 3} \\
\mathbf{0}_{2 \times 3} & \mathbf{H}_{2,1} \\
\end{array}
\end{array}
\right ]
\left [
\begin{array}{c}
\mathbf{s}_{1}^{[1]} \\
\mathbf{s}_{2}^{[1]}
\end{array}
\right ] 
+
\left [
\begin{array}{cccc}
\boldsymbol{\Phi} \otimes \mathbf{H}_{2,1} & \mathbf{0}_{2 \times 6} & \mathbf{0}_{2 \times 6} & \mathbf{0}_{2 \times 6} \\
\mathbf{0}_{2 \times 6} & \boldsymbol{\Phi} \otimes \mathbf{H}_{2,1} & \mathbf{0}_{2 \times 6} & \mathbf{0}_{2 \times 6} \\
\mathbf{0}_{2 \times 6} & \mathbf{0}_{2 \times 6} & \boldsymbol{\Phi} \otimes \mathbf{H}_{2,1} & \mathbf{0}_{2 \times 6} \\
\mathbf{0}_{2 \times 6} & \mathbf{0}_{2 \times 6} & \mathbf{0}_{2 \times 6} & \boldsymbol{\Phi} \otimes \mathbf{H}_{2,1} \\
\end{array}
\right ]
\left [
\begin{array}{c}
\mathbf{s}_{1}^{[2]} \\
\mathbf{s}_{2}^{[2]} \\
\mathbf{s}_{3}^{[2]} \\
\mathbf{s}_{4}^{[2]}
\end{array}
\right ] \label{eq:example_y20}
\end{align}
respectively.

The time interval for transmitting block 2 is given by $5 \leq t \leq 14$. 
First, we consider the time interval for transmitting $\mathbf{x}_{1,2}$, given by $5 \leq t \leq 6$.
Note that this part corresponds to the desired signal part of block 2 for user 1 and the interference signal part of block 2 for  user 2.
Since $T_{1}|L_{1} = 0$, user 1 exploits one selection pattern repeatedly over the time interval $5 \leq t \leq 6$, in which the associated channel matrix is given by
\begin{align}
\mathbf{H}_{1,3} = \mathbf{h}_{1,3} \in \mathbb{C}^{1 \times 3} \notag .
\end{align}
On the other hands, user 2 exploits the same selection pattern used for receiving during block 1 over the time interval $5 \leq t \leq 6$, in which there is one selection pattern associated with $\mathbf{H}_{2,1}$ for this case.
Then, user 2 receives the transmit signal for $5 \leq t \leq 6$ using the selection pattern associated with $\mathbf{H}_{2,1}$. 
As a result, from \eqref{eq_rs3_desired} and \eqref{eq_rs3}, the received signal vectors of users 1 and 2 during $5 \leq t \leq 6$ are given by
\begin{align} 
\mathbf{y}_{1,1} = 
\left [
\begin{array}{c}
\begin{array}{cc}
\mathbf{H}_{1,3} & \mathbf{0}_{1 \times 3} \\
\mathbf{0}_{1 \times 3} & \mathbf{H}_{1,3} \\
\end{array}
\end{array}
\right ]
\left [
\begin{array}{c}
\mathbf{s}_{1}^{[1]} \\
\mathbf{s}_{2}^{[1]}
\end{array}
\right ], 
\
\mathbf{y}_{2,1} = 
\left [
\begin{array}{c}
\begin{array}{cc}
\mathbf{H}_{2,1} & \mathbf{0}_{2 \times 3} \\
\mathbf{0}_{2 \times 3} & \mathbf{H}_{2,1} \\
\end{array}
\end{array}
\right ]
\left [
\begin{array}{c}
\mathbf{s}_{1}^{[1]} \\
\mathbf{s}_{2}^{[1]}
\end{array}
\right ] \label{eq:example_x11}
\end{align}
respectively.
Next, we consider the time interval for transmitting $\mathbf{x}_{2,2}$, given by $7 \leq t \leq 14$.
This part corresponds to the desired signal part of block 2 for user 2 and the interference signal part of block 2 for user 1.
Since $T_{2}|L_{2} \neq 0$, user 2 exploits two ($L_{2}S_{2} = 2)$ selection patterns, which repeat four ($S_{1}(T_{1}-L_{1}) = 4$) times periodically over the time interval $7 \leq t \leq 14$. The associated channel matrices are given by
\begin{align}
\mathbf{H}_{1,2,1} = \left [ \mathbf{h}_{2,3}^{T} \ \mathbf{h}_{2,1}^{T} \right ]^{T} \in \mathbb{C}^{2 \times 3}, \ 
\mathbf{H}_{1,2,2} = \left [ \mathbf{h}_{2,3}^{T} \ \mathbf{h}_{2,2}^{T} \right ]^{T} \in \mathbb{C}^{2 \times 3}. \notag
\end{align}
On the other hands,  user 1 exploits the same selection pattern used for receiving block 1 over the time interval $7 \leq t \leq 14$, 
in which the associated channel matrices are given in \eqref{eq:example_pattern}.
When the interference signal part of block 2 is transmitted, user 2 chooses the selection pattern of which index is the same as that used to receive the first sub-unit  of the alignment unit to which the currently transmitted sub-unit belongs. 
One can see that the sub-units transmitted for $7 \leq t \leq 10$ and $11 \leq t \leq 14$ is $\mathbf{u}_{1,3}^{[2]}$ and $\mathbf{u}_{2,3}^{[2]}$ respectively 
and user 1 exploits the selection pattern associated with $\mathbf{H}_{1,1}$ to receive $\mathbf{u}_{1,1}^{[2]}$
and the selection pattern associated with $\mathbf{H}_{1,2}$ to receive $\mathbf{u}_{2,1}^{[2]}$ in block 1.
Hence,  user 1 exploits the selection pattern associated with $\mathbf{H}_{1,1}$ for $7 \leq t \leq 10$ and the selection pattern associated with $\mathbf{H}_{1,2}$ for $11 \leq t \leq 14$. As a result, from \eqref{eq_rs3_desired} and \eqref{eq_rs3}, the received signal vectors of user 1 and 2 during $7 \leq t \leq 14$ are given by
\begin{align}
&
\mathbf{y}_{1,2} = 
\left [
\begin{array}{cccccccc}
\mathbf{I}_{2} \otimes \mathbf{H}_{1,1} & \mathbf{0}_{2 \times 6} & \mathbf{0}_{2 \times 6} & \mathbf{0}_{2 \times 6}\\
\mathbf{0}_{2 \times 6} & \mathbf{I}_{2} \otimes \mathbf{H}_{1,1} & \mathbf{0}_{2 \times 6} & \mathbf{0}_{2 \times 6}\\
\mathbf{0}_{2 \times 6} & \mathbf{0}_{2 \times 6} & \mathbf{I}_{2} \otimes \mathbf{H}_{1,2} & \mathbf{0}_{2 \times 6}\\
\mathbf{0}_{2 \times 6} & \mathbf{0}_{2 \times 6} & \mathbf{0}_{2 \times 6} & \mathbf{I}_{2} \otimes \mathbf{H}_{1,2}\\
\end{array}
\right ]
\left [
\begin{array}{c}
\mathbf{s}_{1}^{[2]} \\
\mathbf{s}_{2}^{[2]} \\
\mathbf{s}_{3}^{[2]} \\
\mathbf{s}_{4}^{[2]}
\end{array}
\right ], \label{eq:example_y12}
\\ &
\mathbf{y}_{2,2} = 
\left [
\begin{array}{cccccccc}
\mathbf{H}_{2,2}' & \mathbf{0}_{4 \times 6} & \mathbf{0}_{4 \times 6} & \mathbf{0}_{4 \times 6} \\
\mathbf{0}_{4 \times 6} & \mathbf{H}_{2,2}' & \mathbf{0}_{4 \times 6} & \mathbf{0}_{4 \times 6} \\
\mathbf{0}_{4 \times 6} & \mathbf{0}_{4 \times 6} & \mathbf{H}_{2,2}' & \mathbf{0}_{4 \times 6} \\
\mathbf{0}_{4 \times 6} & \mathbf{0}_{4 \times 6} & \mathbf{0}_{4 \times 6} & \mathbf{H}_{2,2}' \\
\end{array}
\right ]
\left [
\begin{array}{c}
\mathbf{s}_{1}^{[2]} \\
\mathbf{s}_{2}^{[2]} \\
\mathbf{s}_{3}^{[2]} \\
\mathbf{s}_{4}^{[2]}
\end{array}
\right ] \label{eq:example_y22}
\end{align}
respectively, where $\mathbf{H}_{2,2}' = \operatorname{diag}(\mathbf{H}_{2,2,1},\mathbf{H}_{2,2,2}) \in \mathbb{C}^{4 \times 6}$.

\subsubsection{Interference cancellation and achievable LDoF}
From \eqref{eq_result}, after cancelling all interference vectors in $\mathbf{y}_{1,0}$, user 1 has
\begin{align}
\left [
\begin{array}{c}
\mathbf{y}_{1,0} - (\mathbf{I}_{4} \otimes \boldsymbol{\Phi}) \mathbf{y}_{1,2} \\ \hline
\mathbf{y}_{1,1} \\
\end{array}
\right ]
=
\left [
\begin{array}{cc}
\mathbf{H}_{1,1} & \mathbf{0}_{1 \times 3} \\
\mathbf{0}_{1 \times 3} & \mathbf{H}_{1,1} \\
\mathbf{H}_{1,2} & \mathbf{0}_{1 \times 3} \\
\mathbf{0}_{1 \times 3} & \mathbf{H}_{1,2} \\ \hline
\mathbf{H}_{1,3} & \mathbf{0}_{1 \times 3} \\
\mathbf{0}_{1 \times 3} & \mathbf{H}_{1,3} \\
\end{array}
\right ]
\left [
\begin{array}{c}
\mathbf{s}_{1}^{[1]} \\
\mathbf{s}_{2}^{[1]}
\end{array}
\right ] \label{eq_example1}
\end{align}
Sorting the rows in \eqref{eq_example1}, we have
\begin{align}
\left [
\begin{array}{c}
\begin{array}{cc}
\mathbf{H}_{1} & \mathbf{0}_{ 3} \\
\mathbf{0}_{3} & \mathbf{H}_{1} \\
\end{array}
\end{array}
\right ]
\left [
\begin{array}{c}
\mathbf{s}_{1}^{[1]} \\
\mathbf{s}_{2}^{[1]}
\end{array}
\right ] \label{eq_example2}
\end{align}
Obviously,  user 1 can obtain $\mathbf{s}_{1}^{[1]}$ and $\mathbf{s}_{2}^{[1]}$ from \eqref{eq_example2} almost surely.

Similarly, from \eqref{eq_result}, after cancelling all interference vectors in $\mathbf{y}_{2,0}$, user 2 has
\begin{align}
\left [
\begin{array}{c}
\mathbf{y}_{2,0} - \mathbf{1}_{2 \times 1} \otimes \mathbf{y}_{2,1} \\ \hline
\mathbf{y}_{2,2} \\
\end{array}
\right ]
=
\left [
\begin{array}{cccccccc}
\boldsymbol{\Phi} \otimes \mathbf{H}_{2,1} & \mathbf{0}_{2 \times 6} & \mathbf{0}_{2 \times 6} & \mathbf{0}_{2 \times 6} \\
\mathbf{0}_{2 \times 6} & \boldsymbol{\Phi} \otimes \mathbf{H}_{2,1} & \mathbf{0}_{2 \times 6} & \mathbf{0}_{2 \times 6} \\
\mathbf{0}_{2 \times 6} & \mathbf{0}_{2 \times 6} & \boldsymbol{\Phi} \otimes \mathbf{H}_{2,1} & \mathbf{0}_{2 \times 6} \\
\mathbf{0}_{2 \times 6} & \mathbf{0}_{2 \times 6} & \mathbf{0}_{2 \times 6} & \boldsymbol{\Phi} \otimes \mathbf{H}_{2,1} \\ \hline
\mathbf{H}_{2,2}' & \mathbf{0}_{4 \times 6} & \mathbf{0}_{4 \times 6} & \mathbf{0}_{4 \times 6} \\
\mathbf{0}_{4 \times 6} & \mathbf{H}_{2,2}' & \mathbf{0}_{4 \times 6} & \mathbf{0}_{4 \times 6} \\
\mathbf{0}_{4 \times 6} & \mathbf{0}_{4 \times 6} & \mathbf{H}_{2,2}' & \mathbf{0}_{4 \times 6} \\
\mathbf{0}_{4 \times 6} & \mathbf{0}_{4 \times 6} & \mathbf{0}_{4 \times 6} & \mathbf{H}_{2,2}' \\
\end{array}
\right ]
\left [
\begin{array}{c}
\mathbf{s}_{1}^{[2]} \\
\mathbf{s}_{2}^{[2]} \\
\mathbf{s}_{3}^{[2]} \\
\mathbf{s}_{4}^{[2]}
\end{array}
\right ] \label{eq_example3}
\end{align}
Then, \eqref{eq_example3} can be decomposed into four segments as in the following.
\begin{align}
\left [
\begin{array}{cc}
\boldsymbol{\Phi} \otimes \mathbf{H}_{2,1} \\
\mathbf{H}_{2,2}'\\
\end{array}
\right ]
\mathbf{s}_{i}^{[2]}
=
\left [
\begin{array}{cc}
\phi_{11} \mathbf{h}_{2,1} & \phi_{12} \mathbf{h}_{2,1} \\
\phi_{11} \mathbf{h}_{2,2} & \phi_{12} \mathbf{h}_{2,2} \\
\mathbf{h}_{2,3} & \mathbf{0}_{1 \times 3} \\
\mathbf{h}_{2,1} & \mathbf{0}_{1 \times 3} \\
\mathbf{0}_{1 \times 3} & \mathbf{h}_{2,3}  \\
\mathbf{0}_{1 \times 3} & \mathbf{h}_{2,2}  \\
\end{array}
\right ]
\mathbf{s}_{i}^{[2]}, & \ \ \  i = 1, 2, 3, 4 \label{eq_example4}
\end{align}
It can be easily shown that user 2 can obtain $\mathbf{s}_{i}^{[2]}$ for all $i$ from \eqref{eq_example4} almost surely.

As a result, the transmitter delivers $6$ information symbols to user 1 and $24$ information symbols to user 2 during $14$ time slots.
Consequently, the achievable sum LDoF is given by $\frac{15}{7}$.

\end{document}